\newif\iffullversion
\newif\iflncsversion 
\providecommand{\inst}[1]{}
\renewcommand{\inst}[1]{{}$^{#1}$}
 \providecommand{\myinstitute}[1]{\institute{\renewcommand{\inst}[1]{\!\!}#1}}
\theoremstyle{plain}
\newtheorem{theorem}{Theorem}[section]
\newtheorem{lemma}[theorem]{Lemma}
\newtheorem{corollary}[theorem]{Corollary}
\newtheorem{definition}[theorem]{Definition}
\newtheorem{assumption}{Assumption}
\theoremstyle{definition}
\providecommand{\myinstitute}[1]{\date{\small\renewcommand{\and}{\\}#1}}
\numberwithin{equation}{section}
\newcommand{\algfont}[1]{\ensuremath{\mathsf{#1}}}
\newcommand{\schemefont}[1]{\ensuremath{\textsc{#1}}} 
\newcommand{\advfont}[1]{\ensuremath{\mathcal{#1}}}
\newcommand{\bra}[1]{\ensuremath{\left\langle#1\right|}}
\newcommand{\ket}[1]{\ensuremath{\left|#1\right\rangle}}
\newcommand{\hilbert}{\ensuremath{\mathcal{H}}}
\newcommand{\qadv}{\ensuremath{\advfont{A}_\text{Q}}}
\newcommand{\bin}{\ensuremath{\{0,1\}}}
\newcommand{\identity}{\mathbb{I}}
\newcommand{\parag}[1]{\paragraph*{#1}}
\newcommand{\xor}{\oplus}
\newcommand{\Angle}[1]{\ensuremath{\left\langle #1\right\rangle}}
\newcommand{\adv}{\ensuremath{\mathcal{A}}}
\newcommand{\prov}{\ensuremath{\mathcal{P}}}
\newcommand{\veri}{\ensuremath{\mathcal{V}}}
\newcommand{\exec}{\ensuremath{\leftarrow}}
\newcommand{\pk}{\ensuremath{\textit{\pk}}}
\newcommand{\sk}{\ensuremath{\textit{sk}}}
\newcommand{\ident}{\ensuremath{\schemefont{IS}}}
\newcommand{\ikgen}{\ident.\kgen}
\newcommand{\hash}{\ensuremath{\schemefont{H}}}
\newcommand{\hkgen}{\hash.\kgen}
\newcommand{\heval}{\hash.\textsf{Eval}}
\newcommand{\binrep}[1]{\ensuremath{\Angle{#1}}}
\newcommand{\ccount}{\ensuremath{\mathsf{collCount}}}
\newcommand{\coll}{\ensuremath{\mathsf{Coll}}}
\newcommand{\kgen}{\ensuremath{\algfont{KGen}}}
\newcommand{\prf}{\ensuremath{\algfont{PRF}}}
\newcommand{\PRF}{\prf}
\newcommand{\myGame}{\ensuremath{\text{Game}}}
\newcommand{\Adv}{\ensuremath{\text{Adv}}}
\newcommand{\Inv}{\ensuremath{\text{Inv}}}
\newcommand{\INV}{\ensuremath{\text{INV}}}
\newcommand{\Col}{\ensuremath{\text{Col}}}
\newcommand{\START}{\ensuremath{\textsc{START}}}
\newcommand{\RAND}{\ensuremath{\textsc{RAND}}}
\newcommand{\FINISH}{\ensuremath{\textsc{FINISH}}}
\newcommand{\SIGN}{\ensuremath{\textsc{SIGN}}}
\newcommand{\INSTANCE}{\ensuremath{\textsc{INSTANCE}}}
\newcommand{\deq}{\mathrel{:=}}
\newcommand{\parens}[1]{{\left(#1\right)}}
\newcommand{\brackets}[1]{{\left[#1\right]}}
\newcommand{\magn}[1]{{\left|#1\right|}}
\newcommand{\eq}[1]{\begin{equation*}#1\end{equation*}}
\def\E{\mathop{\mathbb E}}
\def\pk{{\sf pk}}
\def\sk{{\sf sk}}
\newcommand{\abs}[1]{\left| #1 \right|}
\newcommand{\ignore}[1]{}
\newcommand{\comment}[1]{}
\newenvironment{proofof}[1]{\medskip
\noindent{\bf Proof of #1.}}{\qed\medskip}
\definecolor{emph}{rgb}{1,0.3,0}
\newenvironment{MyItemize}{
\begin{list}{$\bullet$}{
\usecounter{enumi}
\setlength{\topsep}{4pt} \setlength{\itemsep}{4pt} }} {
\end{list}
}
\newcommand{\os}{history-free}
\newcommand{\OS}{History-free}
\newcommand{\keywords}[1]{\\\par\noindent
{\small{\em Keywords\/}: #1}}
\begin{document}

\title{Random Oracles in a Quantum World}

\date{}
\author{Dan Boneh\inst{1} \and {\"O}zg{\"u}r Dagdelen\inst{2} \and Marc Fischlin\inst{2} \and
 \\Anja Lehmann\inst{3} \and Christian Schaffner\inst{4} \and Mark Zhandry\inst{1}}

\myinstitute{\inst{1}Stanford University, USA \and \inst{2}CASED \&
  Darmstadt University of Technology, Germany \and \inst{3}IBM
  Research Zurich, Switzerland\and \inst{4}University of Amsterdam and
  CWI, The Netherlands}
\maketitle

\begin{abstract}
The interest in post-quantum cryptography --- classical systems that remain secure in the presence of a quantum adversary --- has generated elegant proposals for new cryptosystems.   Some of these systems are set in the random oracle model and are proven secure relative to adversaries that have classical access to the random oracle.  We argue that to prove post-quantum security one needs to prove security in the {\em quantum-accessible} random oracle model where the adversary can query the random oracle with quantum states.

We begin by separating the classical and quantum-accessible random oracle models by presenting  a scheme that is secure when the adversary is given classical access to the random oracle, but is insecure when the adversary can make quantum oracle queries.
We then set out to develop generic conditions under which a {\em classical} random oracle proof implies security in the {\em quantum-accessible} random oracle model.   We introduce the
concept of a {\em \os\ reduction} which is a category of classical random oracle reductions that
basically determine oracle answers independently of the history of previous queries, and we prove
that such reductions imply security in the quantum model.
We then show that certain post-quantum proposals, including ones based on lattices, can be proven secure using \os\ reductions and are therefore post-quantum secure.  We conclude with a rich set of open problems in this area.
\keywords{Quantum, Random Oracle, Signatures, Encryption}
\end{abstract}


\section{Introduction}

The threat to existing public-key systems posed by quantum
computation~\cite{Shor} has generated considerable interest in {\em
  post-quantum} cryptosystems, namely systems that remain secure in
the presence of a quantum adversary.  A promising direction is
lattice-based cryptography, where the underlying problems are related
to finding short vectors in high dimensional lattices.  These problems
have so far remained immune to quantum attacks and some evidence
suggests that they may be hard for quantum computers~\cite{Regev02}.

As it is often the case, the most efficient constructions in
lattice-based cryptography are set in the random oracle (RO)
model~\cite{BR1}.  For example, Gentry, Peikert, and
Vaikuntanathan~\cite{Gentry2008} give elegant random oracle model
constructions for existentially unforgeable signatures and for
identity-based encryption. Gordon, Katz, and
Vaikuntanathan~\cite{Gordon2010} construct a random oracle model group
signature scheme. Boneh and Freeman~\cite{Boneh2011} give a random
oracle homomorphic signature scheme and Cayrel et al.~\cite{Cayrel2010}
give a lattice-based signature scheme using the Fiat-Shamir random oracle
heuristic.  Some of these lattice constructions can now be realized
without random oracles, but at a significant cost in
performance~\cite{CHKP10,ABB1,Boy10}.

\paragraph{\bf Modeling Random Oracles for Quantum Attackers.}
While quantum resistance is good motivation for lattice-based
constructions, most random oracle systems to date are only proven secure
relative to an adversary with {\em classical} access to the random
oracle. In this model the adversary is given oracle access to a
random hash function $O:\bin^*\to\bin^*$ and it can only ``learn'' a value $O(x)$ by
querying the oracle $O$ at the classical state~$x$.  However, to obtain a concrete system, the random oracle is eventually replaced by a concrete hash function thereby enabling a quantum attacker to evaluate this hash function on \emph{quantum states}.
To capture this issue in the model, we allow the adversary to evaluate the random oracle ``in superposition'', that is, the adversary
can submit quantum states $\ket{\varphi}=\sum \alpha_x\ket{x}$ to the oracle $O$ and receives back the evaluated state $\sum\alpha_x
\ket{O(x)}$ (appropriately encoded to make the transformation unitary). We call this the \emph{quantum(-accessible) random oracle model}. It complies with similar efforts from learning theory~\cite{BshJac99,SerGor04} and computational complexity~\cite{Bennett1997} where oracles are quantum-accessible, and from lower bounds for quantum collision finders~\cite{AaSh04}.
Still, since we are only interested in classical cryptosystems, \emph{honest} parties and the scheme's algorithms can access $O$ only via classical bit strings.


\smallskip
Proving security in the quantum-accessible RO model is considerably harder than in the classical model.   As a simple example, consider the case of digital signatures. A standard proof strategy in the classical settings is to choose randomly one of the adversary's RO queries and embed in the response a given instance of a challenge problem.   One then hopes that the adversary uses this response in his signature forgery.  If the adversary makes~$q$ random oracle queries, then this happens with probability $1/q$ and since~$q$ is polynomial this success probability is sufficiently high for the proof of security in the classical setting.   Unfortunately, this strategy fails completely in the quantum-accessible random oracle model since {\em every} random oracle query potentially evaluates the random oracle at exponentially many points.   Therefore, embedding the challenge in one response will be of no use to the reduction algorithm.    This simple example shows that proving security in the classical RO model does not necessarily prove post-quantum security.

More abstractly, the following common classical proof techniques are not known to carry over to the quantum settings offhand:
\begin{itemize}
\item {\normalfont Adaptive Programmability:} The classical random oracle model allows a simulator to program the answers of the random oracle for an adversary, often adaptively. 
Since the quantum adversary can query the random oracle  with a state in superposition, the adversary may get some information about all exponentially many values right at the beginning, thereby making it difficult to program the oracle adaptively.
\smallskip

\item {\normalfont Extractability/Preimage Awareness:} Another application of the random oracle model for classical adversaries is that the simulator learns the pre-images the adversary
  is interested in. This is, for example, crucial to simulate decryption queries in the security proof for OAEP~\cite{FuOkPoSt2001}. For quantum-accessible
  oracles the actual query may be hidden in a superposition of exponentially many states, and it is unclear how to extract the right query.

 \smallskip

\item {\normalfont Efficient Simulation:} In the classical world, we can simulate an exponential-size random oracle efficiently via lazy sampling: simply pick random but
consistent answers ``on the fly".  With quantum-accessible random oracles the adversary can evaluate the random oracle on all inputs simultaneously, making it harder to apply the on-demand strategy for classical oracles.
\smallskip

\item {\normalfont Rewinding/Partial Consistency:}  Certain random oracle proofs~\cite{PoiSte00} require rewinding the adversary, replaying some hash values but changing at least a single value. Beyond the usual problems of rewinding quantum adversaries, we again encounter the fact that we may not be able to change hash values unnoticed.
We note that some form of rewinding is possible for quantum zero-knowledge~\cite{Wat09}.
\smallskip

\end{itemize}

\noindent
We do not claim that these problems are insurmountable.  In fact, we show how to resolve the issue of efficient simulation by using (quantum-accessible) pseudorandom functions. These are pseudorandom functions where the quantum distinguisher can  submit quantum states to the pseudorandom or random oracle. By this technique, we can efficiently simulate the quantum-accessible random oracle through the (efficient) pseudorandom function.
While pseudorandom functions where the distinguisher may use quantum power but only gets classical access to the function can be derived from quantum-immune pseudorandom generators~\cite{GolGolMic86}, it is an open problem if the stronger quantum-accessible pseudorandom functions exist.

Note, too, that we do not seek to solve the problems related to the random oracle model which appear already in the classical settings~\cite{CanGolHal98}. Instead we show that for post-quantum security one should allow for quantum access to the random oracle in order to capture attacks that are available when the hash function is eventually instantiated.

\subsection{Our Contributions}

\paragraph{\bf Separation.}
We begin with a separation between the classical and quantum-accessible RO models by presenting a two-party protocol which is:
\begin{itemize}
 \item secure in the classical random oracle model,
 \item secure against quantum attackers with classical access to
  the random oracle model, but insecure under \emph{any} implementation
  of the hash function, and
 \item insecure in the quantum-accessible random oracle model.
\end{itemize}
The protocol itself assumes that (asymptotically) quantum computers are  faster than classical (parallel) machines and uses the quadratic gap due to Grover's algorithms and its application to collision search~\cite{BraHoyTap98} to separate secure from insecure executions.

\paragraph{\bf Constructions.} Next, we set out to give general conditions under which a {\em classical} RO proof implies security
for a {\em quantum} RO.   Our goal is to provide generic tools by which authors can simply state that their classical proof has the ``right'' structure and therefore their proof implies quantum security.   We give two flavors of results:
\begin{itemize}
\item For signatures, we define a proof structure we call a {\em \os\ reduction} which roughly says that the reduction answers oracle queries independently of the history of queries. We prove that any classical proof that happens to be a \os\ reduction implies quantum existential unforgeability for the signature scheme.   We then show that the GPV random oracle signature scheme~\cite{Gentry2008} has a \os\ reduction and is therefore secure in the quantum settings.

Next, we consider signature schemes built from claw-free permutations.  The first is the Full Domain Hash (FDH) signature system of Bellare and Rogaway~\cite{BR1}, for which we show that the classical proof technique due to Coron~\cite{Coron2000} is \os.  We also prove the quantum security of a variant of FDH due to Katz and Wang~\cite{Katz2003} which has a tight security reduction.  Lastly, we note that, as observed in~\cite{Gentry2008}, claw-free permutations give rise to preimage sampleable trapdoor functions, which gives another FDH-like signature scheme with a tight security reduction.  In all three cases the reductions in the quantum-accessible random oracle model achieve essentially the same tightness as their classical analogs.

Interestingly, we do not know of a \os\ reduction for the generic Full Domain Hash of Bellare and Rogaway~\cite{BR1}.  One reason is that proofs for generic FDH must somehow program the random oracle, as shown in \cite{NPRO}.  We leave the quantum security of generic FDH as an interesting open problem.  It is worth noting that at this time the quantum security of FDH is somewhat theoretical since we have no candidate quantum-secure trapdoor permutation to instantiate the FDH scheme, though this may change once a candidate is proposed.

\smallskip

\item For encryption we prove the quantum CPA security of an encryption scheme due to Bellare and Rogaway~\cite{BR1} and the quantum CCA security of a hybrid encryption variant of \cite{BR1}. \end{itemize}

\noindent
Many open problems remain in this space.  For signatures, it is still open to prove the quantum security of signatures that result from applying the Fiat-Shamir heuristic to a~$\Sigma$ identification protocol, for example, as suggested in~\cite{Cayrel2010}.   Similarly, proving security of generic FDH is still open.   For CCA-secure encryption, it is unknown if generic CPA to CCA transformations, such as~\cite{FO99}, are secure in the quantum settings.  Similarly, it is not known if lattice-based identity-based encryption systems secure in the classical RO model (e.g.~as in~\cite{Gentry2008,ABB2}) are also secure in the quantum random oracle model.

\paragraph{\bf Related Work.}
%
The quantum random oracle model has been used in a few previous
constructions.  Aaronson~\cite{money} uses quantum random
oracles to construct unclonable public-key quantum money.  Brassard
and Salvail~\cite{Brassard2008} give a modified version of Merkle's
Puzzles, and show that any quantum attacker must query the random
(permutation) oracle asymptotically more times than honest parties.
Recently, a modified version was proposed that restores some level of security even in
the presence of a quantum adversary~\cite{BHKKLS11}.
Quantum random oracles have also been used to prove impossibility
results for quantum computation.  For example, Bennett et
al.~\cite{Bennett1997} show that relative to a random oracle, a
quantum computer cannot solve all of NP.

Some progress toward identifying sufficient conditions under which classical protocols are also quantum immune has been made by
Unruh~\cite{Unr10} and Hallgren et al.~\cite{hasmso11}.
These results show that, if a cryptographic protocol can be shown to be (computationally \cite{hasmso11} resp.~statistically \cite{Unr10}) secure in Canetti's universal composition (UC) framework \cite{Can01} against classical adversaries, then the protocol
is also resistant against (computationally bounded resp.~unbounded) quantum adversaries. This, however, means that the underlying protocol must
already provide strong security guarantees in the first place, namely,
universal composition security, which is typically more than the
aforementioned schemes in the literature satisfy. This also applies
to similar results by Hallgren et al.~\cite{hasmso11} for
so-called simulation-based security notions for the starting protocol.
Furthermore, all these results do not seem to be applicable immediately
to the random oracle model where the quantum adversary now has \emph{quantum} access
to the random function (but where the ideal functionality for the random oracle in the UC framework
would have only been defined for classical access according to the classical protocol specification),
and where the question of instantiation is an integral step which needs to be considered.

\section{Preliminaries}\label{prelim}

A non-negative function $\epsilon=\epsilon(n)$ is negligible if, for all polynomials $p(n)$ we have that $\epsilon(n)<p(n)^{-1}$ for all sufficiently large $n$.  The variational distance between two distributions $D_1$ and $D_2$ over $\Omega$ is given by \eq{|D_1-D_2|=\sum_{x\in\Omega}\magn{\Pr[x|D_1]-\Pr[x|D_2]}.}  If the distance between two output distributions is $\epsilon$, the difference in probability of the output satisfying a certain property is at most $\epsilon$.

A classical randomized algorithm $A$ can be thought of in two ways.  In the first, $A$ is given an input $x$, $A$ makes some coin tosses during its computation, and ultimately outputs some value $y$.  We denote this action by  $A(x)$ where $A(x)$ is a random variable. Alternatively, we can give $A$ both its input $x$ and randomness $r$ in which case we denote this action as $A(x;r)$.  For a classical algorithm, $A(x;r)$ is deterministic. An algorithm $A$ runs in probabilistic polynomial-time (PPT) if it runs in polynomial time in the security parameter
(which we often omit from the input for sake of simplicity).

\subsection{Quantum Computation}

We briefly give some background on quantum computation and refer to \cite{ChaNie00} for a more complete discussion.
A quantum system $A$ is associated to a (finite-dimensional) complex Hilbert space $\hilbert_A$ with an inner product $\Angle{\cdot | \cdot}$. The state of the system is described by a vector $\ket{\varphi}\in\hilbert_A$ such that the Euclidean norm $\|\ket{\varphi}\|=\sqrt{\Angle{\varphi | \varphi}}$ is $1$.
Given quantum systems $A$ and $B$ over spaces $\hilbert_A$ and $\hilbert_B$, respectively, we define the joint or composite quantum system through the tensor product $\hilbert_A\otimes\hilbert_B$.  The product state of $\ket{\varphi_A}\in\hilbert_A$ and $\ket{\varphi_B}\in\hilbert_B$ is denoted by
$\ket{\varphi_A}\otimes\ket{\varphi_B}$ or simply $\ket{\varphi_A}\ket{\varphi_B}$.  An $n$-qubit system lives in the joint quantum system of $n$ two-dimensional Hilbert spaces.  The standard orthonormal computational basis $\ket{x}$ for such a system is given by $\ket{x_1}\otimes\dots\otimes\ket{x_n}$ for $x=x_1\dots
x_n$. Any (classical) bit string $x$ is encoded into a quantum state as $\ket{x}$. An arbitrary pure $n$-qubit state $\ket{\varphi}$ can be expressed in the computational basis as $\ket{\varphi} = \sum_{x \in \bin^n} \alpha_x \ket{x}$ where $\alpha_x$ are complex amplitudes obeying $\sum_{x \in \bin^n} |\alpha_x|^2 =1$.

\paragraph*{Transformations.}
Evolutions of quantum systems are described by unitary transformations with $\identity_A$ being the identity transformation on register $A$. Given a joint quantum system over $\hilbert_A\otimes\hilbert_B$ and a transformation $U_A$ acting only on $\hilbert_A$, it is understood that $U_A\ket{\varphi_A}\ket{\varphi_B}$ refers to $(U_A\otimes\identity_B)\ket{\varphi_A}\ket{\varphi_B}$.

Information can be extracted from a quantum state $\ket{\varphi}$ by performing a positive-operator valued measurement (POVM) $M=\{M_i\}$ with positive semi-definite measurement operators $M_i$ that sum to the identity $\sum_i M_i = \identity$. Outcome $i$ is obtained with probability $p_i = \bra{\varphi} M_i\ket{\varphi}$.
A special case are projective measurements such as the measurement in the computational basis of the state $\ket{\varphi}=\sum_x \alpha_x \ket{x}$ which yields outcome $x$ with probability $|\alpha_x|^2$.  We can also do a partial measurement on some of the qubits.  The probability of the partial measurement resulting in a string $x$ is the same as if we measured the whole state, and ignored the rest of the qubits.  In this case, the resulting state will be the same as $|\phi\rangle$, except that all the strings inconsistent with $x$ are removed.  This new state will not have a norm of 1, so the actual superposition is obtained by dividing by the norm.  For example, if we measure the first $n$ bits of $|\phi\rangle=\sum_{x,y} \alpha_{x,y} |x,y\rangle$, we will obtain the measurement $x$ with probability $\sum_{y'}|\alpha_{x,y'}|^2$, and in this case the resulting state will be \eq{|x\rangle\sum_{y}\frac{\alpha_{x,y}}{\sqrt{\sum_{y'}|\alpha_{x,y'}|^2}}|y\rangle.}


Following \cite{BBCMW98}, we model a quantum attacker $\qadv$ with access to (possibly identical) oracles $O_1,O_2,\dots$ by a sequence of unitary transformations
$U_1, O_1, U_2,\dots, O_{T-1},U_T$ over $k=\text{poly}(n)$ qubits. Here, oracle $O_i:\bin^n\to\bin^m$ maps the first $n+m$ qubits from basis state $\ket{x}\ket{y}$ to basis state
$\ket{x}\ket{y\xor O_i(x)}$ for $x \in \bin^n$ and $y \in \bin^m$.
If we require the access to $O_i$ to be classical instead of quantum, the first $n$ bits of the state are measured before applying the unitary transformation corresponding to $O_i$.  Notice that any quantum-accessible oracle can also be used as a classical oracle.
Note that the algorithm $\qadv$ may also receive some input $\ket{\psi}$. 


To introduce asymptotics we assume that $\qadv$ is actually a sequence of such transformation sequences, indexed by parameter $n$, and that each transformation sequence is composed out of quantum systems for input, output, oracle calls, and work space (of sufficiently many qubits). To measure polynomial running time, we assume that each $U_i$ is approximated (to sufficient precision) by members of a set of universal gates (say, Hadamard, phase, CNOT and $\pi/8$; for sake of
concreteness \cite{ChaNie00}), where at most polynomially many gates are used. Furthermore, $T=T(n)$ is assumed to be polynomial, too.
Note that $T$ also bounds the number of oracle queries.

We define the Euclidean distance $\magn{|\phi\rangle-|\psi\rangle}$ between two states as the value $\parens{\sum_x |\alpha_x-\beta_x|^2}^{\frac{1}{2}}$ where $|\phi\rangle=\sum_x \alpha_x|x\rangle$ and $|\psi\rangle=\sum_x\beta_x|x\rangle$. 	

Define $q_r(|\phi_t\rangle)$ to be the magnitude squared of $r$ in the superposition of query $t$.  We call this the query probability of $r$ in query $t$.  If we sum over all $t$, we get the total query probability of $r$.
	
We will be using the following lemmas:
	\begin{lemma}[\cite{Bennett1997} Theorem 3.1] Let $|\varphi\rangle$ and $|\psi\rangle$ be quantum states with Euclidean distance at most $\epsilon$. Then,
performing the same measurement on $|\varphi\rangle$ and $|\psi\rangle$ yields distributions with statistical distance at most
$4\epsilon$.\end{lemma}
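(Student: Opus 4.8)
The plan is to reduce everything to two applications of the Cauchy--Schwarz inequality together with the completeness relation $\sum_i M_i=\identity$ of the POVM $M=\{M_i\}$ that is applied to both states (projective measurements, including measurement in the computational basis, being the special case where the $M_i$ are orthogonal projectors). First I would write $\ket{\varphi}=\ket{\psi}+\ket{\delta}$, so that $\norm{\ket{\delta}}=\magn{\ket{\varphi}-\ket{\psi}}\le\epsilon$, and set $p_i=\bra{\varphi}M_i\ket{\varphi}$, $q_i=\bra{\psi}M_i\ket{\psi}$; the goal is then to bound $\sum_i\magn{p_i-q_i}$ in the paper's (unnormalised) statistical-distance convention. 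Expanding $\bra{\varphi}M_i\ket{\varphi}=(\bra{\psi}+\bra{\delta})M_i(\ket{\psi}+\ket{\delta})$ and cancelling the $q_i$ term — using that $M_i$ is Hermitian, so the two mixed terms are complex conjugates — gives $p_i-q_i=2\,\mathrm{Re}\,\bra{\psi}M_i\ket{\delta}+\bra{\delta}M_i\ket{\delta}$, hence $\sum_i\magn{p_i-q_i}\le 2\sum_i\magn{\bra{\psi}M_i\ket{\delta}}+\sum_i\bra{\delta}M_i\ket{\delta}$.

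The ``quadratic'' term is immediate: since $M_i\ge 0$, $\sum_i\bra{\delta}M_i\ket{\delta}=\bra{\delta}\parens{\sum_i M_i}\ket{\delta}=\Angle{\delta|\delta}\le\epsilon^2$. For the ``cross'' term I would use the square-root decomposition $M_i=\sqrt{M_i}^{\,\dagger}\sqrt{M_i}$ (legitimate because $M_i$ is positive semi-definite), so that $\bra{\psi}M_i\ket{\delta}=\Angle{\sqrt{M_i}\,\psi\,\big|\,\sqrt{M_i}\,\delta}$. Applying Cauchy--Schwarz inside $\hilbert$ to each summand, and then Cauchy--Schwarz over the index $i$,
\[
\sum_i\magn{\bra{\psi}M_i\ket{\delta}}\le\sum_i\norm{\sqrt{M_i}\ket{\psi}}\,\norm{\sqrt{M_i}\ket{\delta}}\le\parens{\sum_i\bra{\psi}M_i\ket{\psi}}^{1/2}\parens{\sum_i\bra{\delta}M_i\ket{\delta}}^{1/2},
\]
and completeness again collapses the two factors to $\Angle{\psi|\psi}^{1/2}=1$ and $\Angle{\delta|\delta}^{1/2}\le\epsilon$, so the cross term is at most $\epsilon$. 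Combining, $\sum_i\magn{p_i-q_i}\le 2\epsilon+\epsilon^2$. Since the claim is only meaningful in the regime $\epsilon\le 2$ (the maximal Euclidean distance between unit vectors; if $\epsilon>2$ the asserted bound $4\epsilon>8$ is beaten by the trivial bound $2$ on the statistical distance), we have $\epsilon^2\le 2\epsilon$, yielding $\sum_i\magn{p_i-q_i}\le 4\epsilon$.

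The only place with any subtlety is the cross term: the inequality one wants is Cauchy--Schwarz applied \emph{through} $\sqrt{M_i}$ rather than to $M_i$ directly, followed by a second Cauchy--Schwarz over $i$ so that the sum can be pushed inside the square roots and completeness invoked on each factor independently; everything else is bookkeeping. I would state and prove it for arbitrary POVMs, since that is exactly the generality needed later — where ``the same measurement'' is performed on, say, the adversary's final state in the real versus the simulated quantum random oracle, and we want to conclude that any distinguishing statistic changes by only $O(\epsilon)$.
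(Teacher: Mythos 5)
Your proof is correct: the decomposition $\ket{\varphi}=\ket{\psi}+\ket{\delta}$, the split of $p_i-q_i$ into a cross term and a quadratic term, the two applications of Cauchy--Schwarz through $\sqrt{M_i}$, and the completeness relation $\sum_i M_i=\identity$ give $\sum_i\magn{p_i-q_i}\le 2\epsilon+\epsilon^2\le 4\epsilon$ (the regime $\epsilon>2$ being vacuous against the trivial bound $2$), which is exactly the stated bound in the paper's unnormalised variational-distance convention. For comparison: the paper does not prove this lemma at all --- it imports it verbatim as Theorem~3.1 of Bennett et al.~\cite{Bennett1997} --- so there is no in-paper argument to match. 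Your argument is essentially the original BBBV computation, which works with amplitudes in the measurement basis (expanding $\magn{\beta_x}^2-\magn{\alpha_x}^2$ with $\beta_x=\alpha_x+\delta_x$ and applying Cauchy--Schwarz), lifted to arbitrary POVMs via $M_i=\sqrt{M_i}^{\dagger}\sqrt{M_i}$; that extra generality is consistent with the POVM formalism of Section~2.1 and costs nothing. The closest statement the paper proves itself is the first technical lemma of the appendix (measuring whether the outcome satisfies a property $P$), which it handles geometrically by projecting both states onto the span of the strings satisfying $P$ and obtains the sharper square-root form $\sqrt{\epsilon}-\gamma\le\sqrt{\epsilon'}\le\sqrt{\epsilon}+\gamma$; that is a two-outcome projective special case (and refinement) of the present lemma rather than an alternative proof of it, so your write-up stands on its own.
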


\begin{lemma}[\cite{Bennett1997} Theorem 3.3]\label{oraclechange} Let $A_Q$ be a quantum algorithm running in time $T$ with oracle access to $O$. Let $\epsilon > 0$ and let $S\subseteq\left[1,T\right]\times\{0,1\}^n$ be a set of time-string pairs such that $\sum_{(t,r)\in S}q_r(|\phi_t\rangle)\leq \epsilon$. If we modify $O$ into an oracle $O'$ which answers each query $r$ at time $t$ by providing the same string $R$ (which has been independently sampled at random), then the Euclidean distance between the final states of $A_Q$ when invoking $O$ and $O'$ is at most $\sqrt{T \epsilon}$.
\end{lemma}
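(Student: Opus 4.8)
The plan is a hybrid argument over the (at most) $T$ queries, exploiting that unitary evolution is an isometry --- so that perturbations introduced at different queries add up by the triangle inequality rather than compound --- together with the observation that a single modified query only disturbs the part of the current state whose query register lies in the ``bad'' set.

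Concretely, for $j=0,1,\dots,T$ let $|\eta_j\rangle$ be the final state of $A_Q$ when its first $j$ oracle calls are answered using $O$ and its remaining calls are answered using $O'$; thus $|\eta_T\rangle$ is the final state with oracle $O$ and $|\eta_0\rangle$ the final state with oracle $O'$, and by the triangle inequality $\||\eta_T\rangle-|\eta_0\rangle\|\leq\sum_{j=1}^{T}\||\eta_j\rangle-|\eta_{j-1}\rangle\|$. The two adjacent hybrids $|\eta_j\rangle$ and $|\eta_{j-1}\rangle$ run identically up to the $j$-th query --- all of queries $1,\dots,j-1$ use $O$ in both --- so the state right before query $j$ is in both cases the honest pre-query state $|\phi_j\rangle$; they differ only in whether query $j$ applies $O$ or $O'$; and afterwards both apply the identical remaining computation, which is a single unitary $W_j$ (queries $j+1,\dots,T$ use $O'$ in both). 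Since $W_j$ preserves Euclidean distance, $\||\eta_j\rangle-|\eta_{j-1}\rangle\|=\|(O-O')\,|\phi_j\rangle\|$, where $O,O'$ denote the corresponding oracle unitaries.

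It remains to bound $\|(O-O')\,|\phi_j\rangle\|$. Both $O$ and $O'$ act on basis states $|r\rangle|y\rangle$ by $|r\rangle|y\oplus O(r)\rangle$ resp.\ $|r\rangle|y\oplus O'(r)\rangle$, and they agree whenever $(j,r)\notin S$; hence $(O-O')\,|\phi_j\rangle$ is supported entirely on the query-register values $r$ with $(j,r)\in S$, i.e.\ it equals $(O-O')\,|\phi_j^{\mathrm{bad}}\rangle$ for $|\phi_j^{\mathrm{bad}}\rangle\deq\Pi_{S,j}|\phi_j\rangle$, where $\Pi_{S,j}$ projects the query register onto $\{r:(j,r)\in S\}$. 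By definition of the query probability, $\||\phi_j^{\mathrm{bad}}\rangle\|^2=\sum_{r:(j,r)\in S}q_r(|\phi_j\rangle)\deq\epsilon_j$, and the hypothesis is exactly $\sum_{j=1}^{T}\epsilon_j\leq\epsilon$. Because $O$ and $O'$ are unitary (norm-preserving on this subspace), $\|(O-O')\,|\phi_j^{\mathrm{bad}}\rangle\|\leq\|O\,|\phi_j^{\mathrm{bad}}\rangle\|+\|O'\,|\phi_j^{\mathrm{bad}}\rangle\|=2\sqrt{\epsilon_j}$; note this holds for the particular (randomly sampled) $O'$ and uses nothing about how the replacement value $R$ was chosen. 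Combining and applying Cauchy--Schwarz, $\||\eta_T\rangle-|\eta_0\rangle\|\leq\sum_{j=1}^{T}2\sqrt{\epsilon_j}\leq 2\sqrt{T\sum_{j=1}^{T}\epsilon_j}\leq 2\sqrt{T\epsilon}$, which is the stated bound up to the universal constant; matching the exact $\sqrt{T\epsilon}$ is a matter of the (routine) finer accounting in the response register carried out in the cited source.

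The main obstacle is organizational rather than deep: one must set the hybrids up so that each pair of neighbours differs in exactly one oracle call, verify that the pre-query state of that call is genuinely history-independent across the two neighbours (this is where ``all earlier queries use $O$ in both'' is essential), and that the post-query evolution is literally the same unitary, so that the lone perturbation is transported without growth or destructive interference. Once that scaffolding is in place, the estimate reduces to the elementary single-query bound above and one application of Cauchy--Schwarz.
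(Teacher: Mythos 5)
The paper does not prove this lemma at all---it is imported verbatim (after reparameterization) from Bennett et al.'s Theorem 3.3, whose proof is precisely the hybrid argument you give: interpolate over which queries use $O$ versus $O'$, use unitarity of the surrounding evolution so that the $j$-th swap contributes $\|(O-O')\ket{\phi_j}\|$ with $\ket{\phi_j}$ the honest pre-query state, observe that this difference is supported on the bad query-register components, and finish with Cauchy--Schwarz. Your scaffolding is therefore the canonical one, and the key structural points (neighbouring hybrids share the honest state $\ket{\phi_j}$ because queries $1,\dots,j-1$ use $O$ in both, and the post-query evolution is one common unitary) are handled correctly. The one soft spot is your closing claim that the exact constant $\sqrt{T\epsilon}$ follows from ``routine finer accounting'': in the bit-flip oracle model used in this paper the single-query bound $\|(O-O')\ket{\phi_j^{\mathrm{bad}}}\|\le 2\sqrt{\epsilon_j}$ is tight (put the response register of the bad branch in the state $\tfrac{1}{\sqrt2}(\ket{0}-\ket{1})$; changing $O(r)$ then negates that entire branch), so no refinement of this telescoping argument removes the factor $2$; the cleaner constant in the statement should be viewed as inherited from the cited source's accounting rather than something your argument is expected to recover. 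Since every invocation of the lemma in the paper (the proof of Lemma~\ref{oraclerand} and the CCA argument of Theorem~\ref{brCCAsecure}) is purely asymptotic, the factor of $2$ is immaterial, and your proof supports all of those uses.
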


\subsection{Quantum-Accessible Random Oracles}\label{quantum-access}

In the classical random oracle model~\cite{BR1} all algorithms used in
the system are given access to the same random oracle.  In the proof of security, the reduction algorithm answers the adversary's queries with
consistent random answers.

In the quantum settings, a quantum attacker issues a random oracle
query which is itself a superposition of exponentially many states.
The reduction algorithm must evaluate the random oracle at all points
in the superposition.  To ensure that random oracle queries are
answered consistently across queries,
it is convenient to assume that quantum-resistant
pseudorandom functions exist, and to implement this auxiliary random oracle with
such a PRF.

\begin{definition}[Pseudorandom Function] A quantum-accessible pseudorandom function is an efficiently computable function \prf\ 
where, for all efficient quantum algorithms $D$,
\eq{\magn{\Pr[D^{\PRF(k,\cdot)}(1^n)=1]-\Pr[D^{O(\cdot)}(1^n)=1]}<\epsilon}
where $\epsilon=\epsilon(n)$ is negligible in $n$, and where $O$ is a random oracle, the first probability is over the keys $k$ of length $n$, and the second probability is over all random oracles and the sampling of the result of $D$.
\end{definition}

We note that, following Watrous \cite{Wat09}, indistinguishability as above should still hold for any auxiliary quantum state $\sigma$ given as additional input to $D$ (akin to non-uniformity for classical algorithms). We do not include such auxiliary information in our definition in order to simplify.

We say that an oracle $O'$ is computationally indistinguishable from a random oracle if, for all polynomial time quantum algorithms with oracle access, the variational distance of the output distributions when the oracle is $O'$ and when the oracle is a truly random oracle $O$ is negligible.  Thus, simulating a random oracle with a quantum-accessible pseudorandom function is computationally indistinguishable from a true random oracle.

We remark that, instead of assuming that quantum-accessible PRFs exist, we can often carry out security reductions relative
to a random oracle. Consider, for example, a signature scheme (in the quantum-accessible random oracle model) which we prove to be
unforgeable for quantum adversaries, via a reduction to the one-wayness of a trapdoor permutation against quantum inverters.
We can then formally first claim that the scheme is unforgeable as long as inverting the trapdoor permutation is infeasible
even when having the additional power of a quantum-accessible random oracle; only in the next step we can then
conclude that this remains true in the standard model, if we assume that quantum-accessible pseudorandom functions exist and let the
inverter simulate the random oracle with such a PRF. We thus still get a potentially reasonable security claim even
if such PRFs do not exist.
This technique works whenever we can determine the success of the adversary
(as in case of inverting a one-way function).

\subsection{Hard Problems for Quantum Computers}

We will use the following general notion of a hard problem.

\begin{definition}[Problem]  A problem is a pair $P=(\myGame_P,\alpha_P)$ where  $\myGame_P$ specifies a game that a (possibly quantum) adversary plays with a classical challenger.  The game works as follows:
\begin{MyItemize}
	\item On input $1^n$, the challenger computes a value $x$, which it sends to the adversary as its input
	\item The adversary is then run on $x$, and is allowed to make classical queries to the challenger.
	\item The adversary then outputs a value $y$, which it sends to the challenger.
	\item The challenger then looks at $x$, $y$, and the classical queries made by the adversary, and outputs $1$ or $0$.
\end{MyItemize}
The value $\alpha_P$ is a real number between 0 (inclusive) and 1 (exclusive).  It may also be a function of $n$, but for this paper, we only need constant $\alpha_P$, specifically $\alpha_P$ is always $0$ or $\frac{1}{2}$.\end{definition}

We say that an adversary $A$ wins the game $\myGame_P$ if the challenger outputs $1$.  We define the advantage $\Adv_{A,P}$ of $A$ in problem $P$ as
\eq{Adv_{A,P}=\abs{\Pr[A\text{ wins in }\myGame_P]-\alpha_P}}
\begin{definition}[Hard Problem]
	A problem $P=(\myGame_P,\alpha_P)$ is hard for quantum computers if, for all polynomial time quantum adversaries $A$, $\Adv_{A,P}$ is negligible.
\end{definition}

\subsection{Cryptographic Primitives}

For this paper, we define the security of standard cryptographic primitives in terms of certain problems being hard for quantum computers.  We give a brief sketch here and refer to the \iffullversion appendix \else full version~\cite{quantfull}\fi  for supplementary details.

A trapdoor function $\mathcal{F}$ is secure if $\Inv(\mathcal{F})=(\myGame_{\INV}(\mathcal{F}),0)$ is a hard problem for quantum computers, where in $\myGame_{\INV}$, an adversary is given a random element $y$ and public key, and succeeds if it can output an inverse for $y$ relative to the public key.  A preimage sampleable trapdoor function, $\mathcal{F}$, is secure if $\Inv(\mathcal{F})$ as described above is hard, and if $\Col(\mathcal{F})=(\myGame_{\Col}(\mathcal{F}),0)$ is hard for quantum computers, where in $\myGame_{\Col}$, an adversary is given a public key, succeeds if it can output a collision relative to that public key.  A signature scheme $\mathcal{S}$ is secure if the game $\text{Sig-Forge}(\mathcal{S})=(\myGame_{\text{Sig}}(\mathcal{S}),0)$ is hard, where $\myGame_{\text{Sig}}$ is the standard existential unforgeability under a chosen message attack game.  Lastly, a private (resp.~public) key encryption scheme $\mathcal{E}$ is secure if $\text{Sym-CCA}(\mathcal{E})=(\myGame_{\text{Sym}}(\mathcal{E}),\frac{1}{2})$ (resp.~$\text{Asym-CCA}(\mathcal{E})=(\myGame_{\text{Asym}}(\mathcal{E}),\frac{1}{2})$), where $\myGame_{\text{Sym}}$ is the standard private key CCA attack game, and $\myGame_{\text{Asym}}$ is the standard public key attack game.


\section{Separation Result}

In this section, we discuss a two-party protocol that is provably secure in the random oracle model against both classical and quantum adversaries with classical access to the random oracle (and when using quantum-immune primitives). We then use the polynomial gap between the birthday attack and a collision finder based on Grover's algorithm to show that the protocol remains secure for certain hash functions when only classical adversaries are considered, but becomes insecure for any hash function if quantum adversaries are allowed. Analyzing the protocol in the stronger quantum random oracle model, where we grant the adversary quantum access to the random oracle, yields the same negative result.

\iffullversion

\subsection{Preliminaries}
We start this section by presenting the necessary definitions and assumptions for our construction. For sake of simplicity, we start with a quantum-immune identification scheme to derive our protocol; any other primitive or protocol can be used in a similar fashion.

\parag{Identification Schemes.}
An identification scheme $\ident$ consists of  three efficient algorithms \allowbreak $(\ikgen,\allowbreak\prov,\veri)$ where $\ikgen$ on input $1^n$ returns a key pair $(\sk,\pk)$. The joint execution of $\prov(\sk,\pk)$ and $\veri(\pk)$ then defines an interactive protocol between the prover $\prov$ and the verifier $\veri$. At the end of the protocol $\veri$ outputs a decision bit $b \in \bin$. We assume completeness in the sense that for any honest prover the verifier accepts the interaction with output $b=1$. Security of identification schemes is usually defined by considering an adversary $\adv$ that first interacts with the honest prover to obtain some information about the secret key. In a second stage, the adversary then plays the role of the prover and has to make a verifier accept the interaction. We say that an identification scheme is \emph{sound} if the adversary can convince the verifier with negligible probability only.

\parag{(Near-)Collision-Resistant Hash Functions.}

A hash function $\hash=(\hkgen,\heval)$ is a pair of efficient algorithms such that $\hkgen$ for input $1^n$ returns a key $k$ (which contains $1^n$), and $\heval$ for input $k$ and $M\in\bin^*$ deterministically outputs a digest $\heval(k,M)$. For a random oracle $H$ we use $k$ as a ``salt'' and consider the random function $H(k,\cdot)$. The hash function is called \emph{near-collision-resistant} if for any efficient algorithm $\adv$ the probability that for  $k\exec\hkgen(1^n)$, some constant $1 \leq \ell \leq n$ and $(M,M')\exec\adv(k,\ell)$ we have $M\neq M'$ but $\heval(k,M)|_{\ell}=\heval(k,M')|_{\ell}$, is negligible (as a function of $n$).
Here we denote by $x|_{\ell}$ the leading $\ell$ bits of the string $x$. Note that for $\ell=n$ the above definition yields the standard notion of collision-resistance.

In the classical setting, (near-)collision-resistance for any hash function is upper bounded by the \emph{birthday attack}. This generic attack states that for any hash function with $n$ bits output, an attacker can find a collision with probability roughly $1/2$ by probing $2^{n/2}$ distinct and random inputs. For random oracles this attack is optimal.

\parag{Grover's Algorithm and Quantum Collision Search.}
Grover's algorithm \cite{Gr96,Gro98} performs a search on an unstructured database with $N$ elements in time $O(\sqrt{N})$ while the best classical algorithm requires $O(N)$ steps. Roughly, this is achieved by using superpositions to examine all entries ``at the same time''. Brassard et al.~\cite{BraHoyTap98} showed that this speed-up can also be obtained for solving the collision problem for a hash function $H:\bin^* \rightarrow \bin^n$.  Therefore, one first selects a subset $K$ of the domain $\bin^*$ and then applies Grover's algorithm on an indicator function $f$ that tests for any input $M \in \bin^* \backslash K$ if there exists an $M' \in K$ such that  $H(M) = H(M')$ holds. By setting $|K| = \sqrt[3]{2^n}$, the algorithm finds a collision after $O(\sqrt[3]{2^n})$ evaluations of $H$ with probability at least $1/2$.

\parag{Computational and Timing Assumptions.}
To allow reasonable statements about the security of our protocol we need to formalize
assumptions concerning the computational power of the adversary and the time that elapses on quantum and classical computers.
We first demand that the speed-up one can gain by using a parallel machine with many processors, is bounded by a fixed term. This basically resembles the fact that in the real world there is only a concrete and finite amount of equipment available that can contribute to such a performance gain.

\newtheorem{assumption}{Assumption}

\begin{assumption}[Parallel Speed-Up]\label{assu:parallel}
Let $T(C)$ denote the time that is required to solve a problem $C$ on a classical computer, and $T_P(C)$ is the required time that elapses on a parallel system. Then, there exist a constant $\alpha \geq 1$, such that for any problem $C$ it holds that $T_P(C) \geq T(C) / \alpha$.
\end{assumption}

We also introduce two assumptions regarding the time that is needed to evaluate a hash function or to send a message between two parties. Note that both assumptions are merely for the sake of convenience, as one could patch the idea by relating the timings more rigorously. The first assumption states that the time that is required to evaluate a hash function $H$ is independent of the input and the computational environment.
\begin{assumption}[Unit Time]
For any hash function $H$ and any  input message $M$ (resp. $M_Q$ for quantum-state inputs) the evaluation of $H(M)$ requires a constant time $T(H(M)) = T_P(H(M))= T_Q(H(M_Q))$ (where $T_Q$ denotes the time that elapses on a quantum computer).
\end{assumption}

Furthermore, we do not charge any extra time for sending and receiving messages, or for any computation other than evaluating a hash function (e.g., maintaining lists of values).
\begin{assumption}[Zero Time]\label{ass:zero}
Any computation or action that does not require the evaluation of a hash function, costs zero time.
\end{assumption}
The latter assumption implicitly states that the computational overhead that quantum algorithms may create to obtain a speed-up is negligible when compared to the costs of a hash evaluation. This might be too optimistic in the near future, as indicated by Bernstein~\cite{Ber09}. That is, Bernstein discussed that the overall costs of a quantum computation can be higher than of massive parallel computation. However, as our work addresses conceptional issues that arise when \emph{efficient} quantum computers exist, this assumption is somewhat inherent in our scenario.

\else

Note that, due to the page limit, we discuss only the high-level idea of our protocol, for the full description and the formal security analysis we refer to the \iffullversion appendix \else full version~\cite{quantfull}\fi . We start by briefly presenting the necessary definitions and assumptions for our construction.

\parag{Building Blocks.}  For sake of simplicity, we start with a quantum-immune identification scheme to derive our protocol; any other primitive or protocol can be used in a similar fashion.
An \emph{identification scheme} $\ident$ consists of  three efficient algorithms \allowbreak $(\ikgen,\allowbreak\prov,\veri)$ where $\ikgen$ on input $1^n$ returns a key pair $(\sk,\pk)$. The joint execution of $\prov(\sk,\pk)$ and $\veri(\pk)$ then defines an interactive protocol between the prover $\prov$ and the verifier $\veri$. At the end of the protocol $\veri$ outputs a decision bit $b \in \bin$, indicating whether he accepts the identification of $\prov$ or not. We say that $\ident$ is secure if an adversary after interacting with an honest prover $\prov$ cannot  impersonate $\prov$ such that a verifier accepts the interaction.

A \emph{hash function} $\hash=(\hkgen,\heval)$ is a pair of efficient algorithms such that $\hkgen$ for input $1^n$ returns a key $k$ (which contains $1^n$), and $\heval$ for input $k$ and $M\in\bin^*$ deterministically outputs a digest $\heval(k,M)$. For a random oracle $H$ we use $k$ as a ``salt'' and consider the random function $H(k,\cdot)$. The hash function is called \emph{near-collision-resistant} if for any efficient algorithm $\adv$ the probability that for  $k\exec\hkgen(1^n)$, some constant $1 \leq \ell \leq n$ and $(M,M')\exec\adv(k,\ell)$ we have $M\neq M'$ but $\heval(k,M)|_{\ell}=\heval(k,M')|_{\ell}$, is negligible (as a function of $n$).
Here we denote by $x|_{\ell}$ the leading $\ell$ bits of the string $x$. Note that for $\ell=n$ the above definition yields the standard notion of collision-resistance.

\parag{Classical vs.~Quantum Collision-Resistance.}
In the classical setting, (near-){\allowbreak}collision-resistance for any hash function is upper bounded by the \emph{birthday attack}. This generic attack states that for any hash function with $n$ bits output, an attacker can find a collision with probability roughly $1/2$ by probing $2^{n/2}$ distinct and random inputs. For (classical) random oracles this attack is optimal.

In the quantum setting, one can gain a polynomial speed-up on the collision search by using \emph{Grover's algorithm} \cite{Gr96,Gro98}, which performs a search on an unstructured database with $N$ elements in time $O(\sqrt{N})$. Roughly, this is achieved by using superpositions to examine all entries ``at the same time''. Brassard et al.~\cite{BraHoyTap98} use Grover's algorithm to obtain an algorithm for solving the collision problem for a hash function $H:\bin^* \rightarrow \bin^n$ with probability at least $1/2$, using only $O(\sqrt[3]{2^n})$ evaluations of $H$.

\parag{Computational and Timing Assumptions.}
To allow reasonable statements about the security of our protocol we need to formalize
assumptions concerning the computational power of the adversary and the time that elapses on quantum and classical computers. In particular, we assume the following:
\begin{enumerate}
\item The speed-up one can gain by using a parallel machine with many processors, is bounded by a fixed term.
\item The time that is required to evaluate a hash function is independent of the input and the computational environment.
\item Any computation or action that does not require the evaluation of a hash function, costs zero time.
\end{enumerate}

The first assumption basically resembles the fact that in the real world there is only a concrete and finite amount of equipment available that can contribute to a performance gain of a parallel system. Assumptions (2)+(3) are regarding the time that is needed to evaluate a hash function or to send a message between two parties and are merely for the sake of convenience, as one could patch the idea by relating the timings more rigorously.
The latter assumption implicitly states that the computational overhead that quantum algorithms may create to obtain a speed-up is negligible when compared to the costs of a hash evaluation. This might be too optimistic in the near future, as indicated by Bernstein~\cite{Ber09}. That is, Bernstein discussed that the overall costs of a quantum computation can be higher than of massive parallel computation. However, as our work addresses conceptional issues that arise when \emph{efficient} quantum computers exist, this assumption is somewhat inherent in our scenario.

\fi

%

\subsection{Construction}

\iffullversion

We now present our identification scheme between a prover $\prov$ and
a verifier $\veri$ (see Figure~\ref{fig:protocol}) The main idea is to augment a secure identification scheme $\ident$ by a collision-finding stage for some hash function $\hash$. In this first stage, the verifier checks if the prover is able to produce collisions on a hash function in a particular time. More precisely, the verifier starts for timekeeping to evaluate the hash function $\heval(k,\cdot)$ on the messages $\binrep{c}$ for $c=1,2, \dots, \left\lceil \sqrt[3]{2^\ell}\right\rceil$ for a key $k$ chosen by the verifier and where $\binrep{c}$ stands for the binary representation of $c$ with $\log{\left\lceil \sqrt[3]{2^\ell}\right\rceil}$ bits. The prover has now to respond with a near-collision $M\neq M'$ such that $\heval(k,M) = \heval(k,M')$ holds for the first $\ell$ bits. One round of the collision-stage ends if the verifier either receives such a collision or finishes its $\sqrt[3]{2^\ell}$ hash evaluations. The verifier and the receiver then repeat such a round $r=\text{poly}(n)$ times, sending a fresh key $k$ in each round.

Subsequently, both parties run the standard identification scheme. At the end, the verifier accepts if the prover was able to find enough collisions in the first stage or identifies correctly in the second stage.
Thus, as long as the prover is not able to produce collisions in the required time, the protocol mainly resembles the $\ident$ protocol.

\else

We now propose our identification scheme between a prover $\prov$ and a verifier $\veri$. The main idea is to augment a secure identification scheme $\ident$ by a collision-finding stage. In this stage, the verifier checks if the prover is able to produce collisions on a hash function in a particular time. Subsequently, both parties run the standard identification scheme. At the end, the verifier accepts if the prover was able to find enough collisions in the first stage or identifies correctly in the second stage.
Thus, as long as the prover is not able to produce collisions in the required time, the protocol mainly resembles the $\ident$ protocol.

More precisely, let $\ident=(\ikgen,\allowbreak\prov,\veri)$ be an identification scheme and $\hash=(\hkgen,\heval)$ be a hash function. We construct an identification scheme $\ident^*=(\ikgen^*,\allowbreak\prov^*,\veri^*)$ as follows: The key generation algorithm $\ikgen^*(1^n)$ simply runs $(\sk,\pk)\exec \ikgen(1^n)$ and returns $\sk$ and $(\pk,\ell)$ for some value $\ell \leq \log(n)$.

The interactive protocol $\left\langle \prov^*(\sk,(\pk,\ell)), \veri^*(\pk,\ell)\right\rangle$ then consists of two stages: In the first stage, the verifier chooses a key $k\exec \hkgen$ for the hash function $H$  and sends it to the prover. For timekeeping, the verifier then starts to evaluate the hash function $\heval(k,\cdot)$ on the messages $\binrep{c}$ for $c=1,2, \dots, \left\lceil \sqrt[3]{2^\ell}\right\rceil$, where $\binrep{c}$ stands for the binary representation of $c$ with $\log{\left\lceil \sqrt[3]{2^\ell}\right\rceil}$ bits. The prover has now to respond with a near-collision $M\neq M'$ such that $\heval(k,M) = \heval(k,M')$ holds for the first $\ell$ bits. One round of the collision-stage ends if $\veri^*$ either receives such a collision from $\prov^*$ or finishes its $\sqrt[3]{2^\ell}$ hash evaluations. The verifier and the receiver then repeat such a round $r=\text{poly}(n)$ times, sending a fresh key $k$ in each round. Recall that, if assuming a random oracle $H$, then both parties are supposed to evaluate the function $H(k,\cdot)$.

In the second stage of our scheme, the prover and verifier run the underlying identification protocol $\left\langle \prov(\sk,\pk),\veri(\pk)\right\rangle$ leading to a decision bit $b$. Finally, the verifier outputs $b^*=1$ if $b=1$ or if the prover was able to provide collisions in at least $r/4$ rounds of the first stage. Else, $\veri^*$ rejects with output $b^*=0$.
Note that due to our choice of $\ell \leq \log(n)$, the protocol $\ident^*$ still runs in polynomial time.

\fi

\begin{figure}[h!]
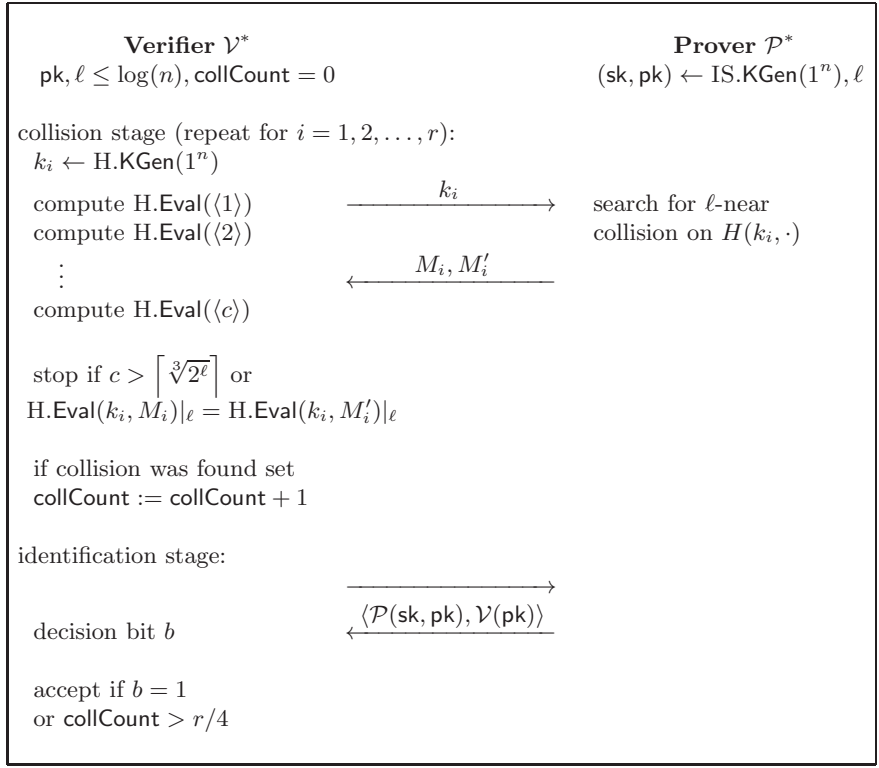

\begin{center}
{
\begin{tabular}{|@{\hspace*{0.75em}}ll@{\hspace*{1.5em}}c@{\hspace*{1.5em}}l@{\hspace*{0.75em}}|}
\hline
& & & \\
&\multicolumn{1}{c}{\textbf{Verifier} $\veri^*$} & & \multicolumn{1}{c|}{\textbf{Prover} $\prov^*$} \\
& \multicolumn{1}{c}{$\pk,  \ell \leq \log(n),\ccount=0  $}  &  & \multicolumn{1}{c|}{$(\sk,\pk) \exec \ikgen(1^n), \ell$ \qquad} \\
& & & \\
\multicolumn{3}{|l}{\;collision stage (repeat for $i=1,2,\dots,r$):} &  \\ 
& $k_i \exec \hkgen(1^n)$ &  & \\
& compute $\heval(\binrep{1})$  & $\xrightarrow{\makebox[8em]{$k_i$}}$ & search for $\ell$-near   \\
& compute $\heval(\binrep{2})$ &  & collision on $H(k_i,\cdot)$   \\
&\quad\vdots &$\xleftarrow{\makebox[8em]{$M_i,M_i'$}}$ &    \\
& compute $\heval(\binrep{c})$ &  & \\
&&& \\
& stop if $c > \left\lceil \sqrt[3]{2^\ell}\right\rceil$ or  &&\\
\multicolumn{3}{|@{\hspace*{0.75em}}l}{$\heval(k_i,M_i)|_{\ell} = \heval(k_i,M_i')|_{\ell}$}& \\
& & & \\
& if collision was found set&& \\
& $\ccount:=\ccount +1$ &&\\
&& & \\
\multicolumn{2}{|l}{\;identification stage:} & & \\
& &$\xrightarrow{\makebox[8em]{}}$ &    \\
&decision bit $b$ &$\xleftarrow{\makebox[8em]{$\left\langle \prov(\sk,\pk), \veri(\pk)\right\rangle$}}$ &    \\
&&& \\

&accept if $b=1$&& \\
&or $\ccount > r/4$ &&\\


&& & \\ \hline
\end{tabular}
}
\end{center}
\caption{\small The $\ident^*$-Identification Protocol}
\label{fig:protocol}
\end{figure}

Completeness of the $\ident^*$ protocol follows easily from the completeness of the underlying $\ident$ scheme.

\parag{Security against Classical and Quantum Adversaries.}
To prove security of our protocol, we need to show that an adversary $\adv$ after interacting with an honest prover $\prov^*$, can subsequently not impersonate $\prov^*$ such that $\veri^*$ will accept the identification.
Let $\ell$ be such that $\ell > 6 \log(\alpha)$ where $\alpha$ is the constant reflecting the bounded speed-up in parallel computing from Assumption (1). By assuming that $\ident=(\ikgen,\allowbreak\prov,\veri)$ is a quantum-immune identification scheme, we can show that $\ident^*$ is secure in the standard random oracle model against classical and quantum adversaries.

The main idea is that for the standard random oracle model, the ability of finding collisions is bounded by the birthday attack. Due to the constraint of granting only time $O(\sqrt[3]{2^\ell})$ for the collision search and setting $\ell > 6 \log(\alpha)$, even an adversary with quantum or parallel power is not able to make at least $\sqrt{2^\ell}$ random oracle queries. Thus, $\adv$ has only negligible probability to respond in more than $1/4$ of $r$ rounds with a collision.

When considering only classical adversaries, we can also securely instantiate the random oracle by a hash function $H$ that provides near-collision-resistance close to the birthday bound. Note that this property is particularly required from the SHA-3 candidates~\cite{sha3}.

However, for adversaries $\qadv$ with quantum power, such an instantiation is not possible for \emph{any} hash function. This stems from the fact that $\qadv$ can locally evaluate a hash function on quantum states which in turns allows to apply Grover's search algorithm. Then an adversary will find a collision in time
$\sqrt[3]{2^\ell}$ with probability at least $1/2$, and thus will be able to provide $r/4$ collisions with noticeable probability. The same result holds in the quantum-accessible random oracle model, since Grover's algorithm only requires (quantum) black-box access to the hash function.

Formal proofs of all statements are given in \iffullversion Appendix~\ref{app:negative}. \else the full version of the paper~\cite{DaFiLeSc10}. \fi


\section{Signature Schemes in the Quantum-Accessible Random Oracle Model}\label{sig}

We now turn to proving security in the quantum-accessible random oracle model.  We present general conditions for when a proof of security in the classical random oracle model implies security in the quantum-accessible random oracle model.  The result in this section applies to signatures whose classical proof of security is a \emph{\os\ reduction} as defined next. Roughly speaking, history-freeness means that the classical proof of security simulates the random oracle and signature oracle in a \os\ fashion.  That is, its responses to queries do not depend on responses to previous queries or the query number.   We then show that a number of classical signature schemes have a \os\ reduction thereby proving their security in the quantum-accessible random oracle model.

\begin{definition}[\OS\ Reduction]\label{def:signatureReduction}
	A random oracle model signature scheme $\mathcal{S}=(G,S^O,V^O)$ has a \os\ reduction from a hard problem $P=(\myGame_P,0)$ if there is a proof of security that uses a classical PPT adversary $A$ for $\mathcal{S}$ to construct a classical PPT algorithm $B$ for problem $P$ such that:
	\begin{MyItemize}
		\item Algorithm $B$ for $P$ contains four explicit classical algorithms: $\START$, $\RAND^{O_c}$, $\SIGN^{O_c}$, and $\FINISH^{O_c}$.  The latter three algorithms have access to a shared classical random oracle $O_c$.  These algorithms, except for $\RAND^{O_c}$, may also make queries to the challenger for problem $P$.   The algorithms are used as follows:
			\begin{MyItemize}
				\item[(1)] Given an instance $x$ for problem $P$ as input, algorithm $B$ first runs $\START(x)$ to obtain $(\pk,z)$ where $\pk$ is a signature public key and $z$ is private state to be used by $B$.   Algorithm $B$ sends $\pk$ to $A$ and plays the role of challenger to $A$.
				\item[(2)] When $A$ makes a classical random oracle query to $O(r)$, algorithm $B$ responds with $\RAND^{O_c}(r,z)$.   Note that $\RAND$ is given the current query as input, but is unaware of previous queries and responses.
				\item[(3)] When $A$ makes a classical signature query $S(\sk,m)$, algorithm $B$ responds with $\SIGN^{O_c}(m,z)$.
				\item[(4)] When $A$ outputs a signature forgery candidate $(m,\sigma)$, algorithm $B$ outputs $\FINISH^{O_c}(m,\sigma,z)$.
			\end{MyItemize}
		\item There is an efficiently computable function $\INSTANCE(\pk)$ which produces an instance $x$ of problem $P$ such that $\START(x)=(\pk,z)$ for some~$z$.  Consider the process of first generating $(\sk,\pk)$ from $G(1^n)$, and then computing $x=\INSTANCE(\pk)$.  The distribution of $x$ generated in this way is negligibly close to the distribution of $x$ generated in $\myGame_P$.
		\item For fixed $z$, consider the classical random oracle $O(r)=\RAND^{O_c}(r,z)$.  Define a quantum oracle $O_\textrm{\rm quant}$, which transforms a basis element $\ket{x,y}$ into $\ket{x,y\oplus O(x)}$.  We require that $O_\textrm{\rm quant}$ is quantum computationally indistinguishable from a random oracle. 
		\item $\SIGN^{O_c}$ either aborts (and hence $B$ aborts) or it generates a valid signature relative to the oracle $O(r)=\RAND^{O_c}(r,z)$ with a distribution negligibly close to the correct signing algorithm.  The probability that none of the signature queries abort is non-negligible.
		\item If $(m,\sigma)$ is a valid signature forgery relative to the public key $\pk$ and oracle $O(r)=\RAND^{O_c}(r,z)$ then the output of $B$ (i.e. $\FINISH^{O_c}(m,\sigma,z)$) causes the challenger for problem $P$ to output $1$ with non-negligible probability.\qed
	\end{MyItemize} \end{definition}


\noindent We now show that \os\ reductions imply security in the quantum setting.
	
	\begin{theorem}\label{metasignature}Let $\mathcal{S}=(G,S,V)$ be a signature scheme.  Suppose that there is a \os\ reduction that uses a classical PPT adversary $A$ for $\mathcal{S}$ to construct a PPT algorithm $B$ for a problem $P$.  Further, assume that $P$ is hard for polynomial-time quantum computers, and that quantum-accessible pseudorandom functions exist.  Then $\mathcal{S}$ is secure in the quantum-accessible random oracle model.\end{theorem}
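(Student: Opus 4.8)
The plan is to run the given classical \os\ reduction ``in superposition'': starting from a quantum adversary $\qadv$ that breaks existential unforgeability of $\mathcal{S}$ in the quantum-accessible random oracle model, I would build a polynomial-time quantum algorithm $\B_Q$ for $P$, contradicting the assumed hardness of $P$. The only place the classical reduction does not carry over verbatim is the simulation of the random oracle: a single quantum query of $\qadv$ is a superposition over exponentially many inputs, so $\B_Q$ cannot answer it by lazily sampling the auxiliary oracle $O_c$ that lives inside $\RAND$. This is exactly where the hypothesis that quantum-accessible pseudorandom functions exist is used.

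\noindent\textbf{Constructing $\B_Q$.} On input an instance $x$ of $P$, $\B_Q$ first samples a key $k$ for a quantum-accessible pseudorandom function $\prf$, runs $(\pk,z)\gets\START(x)$, and forwards $\pk$ to $\qadv$ while playing the challenger for $\qadv$. Since $\START,\RAND,\SIGN,\FINISH$ are classical PPT algorithms, the map $r\mapsto\RAND^{\prf(k,\cdot)}(r,z)$ is an efficiently computable classical function, so $\B_Q$ can build the unitary $U\colon\ket{r,y}\mapsto\ket{r,\,y\xor\RAND^{\prf(k,\cdot)}(r,z)}$ --- run a reversible implementation of $\RAND$ on the input register with $k,z$ hardwired, answering $\RAND$'s internal calls to $O_c$ coherently by evaluating $\prf(k,\cdot)$, copy the result onto the output register, then uncompute --- and apply $U$ to each quantum random-oracle query of $\qadv$. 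A classical signing query $m$ is answered by $\SIGN^{\prf(k,\cdot)}(m,z)$, with $\B_Q$ aborting if $\SIGN$ aborts; when $\qadv$ outputs a forgery candidate $(m,\sigma)$, $\B_Q$ outputs $\FINISH^{\prf(k,\cdot)}(m,\sigma,z)$, relaying any queries $\START,\SIGN,\FINISH$ make to the challenger for $P$. Since $\qadv$ makes polynomially many signing queries and every sub-algorithm runs in polynomial time, $\B_Q$ is a legitimate polynomial-time quantum algorithm for $P$.

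\noindent\textbf{Analysis.} Assume $\qadv$ forges with non-negligible probability $\epsilon$. First I would replace $\prf(k,\cdot)$ by a truly random function $O_c$ throughout $\B_Q$; the resulting (inefficient) experiment accesses $O_c$ only as an oracle --- coherently through $U$ and classically through $\SIGN$ and $\FINISH$ --- and whether $\B_Q$ wins $\myGame_P$ is efficiently checkable, so by security of the quantum-accessible PRF this change costs only a negligible amount in $\B_Q$'s success probability (this is the sole use of the PRF hypothesis; alternatively one may state security assuming $P$ is hard even relative to a quantum-accessible random oracle, cf.\ the discussion in Section~\ref{quantum-access}). In the idealized experiment $\qadv$ interacts with random oracle $O(r)=\RAND^{O_c}(r,z)$, signing oracle $\SIGN^{O_c}(\cdot,z)$, and public key $\pk$ obtained from $\START(x)$ with $x$ drawn as in $\myGame_P$. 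Now I invoke the guarantees of a \os\ reduction: the $\INSTANCE$ property makes $\pk$ distributed negligibly close to a key of $G$; the third bullet makes $O$ quantum computationally indistinguishable from a true random oracle; and the fourth bullet makes the simulated signatures --- conditioned on the non-negligibly likely event that no signing query aborts --- distributed negligibly close to genuine signatures relative to $O$. Composing these, $\qadv$'s view in the idealized experiment conditioned on no abort is negligibly close in distribution to its view in the real unforgeability game, so $\qadv$ outputs a pair $(m,\sigma)$ that is a valid forgery relative to $\pk$ and the oracle it is interacting with (namely $O$) with probability at least $\epsilon-\negl(n)$. By the fifth bullet, each such pair makes $\FINISH^{O_c}(m,\sigma,z)$ cause the challenger for $P$ to accept with non-negligible probability, so $\B_Q$ wins $\myGame_P$ with non-negligible probability --- contradicting the hardness of $P$. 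Hence no such $\qadv$ exists and $\mathcal{S}$ is secure in the quantum-accessible random oracle model.

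\noindent\textbf{Where the difficulty lies.} The main obstacle is the construction step: showing that the classical reduction's random-oracle simulation can be carried out coherently at all. History-freeness is precisely what enables it --- because $\RAND$ answers a query using only the query itself together with the fixed state $z$, and not the history of earlier queries or the query index, the map $r\mapsto\RAND^{O_c}(r,z)$ is a well-defined function that compiles into the unitary $U$, whereas a history-dependent simulator (one that lazily programs answers, keeps a counter, or otherwise reacts to previous queries) has no function to evaluate on a superposition, which is exactly the obstruction sketched in the introduction. A secondary subtlety is that the oracle $O$ plays three roles at once --- $\qadv$'s random oracle, the reference oracle for signature validity, and the oracle inside $\FINISH$ --- so the indistinguishability of the third bullet must be applied to the whole (efficient, once $\prf$ is in place) experiment rather than to $\qadv$ in isolation, and the abort events of $\SIGN$ must be carried through the bookkeeping, so that all the substitutions stay mutually consistent.
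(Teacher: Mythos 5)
Your proposal is correct and matches the paper's own argument in all essential respects: it simulates the reduction's oracle coherently via a quantum-accessible PRF, invokes the same four guarantees of a \os\ reduction (the $\INSTANCE$ distribution, indistinguishability of $\RAND^{O_c}$, near-correct non-aborting $\SIGN$, and $\FINISH$ converting a forgery into a win for $P$), and derives a contradiction with the hardness of $P$. The only difference is presentational --- the paper walks forward through a sequence of games ($\myGame_0$ through $\myGame_3$) before assembling $B_Q$, whereas you build $B_Q$ first and hybridize backwards --- which is the same proof in a different order.
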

		
	\begin{proof}
	The \os\ reduction includes five (classical) algorithms $\START$, $\RAND$, $\SIGN$, $\FINISH$, and $\INSTANCE$, as in Definition~\ref{def:signatureReduction}.
	We prove the quantum security of $\mathcal{S}$ using a sequence of games, where the first game is the standard quantum signature game with respect to $\mathcal{S}$.
	
\medskip\noindent{\bf Game 0.} Define $\myGame_0$ as the game a quantum adversary $A_Q$ plays for problem $\text{Sig-Forge}(\mathcal{S})$.  Assume towards contradiction that $A_Q$ has a non-negligible advantage.

\medskip\noindent{\bf Game 1.} Define $\myGame_1$ as the following modification to $\myGame_0$: after the challenger generates $(\sk,\pk)$, it computes $x \gets\INSTANCE(\pk)$ as well as $(\pk,z) \gets \START(x)$.  Further, instead of answering $A_Q$'s quantum random oracle queries with a truly random oracle, the challenger simulates for $A_Q$ a quantum-accessible random oracle $O_\textrm{\rm quant}$ as an oracle that maps a basis element $\ket{x,y}$ into the element $\ket{x,y\oplus \RAND^{O_c}(x,z)}$, where $O_c$ is a truly (classical) random oracle.
The \os\ guarantee on $\RAND$ ensures that $O_\textrm{\rm quant}$ is computationally indistinguishable from random for quantum adversaries.  Therefore, the success probability of $A_Q$ in $\myGame_1$ is negligibly close to its success probability in $\myGame_0$, and hence is non-negligible.

\medskip\noindent{\bf Game 2.} Modify the challenger from $\myGame_1$ as follows: instead of generating $(\sk,\pk)$ and computing $x=\INSTANCE(\pk)$, start off by running the challenger for problem $P$.  When that challenger sends $x$, then start the challenger from $\myGame_1$ using this $x$.  Also, when $A_Q$ asks for a signature on $m$, answer with $\SIGN^{O_c}(m,z)$.  First, since $\INSTANCE$ is part of a \os\ reduction, this change in how we compute $x$ only negligibly affects the distribution of $x$, and hence the behavior of $A_Q$.  Second, as long as all signing algorithms succeed, changing how we answer signing queries only negligibly affects the behavior of $A_Q$.  Thus, the probability that $A_Q$ succeeds is the product of the following two probabilities: \begin{MyItemize}\item The probability that all of the signing queries are answered without aborting. \item The probability that $A_Q$ produces a valid forgery given that the signing queries were answered successfully.\end{MyItemize} The first probability is non-negligible by assumption, and the second is negligibly close to the success probability of $A_Q$ in $\myGame_1$, which is also non-negligible.  This means that the success probability of $A_Q$ in $\myGame_2$ is non-negligible.

\medskip\noindent{\bf Game 3.} Define $\myGame_3$ as in $\myGame_2$, except for two modifications to the challenger:  First, it generates a key $k$ for the quantum-accessible $\PRF$.  Then, to answer a random oracle query $O_\textrm{\rm quant}(\ket{\phi})$, the challenger applies the unitary transformation that takes a basis element $\ket{x,y}$ into $\ket{x,y\oplus\PRF(k,x)}$.  If the success probability in $\myGame_3$ was non-negligibly different from that of $\myGame_2$, we could construct a distinguisher for $\PRF$ which plays both the role of $A_Q$ and the challenger.  Hence, the success probability in $\myGame_3$ is negligibly close to that of $\myGame_2$, and hence is also non-negligible.

Given a quantum adversary that has non-negligible advantage in Game~3 we construct a
quantum algorithm $B_Q$ that breaks problem $P$.  When $B_Q$ receives instance $x$ from the challenger for problem $P$, it computes $(\pk,z) \gets \START(x)$ and generates a key $k$ for $\PRF$.  Then, it simulates $A_Q$ on $\pk$.  $B_Q$ answers random oracle queries using a quantum-accessible function built from $\RAND^{\PRF(k,\cdot)}(\cdot,z)$ as in Game~1.  It answers signing queries using $\SIGN^{\PRF(k,\cdot)}(\cdot,z)$.   Then, when $A_Q$ outputs a forgery candidate $(m,\sigma)$, $B_Q$ computes $\FINISH^{\PRF(k,\cdot)}(m,\sigma,z)$, and returns the result to the challenger for problem $P$.

Observe that the behavior of $A_Q$ in $\myGame_3$ is identical to that as a subroutine of $B_Q$.  Hence, $A_Q$ as a subroutine of $B_Q$ will output a valid forgery $(m,\sigma)$ with non-negligible probability.  If $(m,\sigma)$ is a valid forgery, then since $\FINISH$ is part of a \os\ reduction, $\FINISH^{\PRF(k,\cdot)}(m,\sigma,z)$ will cause the challenger for problem $P$ to accept with non-negligible probability.  Thus, the probability that $P$ accepts is also non-negligible, contradicting our assumption that $P$ is hard for quantum computers.

Hence we have shown that any polynomial quantum algorithm has negligible advantage against problem $\text{Sig-Forge}(\mathcal{S})$ which completes the proof. \qed
\end{proof}

We note that, in every step of the algorithm, the adversary $A_Q$ remains in a pure state.  This is because, in each game, $A_Q$'s state is initially pure (since it is classical), and every step of the game either involves a unitary transformation, a partial measurement, or classical communication.  In all three cases, if the state is pure before, it is also pure after.

We also note that we could have stopped at $\myGame_2$ and assumed that the cryptographic problem $P$ is hard relative to a (quantum-accessible)
random oracle. Assuming the existence of quantum-accessible pseudorandom functions allows us to draw the same conclusion in the standard
(i.e., non-relativized) model at the expense of an extra assumption.

\subsection{Secure Signatures From Preimage Sampleable Trapdoor Functions (PSF)}\label{sec:sigsfrompsf}

We now use Theorem \ref{metasignature} to prove the security of the Full Domain Hash signature scheme when instantiated with a preimage sampleable trapdoor function (PSF), such as the one proposed in~\cite{Gentry2008}.   Loosely speaking, a PSF $\mathcal{F}$ is a tuple of PPT algorithms $(G,\text{Sample},f,f^{-1})$ where $G(\cdot)$ generates a key pair $(\pk,\sk)$,\ \ $f(\pk,\cdot)$ defines an efficiently computable function,\  $f^{-1}(\sk,y)$ samples from the set of pre-images of $y$, and $\text{Sample}(\pk)$ samples $x$ from the domain of $f(\pk,\cdot)$ such that $f(\pk,x)$ is statistically close to uniform in the range of $f(\pk,\cdot)$.   The PSF of~\cite{Gentry2008}  is not only one-way, but is also collision resistant.

\medskip
Recall that the full domain hash (FDH) signature scheme~\cite{BR1} is defined as follows:
\begin{definition}[Full Domain Hash] \label{def:FDH}
Let $\mathcal{F}=(G_0,f,f^{-1})$ be a trapdoor permutation, and $O$ a hash function whose range is the same as the range of $f$.  The full domain hash signature scheme is $\mathcal{S}=(G,T,V)$ where:
	\begin{MyItemize}
		\item $G=G_0$
		\item $S^O(\sk,m)=f^{-1}(\sk,O(m))$
		\item $V^O(\pk,m,\sigma)=\begin{cases}1&\text{if }O(m)=f(\pk,\sigma)\\0&\text{otherwise}\end{cases}$
	\end{MyItemize}\end{definition}


Gentry et al.~\cite{Gentry2008} show that the FDH signature scheme can be instantiated with a PSF $\mathcal{F}=(G,\text{Sample},f,f^{-1})$ instead of a trapdoor permutation.  Call the resulting system FDH-PSF.  They prove that FDH-PSF is secure against classical adversaries, provided that the pre-image sampling algorithm used during signing is derandomized (e.g. by using a classical PRF to generate its random bits).  Their reduction is not quite \os, but we show that it can be made \os.

Consider the following reduction from a classical adversary~$A$ for the FDH-PSF scheme $\mathcal{S}$ to a classical collision finder $B$ for $\mathcal{F}$:
\begin{MyItemize}
	\item On input $\pk$, $B$ computes $\START(\pk) \deq (\pk,\pk)$, and simulates $A$ on $\pk$.
  \item When $A$ queries $O(r)$, $B$ responds with \\ \mbox{} \hspace{4cm}
  		$\RAND^{O_c}(r,\pk) \deq f(\pk,\text{Sample}(1^n;O_c(r)))$.
  \item When $A$ queries $S(\sk,m)$, $B$ responds with \\ \mbox{} \hspace{4cm}
  		$\SIGN^{O_c}(m,\pk) \deq Sample(1^n;O_c(m))$.
  \item When $A$ outputs $(m,\sigma)$, $B$ outputs \\ \mbox{} \hspace{4cm}
  		$\FINISH^{O_c}(m,\sigma,\pk) \deq \big(Sample\big(1^n;O_c(m)\big),\sigma\big)$.
\end{MyItemize}
In addition, we define $\INSTANCE(\pk) \deq \pk$.  Algorithms $\INSTANCE$ and $\START$ trivially satisfy the requirements of history-freeness (Definition~\ref{def:signatureReduction}).   Before showing that the above reduction is in \os\ form, we need the following technical lemma whose proof is given in the \iffullversion appendix \else full version~\cite{quantfull}\fi .

	\begin{lemma}\label{oraclerand}
		Say $A$ is a quantum algorithm that makes $q$ quantum oracle queries.  Suppose further that we draw the oracle $O$ from two distributions.  The first is the random oracle distribution.  The second is the distribution of oracles where the value of the oracle at each input $x$ is identically and independently distributed by some distribution $D$ whose variational distance is within $\epsilon$ from uniform.  Then the variational distance between the distributions of outputs of $A$ with each oracle is at most $4 q^2 \sqrt{\epsilon}$.
	\end{lemma}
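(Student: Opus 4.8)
\medskip\noindent\textbf{Proof plan.} The plan is to reduce the statement to a statement about the Euclidean distance between the final states of $A$ under the two oracle distributions, and then to control that distance by a coupling together with the oracle‑perturbation bound of Lemma~\ref{oraclechange}. First I would assume, without loss of generality, that $A$ is unitary and performs its measurement only at the very end, so that for every fixed oracle $O$ the pre‑measurement state $\ket{\psi_O}$ is pure. By the Bennett et al.\ lemma relating Euclidean distance to statistical distance (the first lemma quoted above), it then suffices to couple a uniform random oracle $O_1$ with a $D$‑random oracle $O_2$ so that the expected Euclidean distance $\E\bigl[\magn{\ket{\psi_{O_1}}-\ket{\psi_{O_2}}}\bigr]$ is at most $q^2\sqrt{\epsilon}$; since the expectation over the coupling also averages over the oracle, this yields statistical distance at most $4q^2\sqrt{\epsilon}$ between the outputs.

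The coupling is the natural one: for each input $x$ independently, draw the pair $(O_1(x),O_2(x))$ from a maximal coupling of the uniform distribution and $D$, and let $S=\{x:O_1(x)\neq O_2(x)\}$. Since the variational distance of $D$ from uniform is at most $\epsilon$, each input lies in $S$ independently with probability at most $\epsilon$. Conditioned on the coupling, $O_1$ and $O_2$ are two fixed oracles that agree outside $S$. Running $A$ with $O_1$, let $\ket{\phi_t}$ be its state just before the $t$‑th query and put $w_S=\sum_{t=1}^{q}\sum_{x\in S}q_x(\ket{\phi_t})$, the total query probability $A$ places on $S$. The hybrid argument underlying Lemma~\ref{oraclechange} — swapping the oracle answers on $S$ one query at a time, the $t$‑th swap perturbing the running state by at most $2\norm{\Pi_S\ket{\phi_t}}$, where $\Pi_S$ projects the query register onto $S$ — gives $\magn{\ket{\psi_{O_1}}-\ket{\psi_{O_2}}}\le\sqrt{q\,w_S}$ for this fixed coupling; the swap replaces uniform answers by $D$‑distributed answers, but the estimate only uses that the two oracles differ on $S$, so this causes no difficulty.

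It then remains to take the expectation over the coupling. By concavity of the square root, $\E[\sqrt{q\,w_S}]\le\sqrt{q\,\E[w_S]}$, so I need a bound on $\E[w_S]=\sum_{t,x}\E\bigl[q_x(\ket{\phi_t})\,\mathbf{1}[x\in S]\bigr]$. I expect this to be the main obstacle, and the source of the second factor of $q$: the query state $\ket{\phi_t}$ is a function of $O_1$ and is therefore correlated with the random set $S$ — an unconstrained quantum algorithm could, in principle, steer its queries toward inputs whose (superposed, as yet unqueried) oracle value looks atypical for $D$ — so one cannot simply factor $\E[q_x(\ket{\phi_t})\,\mathbf{1}[x\in S]]=\E[q_x(\ket{\phi_t})]\cdot\Pr[x\in S]$. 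The way I would handle this is a further hybrid that, at each of the $q$ query steps, re‑randomizes $O_1$ on $S$ using fresh values independent of $S$; each such re‑randomization is again controlled by Lemma~\ref{oraclechange} and costs at most $\sqrt{q}$ times the now‑decorrelated query weight on $S$, which is at most $\epsilon$ per step. Collecting these estimates and tracking the square roots through, $\E[\sqrt{q\,w_S}]$ is bounded by $q^2\sqrt{\epsilon}$ (in fact with room to spare), and feeding this into the factor‑$4$ Euclidean‑to‑statistical conversion gives the claimed bound $4q^2\sqrt{\epsilon}$.
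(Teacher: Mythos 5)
Your overall plan is, at its core, the same as the paper's, just organized globally rather than per output value: the paper builds the $D$-oracle from a uniform oracle by a pointwise replacement rule that is exactly a maximal coupling, your $\sqrt{q\,w_S}$ step is its use of Lemma~\ref{oraclechange}, and the Jensen-plus-factor-$4$ finish is identical. You also correctly isolate the crux — the query weight on the disagreement set $S$ is correlated with $S$, so it cannot simply be bounded by $\epsilon q$ — which is the content of the paper's separate technical lemma (it bounds the expected query weight on $O^{-1}(y)$ for each fixed $y$ by $2q^3 2^{-m}$ and then sums over the $y$ that $D$ underweights, giving an expected disagreement weight of order $q^3\epsilon$).

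The gap is in the decorrelation hybrid, which is where the real work lies. As described, re-randomizing $O_1$ on $S$ with fresh values "independent of $S$" does not decorrelate the run from $S$ if the fresh values are uniform: under the maximal coupling, membership of $x$ in $S$ is correlated with the oracle's value at $x$ (off $S$ the value follows the normalized overlap distribution $\min(\mathrm{uniform},D)$, on $S$ it does not), so the re-randomized oracle still carries information about $S$ and the claim "decorrelated query weight on $S$ is at most $\epsilon$ per step" is unjustified — this residual correlation is precisely what the hybrid is supposed to remove. The repair is to re-randomize with fresh draws from the overlap distribution (equivalently, sample a base oracle i.i.d.\ from the overlap distribution independently of $S$ and obtain $O_1,O_2$ by resampling on $S$); then the base run is genuinely independent of $S$. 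Second, Lemma~\ref{oraclechange} by itself only bounds the Euclidean distance between the states of the real and base runs; to convert that into a bound on $w_S$, the query weight of the \emph{real} run on $S$, you need the additional geometric fact (proved as a separate lemma in the paper) that states at Euclidean distance $\gamma$ assign measurement probabilities whose square roots differ by at most $\gamma$, so that query by query $\sqrt{\rho_t}\le\sqrt{\delta_t}+\gamma_t$ with $\E[\delta_t]\le\epsilon$ and $\gamma_t\le\sqrt{(t-1)\sigma_t}$; it is exactly this transfer that yields $\E[w_S]=O(q^3\epsilon)$ and hence the stated $4q^2\sqrt{\epsilon}$. With these two repairs your argument becomes a correct, and arguably cleaner, variant of the paper's proof.
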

	
	\smallskip\noindent {\bf Proof Sketch.}
	We show that there is a way of moving from $O$ to $O_D$ such that the oracle is only changed on inputs in a set $K$ where the sum of the amplitudes squared of all $k\in K$, over all queries made by $A$, is small.  Thus, we can use Lemma~\ref{oraclechange} to show that the expected behavior of any algorithm making polynomially many quantum queries to $O$ is only changed by a small amount.   \qed
	
	Lemma~\ref{oraclerand} shows that we can replace a truly random oracle $O$ with an oracle $O_D$ distributed according to distribution $D$ without impacting $A$, provided $D$ is close to uniform.  Note, however, that while this change only affects the output of $A$ negligibly, the effects are larger than in the classical setting.  If $A$ only made classical queries to $O$, a simple hybrid argument shows that changing to $O_D$ affects the distribution of the output of $A$ by at most $q \epsilon$, as opposed to $4 q^2 \sqrt{\epsilon}$ in the quantum case.  Thus, quantum security reductions that use Lemma~$\ref{oraclerand}$ will not be as tight as their classical counterparts.	 
	
	\medskip\noindent	We now show that the reduction above is \os.
	
\begin{theorem} \label{thm:statsig}
The reduction above applied to FDH-PSF is \os.
\end{theorem}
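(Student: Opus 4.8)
\quad The plan is to verify, one at a time, the five requirements of Definition~\ref{def:signatureReduction} for the reduction $B=(\START,\RAND^{O_c},\SIGN^{O_c},\FINISH^{O_c})$ displayed above, taking $\INSTANCE(\pk)\deq\pk$ and letting the target problem $P$ be the collision problem $\Col(\mathcal{F})$ for the PSF $\mathcal{F}=(G,\text{Sample},f,f^{-1})$. The first two items are essentially immediate: each of $\RAND$, $\SIGN$, $\FINISH$ takes only the current query together with the fixed state $z=\pk$ --- and nothing about the query number or earlier queries --- so the interface has the mandated shape; $\START(\INSTANCE(\pk))=\START(\pk)=(\pk,\pk)$, so the required $z$ exists; and generating $(\sk,\pk)\gets G(1^n)$ and setting $x=\INSTANCE(\pk)=\pk$ yields a public key distributed exactly as the challenge key of $\myGame_{\Col}(\mathcal{F})$, with no gap at all.

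Next come the two conditions that rest on the statistical properties of the PSF. With $z=\pk$ fixed, $O(r)=\RAND^{O_c}(r,\pk)=f(\pk,\text{Sample}(1^n;O_c(r)))$; since $O_c$ is a true random oracle its outputs at distinct points are independent, so the $O(r)$ are i.i.d., each distributed as $f(\pk,\text{Sample}(\pk))$, which by the PSF property is within negligible statistical distance $\epsilon$ of uniform on the range of $f(\pk,\cdot)$. Lemma~\ref{oraclerand} then shows that any $q$-query quantum algorithm distinguishes the induced $O_{\mathrm{quant}}$ from a true random oracle with advantage at most $4q^2\sqrt{\epsilon}=\negl(n)$, which is the required indistinguishability (in fact statistically, against polynomially many queries). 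For the signing condition, $\SIGN^{O_c}(m,\pk)=\text{Sample}(1^n;O_c(m))$ never aborts, so the probability that no signing query aborts is $1$, and it always outputs a valid signature relative to $O(r)=\RAND^{O_c}(r,\pk)$ since $f(\pk,\text{Sample}(1^n;O_c(m)))=\RAND^{O_c}(m,\pk)=O(m)$ makes verification accept; for the distribution, fixing the oracle $O$, conditioned on $O(m)=y$ the value $\text{Sample}(1^n;O_c(m))$ is distributed as $\text{Sample}(\pk)$ conditioned on $f(\pk,\cdot)=y$, which by the PSF preimage-sampling property is statistically close to $f^{-1}(\sk,y)$, i.e.\ to the output of the derandomized honest signer $S^O(\sk,m)$.

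The forgery-extraction condition is the core. If $(m,\sigma)$ is a valid forgery relative to $\pk$ and $O(r)=\RAND^{O_c}(r,\pk)$, then $\FINISH^{O_c}(m,\sigma,\pk)=(x_1,\sigma)$ with $x_1=\text{Sample}(1^n;O_c(m))$; both are domain elements and $f(\pk,x_1)=\RAND^{O_c}(m,\pk)=O(m)=f(\pk,\sigma)$ by validity of the forgery, so the pair collides under $f(\pk,\cdot)$. It remains to argue $x_1\neq\sigma$ with non-negligible probability: a valid forgery requires that $m$ was never submitted to the signing oracle, so $A$ never saw $x_1$, and its entire view is a function of $O$ (and its own randomness) that is independent of $x_1$ given $O(m)$; hence from $A$'s perspective $x_1$ is a fresh preimage drawn from $f^{-1}(\sk,O(m))$. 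The super-logarithmic min-entropy (equivalently, negligible collision probability) of that preimage distribution --- part of the PSF guarantee --- forces $\Pr[\sigma=x_1]$ to be negligible, so $\FINISH$ returns a genuine $\Col(\mathcal{F})$ collision with overwhelming probability and the challenger accepts. With all five items checked, the reduction is \os.

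I expect the last item to be the main obstacle: one must carefully justify that $x_1$ still carries enough entropy conditioned on the adversary's view, which leans on $m$ being unsigned, on $f^{-1}(\sk,O(m))$ having $\omega(\log n)$ min-entropy, and on the fact that replacing a true random oracle by the $\text{Sample}$-derandomized one perturbs $A$ only negligibly (a hybrid argument classically, Lemma~\ref{oraclerand} in the quantum analysis). A secondary subtlety is the signing-distribution claim, which needs the strong PSF property that $f^{-1}(\sk,\cdot)$ samples the correct conditional preimage distribution rather than merely \emph{some} preimage of its input.
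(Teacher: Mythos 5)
Your proof is correct and follows essentially the same route as the paper: the $\RAND$ condition via the PSF's statistical closeness to uniform together with Lemma~\ref{oraclerand}, and the $\SIGN$/$\FINISH$ conditions via consistency of $\text{Sample}(1^n;O_c(m))$ with the simulated oracle and the preimage min-entropy bound $\Pr[\sigma=x_1]\leq 2^{-E}$. The only difference is presentational: the paper cites Gentry et al.\ for those last two facts, whereas you spell out the underlying argument (including the no-abort and distribution-of-signatures checks), which is exactly the cited analysis.
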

\begin{proof}
	The definition of a PSF implies that the distribution of $f(\pk,\text{Sample}(1^n))$ is within $\epsilon_{\text{sample}}$ of uniform, for some negligible $\epsilon_{\text{sample}}$.   Now, since $O(r)=\RAND^{O_c}(r,\pk)=f(\pk,\text{Sample}(1^n;O_c(r)))$ and $O_c$ is a true random oracle, the quantity $O(r)$ is distributed independently according to a distribution that is $\epsilon_{\text{sample}}$ away from uniform.  Define a quantum oracle $O_\textrm{\rm quant}$ which transforms the basis state $\ket{x,y}$ into $\ket{x,y\oplus O(x)}$.  Using Lemma~\ref{oraclerand}, for any algorithm $B$ making $q$ random oracle queries, the variational distance between the probability distributions of the outputs of $B$ using a truly random oracle and the ``not-quite'' random oracle $O_\textrm{\rm quant}$ is at most $4q^2\sqrt{\epsilon_{\text{sample}}}$, which is still negligible.  Hence, $O_\textrm{\rm quant}$ is computationally indistinguishable from random.
	
	Gentry et al.~\cite{Gentry2008} also show that $\SIGN^{O_c}(m,\pk)$ is consistent with $\RAND^{O_c}(\cdot,\pk)$ for all queries, and that if $A$ outputs a valid forgery $(m,\sigma)$, $\FINISH^{O_c}(m,\sigma,\pk)$ produces a collision for $\mathcal{F}$ with probability $1-2^{-E}$, where $E$ is the minimum over all $y$ in the range of $f(\pk,\cdot)$ of the min-entropy of the distribution on $\sigma$ given $f(\pk,\sigma)=y$.  The PSF of Gentry et al.~\cite{Gentry2008} has super-logarithmic min-entropy, so $1-2^{-E}$ is negligibly close to 1, though any constant non-zero min-entropy will suffice to make the quantity a non-negligible fraction of 1.\qed
\end{proof}

	We note that the security proof of Gentry et al.~\cite{Gentry2008} is a tight reduction in the following sense: if the advantage of an adversary $A$ for $\mathcal{S}$ is $\epsilon$, the reduction gives a collision finding adversary $B$ for $\mathcal{F}$ with advantage negligibly close to $\epsilon$, provided that the lower bound over $y$ in the range of $f(\pk,\cdot)$ of the min-entropy of $\sigma$ given $f(\pk,\sigma)=y$ is super-logarithmic.  If the PSF has a min-entropy of 1, the advantage of $B$ is still $\epsilon/2$.

\noindent
The following corollary, which is the main result of this section, follows from Theorems~{(\ref{metasignature})} and~{(\ref{thm:statsig})}.

\begin{corollary}\label{cor:psf}If quantum-accessible pseudorandom functions exist, and $\mathcal{F}$ is a secure PSF against quantum adversaries, then the FDH-PSF signature scheme is secure in the quantum-accessible random oracle model.
\end{corollary}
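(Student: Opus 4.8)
The plan is to simply chain together the two results already established in this section. The corollary asserts security of FDH-PSF in the quantum-accessible random oracle model under two hypotheses — existence of quantum-accessible pseudorandom functions, and quantum security of the PSF $\mathcal{F}$ — so I would unpack each hypothesis into exactly the form needed to feed Theorem~\ref{metasignature}.

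First I would fix the hard problem to be $P = \Col(\mathcal{F}) = (\myGame_{\Col}(\mathcal{F}),0)$, the collision-finding game for $\mathcal{F}$. Two observations: (i) this is of the required template $(\myGame_P,0)$ with $\alpha_P = 0$, and (ii) by the definition of a secure PSF given in the Cryptographic Primitives subsection, the hypothesis ``$\mathcal{F}$ is a secure PSF against quantum adversaries'' says in particular that $\Col(\mathcal{F})$ is hard for polynomial-time quantum computers. Next I would invoke Theorem~\ref{thm:statsig}, which states precisely that the explicit reduction exhibited above — with $\START(\pk)=(\pk,\pk)$, $\RAND^{O_c}(r,\pk)=f(\pk,\text{Sample}(1^n;O_c(r)))$, $\SIGN^{O_c}(m,\pk)=\text{Sample}(1^n;O_c(m))$, $\FINISH^{O_c}(m,\sigma,\pk)=\big(\text{Sample}(1^n;O_c(m)),\sigma\big)$, and $\INSTANCE(\pk)=\pk$ — is a \os\ reduction (Definition~\ref{def:signatureReduction}) that turns a classical PPT forger $A$ for FDH-PSF into a classical PPT collision finder $B$ for $\mathcal{F}$.

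Finally I would apply Theorem~\ref{metasignature} with this $P$: since (a) there is a \os\ reduction from $\Col(\mathcal{F})$ to FDH-PSF, (b) $\Col(\mathcal{F})$ is hard for polynomial-time quantum computers, and (c) quantum-accessible pseudorandom functions exist, the theorem yields directly that FDH-PSF is secure in the quantum-accessible random oracle model, which is the claim.

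There is essentially no genuine obstacle here — the work has all been done in Theorems~\ref{metasignature} and~\ref{thm:statsig}. The only points deserving a moment's attention are bookkeeping ones, and they are already discharged inside the proof of Theorem~\ref{thm:statsig}: verifying that the PSF security notion lines up with the ``$(\myGame_P,0)$'' hard-problem interface, and noting that derandomizing $\text{Sample}$ through the classical oracle $O_c$ inside $\SIGN$ and $\FINISH$ is exactly what makes those algorithms deterministic and consistent with $O(r)=\RAND^{O_c}(r,\pk)$, as Definition~\ref{def:signatureReduction} demands (with the min-entropy condition on $\mathcal{F}$ ensuring the final probability in the last bullet is non-negligibly close to $1$). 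Hence the corollary is immediate.
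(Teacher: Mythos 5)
Your proposal is correct and matches the paper's own argument: the paper derives this corollary exactly by combining Theorem~\ref{metasignature} with Theorem~\ref{thm:statsig}, taking the hard problem to be collision-finding for the PSF. Your additional remarks on the $(\myGame_P,0)$ interface and the derandomization via $O_c$ are just the bookkeeping already handled in the proof of Theorem~\ref{thm:statsig}.
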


\subsection{Secure Signatures from Claw-Free Permutations}

In this section, we show how to use claw-free permutations to construct three signature schemes that have \os\ reductions and are therefore secure in the quantum-accessible random oracle model.  The first is the standard FDH from Definition~\ref{def:FDH}, but when the underlying permutation is a claw-free permutation.  We adapt the proof of Coron~\cite{Coron2000} to give a \os\ reduction.  The second is the Katz and Wang~\cite{Katz2003} signature scheme, and we also modify their proof to get a \os\ reduction.  Lastly, following Gentry et al.~\cite{Gentry2008}, we note that claw-free permutations give rise to a pre-image sampleable trapdoor function (PSF), which can then be used in FDH to get a secure signature scheme as in Section~\ref{sec:sigsfrompsf}.  The Katz-Wang and FDH-PSF schemes from claw-free permutations give a tight reduction, whereas the Coron-based proof loses a factor of $q_s$ in the security reduction, where $q_s$ is the number of signing queries.

Recall that a claw-free pair of permutations~\cite{Goldwasser1988} is a pair of trapdoor permutations $(\mathcal{F}_1,\mathcal{F}_2)$, where $\mathcal{F}_i=(G_i,f_i,f^{-1}_i)$, with the following properties:
\begin{MyItemize}
	\item $G_1=G_2$.  Define $G=G_1=G_2$.
	\item For any key $\pk$, $f_1(\pk,\cdot)$ and $f_2(\pk,\cdot)$ have the same domain and range.
	\item Given only $\pk$, the probability that any PPT adversary can find a pair $(x_1,x_2)$ such that $f_1(\pk,x_1)=f_2(\pk,x_2)$ is negligible.  Such a pair is called a claw.
\end{MyItemize}

Dodis and Reyzin~\cite{Dodis2003} note that claw-free permutations are a generalization of trapdoor permutations with a random self-reduction.  A random self-reduction is a way of taking a worst-case instance~$x$ of a problem, and converting it into a random instance $y$ of the same problem, such that a solution to $y$ gives a solution to $x$.   Dodis and Reyzin~\cite{Dodis2003} show that any trapdoor permutation with a random self reduction (e.g. RSA) gives a claw-free pair of permutations.

We note that currently there are no candidate pairs of claw-free permutations that are secure against quantum adversaries, but this may change in time.

\subsubsection*{FDH Signatures from Claw-Free Permutations}

Coron~\cite{Coron2000} shows that the Full Domain Hash signature scheme, when instantiated with the RSA trapdoor permutation, has a tighter security reduction than the general Full Domain Hash scheme, in the classical world.  That is, Coron's reduction loses a factor of approximately $q_s$, the number of signing queries, as apposed to $q_h$, the number of hash queries.  Of course, the RSA trapdoor permutation is not secure against quantum adversaries, but his reduction can be applied to any claw-free permutation and is equivalent to a \os\ reduction with similar tightness.

To construct a FDH signature scheme from a pair of claw-free permutations $(\mathcal{F}_1,\mathcal{F}_2)$, we simply instantiate FDH with $\mathcal{F}_1$, and ignore the second permutation $\mathcal{F}_2$, to yield the following signature scheme
\begin{MyItemize}
 \item $G$ is the generator for the pair of claw-free permutations.
 \item $S^O(\sk,m) = f_1^{-1}(\sk,O(m))$
 \item $V^O(\pk,m,\sigma)=1$ if and only if $f_1(\pk,\sigma)=O(m)$.
\end{MyItemize}

We now present a \os\ reduction for this scheme.  The random oracle for this reduction, $O_c(r)$, returns a random pair $(a,b)$, where $a$ is a random element from the domain of $\mathcal{F}_1$ and $\mathcal{F}_2$, and $b$ is a random element from $\{1,...,p\}$ for some $p$ to be chosen later.

We construct \os\ reduction from a classical adversary $A$ for $\mathcal{S}$ to a classical adversary $B$ for $(\mathcal{F}_1,\mathcal{F}_2)$.  Algorithm~$B$,  on input $\pk$, works as follows:
  \begin{MyItemize}
  \item Compute $\START(\pk,y)=(\pk,\pk)$, and simulate $A$ on $\pk$.  Notice that $z=\pk$ is the state saved by $B$.
	\item When $A$ queries $O(r)$, compute $\RAND^{O_c}(r,\pk)$.  For each string $r$, $\RAND$ works as follows: compute $(a,b) \gets O_c(r)$.  If $b=1$, return $f_2(\pk,a)$.  Otherwise, return $f_1(\pk,a)$
	\item When $A$ queries $S(\sk,m)$, compute $\SIGN^{O_c}(m,\pk)$.  $\SIGN$ works as follows: compute $(a,b) \gets O_c(m)$ and return $a$ if $b\neq1$.  Otherwise, fail.
	\item When $A$ returns $(m,\sigma)$, compute $\FINISH^{O_c}(m,\sigma,\pk)$.  $\FINISH$ works as follows: compute $(a,b) \gets O_c(m)$ and output $(\sigma,a)$.
  \end{MyItemize}
  In addition, we have $\INSTANCE(\pk)=\pk$ and $\START(\INSTANCE(\pk))=(\pk,\pk)$, so $\INSTANCE$ and $\START$ satisfy the required properties.

\begin{theorem}The reduction above is in \os\ form.\end{theorem}
\begin{proof}
  $\RAND^{O_c}(r,\pk)$ is completely random and independently distributed, as $f_1(\pk,a)$ and $f_2(\pk,a)$ are both random ($f_b(\pk,\cdot)$ is a permutation and $a$ is truly random).  As long as $b\neq 1$, where $(a,b)=O_c(m)$, $\SIGN^{O_c}(m,\pk)$ will be consistent with $\RAND$.  This is because $V^{\RAND^{O_c}(\cdot,\pk)}(\pk,m,\SIGN^{O_c}(m,\pk))$ outputs 1 if $\RAND^{O_c}(m,\pk)=f_1(\pk,\SIGN^{O_c}(m,\pk))$.  But $\RAND^{O_c}(m,\pk)=f_1(\pk,a)$ (since $b\neq1$), and $\SIGN^{O_c}(m,\pk))=a$.  Thus, the equality holds.  The probability over all signature queries of no failure is $(1-1/p)^{q_{\SIGN}}$.  If we chose $p=q_{\SIGN}$, this quantity is at least $e^{-1}-o(1)$, which is non-negligible.

  Suppose $A$ returns a valid forgery $(m,\sigma)$, meaning $A$ never asked for a forgery on $m$ and $f_1(\sk,\sigma)=\RAND^{O_c}(m,\pk)$.  If $b=1$ (where $(a,b)=O_c(m)$), then we have $f_1(\sk,\sigma)\allowbreak =\RAND^{O_c}(m,\pk)=f_2(\pk,a)$, meaning that $(\sigma,a)$ is a claw.  Since $A$ never asked for a signature on $m$, there is no way $A$ could have figured out $a$, so the case where $b=1$ and $a$ is the preimage of $O(m)$ under $f_2$, and the case where $b\neq 1$ and $a$ is the preimage of $O(m)$ under $f_1$ are indistinguishable.   Thus, $b=1$ with probability $1/p$.  Thus, $B$ converts a valid signature into a claw with non-negligible probability.\qed\end{proof}
  \begin{corollary} If quantum-accessible pseudorandom functions exists, and $(\mathcal{F}_1,\mathcal{F}_2)$ is a pair claw-free trapdoor permutations, then the FDH scheme instantiated with $\mathcal{F}_1$ is secure against quantum adversaries.\end{corollary}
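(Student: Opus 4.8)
The plan is to derive this corollary immediately by combining the theorem just proved with the generic transfer result, Theorem~\ref{metasignature}. First I would phrase claw-finding as a hard problem in the sense of the earlier definition: let $P=(\myGame_P,0)$ be the game in which, on input $1^n$, the challenger runs $G(1^n)$, sends $\pk$ to the adversary, and outputs $1$ exactly when the adversary returns a pair $(x_1,x_2)$ with $f_1(\pk,x_1)=f_2(\pk,x_2)$. Since $\alpha_P=0$, the advantage $\Adv_{A,P}$ equals the winning probability, and by the assumed claw-freeness of $(\mathcal{F}_1,\mathcal{F}_2)$ against quantum adversaries this is negligible for every polynomial-time quantum $A$; hence $P$ is hard for quantum computers.

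Second, I would invoke the theorem proved just above, which establishes that the Coron-style reduction exhibited there --- with its explicit algorithms $\START$, $\RAND^{O_c}$, $\SIGN^{O_c}$, $\FINISH^{O_c}$ together with $\INSTANCE(\pk)=\pk$ --- is a \os\ reduction from $P$ to the FDH scheme $\mathcal{S}$ instantiated with $\mathcal{F}_1$. Concretely, one checks against Definition~\ref{def:signatureReduction} with saved state $z=\pk$: $\RAND$ produces an oracle that is exactly uniformly distributed (so $O_\textrm{\rm quant}$ is trivially indistinguishable from a random oracle), $\SIGN$ is consistent with $\RAND$ whenever $b\neq 1$ and fails on no signing query with probability $(1-1/q_{\SIGN})^{q_{\SIGN}}\ge e^{-1}-o(1)$, which is non-negligible, and for a valid forgery $\FINISH$ outputs a claw with probability $1/q_{\SIGN}$, again non-negligible.

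Third, I would apply Theorem~\ref{metasignature} verbatim: there is a \os\ reduction from the quantum-hard problem $P$ to $\mathcal{S}$, and quantum-accessible pseudorandom functions are assumed to exist, so $\mathcal{S}$ is secure in the quantum-accessible random oracle model, which is precisely the assertion of the corollary. I would also remark, following the discussion after Theorem~\ref{metasignature}, that if one does not wish to assume quantum-accessible PRFs one may instead stop at the relativized statement, since whether a returned pair is a genuine claw can be verified efficiently.

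I do not expect a real obstacle: the mathematical content lies entirely in the preceding theorem and in Theorem~\ref{metasignature}. The only points needing care are the bookkeeping ones --- confirming that the four reduction algorithms plug into the generic algorithm $B$ of Definition~\ref{def:signatureReduction} with $z=\pk$, that $\INSTANCE$ and $\START$ meet the distributional requirement (they do, since both are the identity on $\pk$), and that casting claw-finding with $\alpha_P=0$ makes "advantage" coincide with "winning probability" so that the hardness hypothesis matches the form required by the theorem.
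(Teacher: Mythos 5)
Your proposal is correct and follows exactly the paper's intended route: the corollary is obtained by combining the preceding theorem (that the Coron-style reduction with $z=\pk$, $\INSTANCE(\pk)=\pk$ is \os) with Theorem~\ref{metasignature}, treating claw-finding as a hard problem with $\alpha_P=0$. The bookkeeping points you flag (the distributional requirement on $\INSTANCE$ and $\START$, the non-negligible success probabilities of $\SIGN$ and $\FINISH$) are precisely what the paper's theorem establishes, so nothing further is needed.
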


Note that in this reduction, our simulated random oracle is truly random, so we do not need to rely on Lemma~\ref{oraclerand}.  Hence, the tightness of the reduction will be the same as the classical setting.  Namely, if the quantum adversary $A$ has advantage $\epsilon$ when making $q_\SIGN$ signature queries, $B$ will have advantage approximately $\epsilon/ q_\SIGN$.

\subsubsection*{The Katz-Wang Signature Scheme}

In this section, we consider a variant of FDH due to Katz and Wang~\cite{Katz2003}.  This scheme admits an almost tight security reduction in the classical world.  That is, if an adversary has advantage $\epsilon$, the reduction gives a claw finder with advantage $\epsilon/2$.  Their proof of security is not in \os\ form, but it can be modified so that it is in \os\ form.  Given a pair of trapdoor permutation $(\mathcal{F}_1,\mathcal{F}_2)$, the construction is as follows:
\begin{MyItemize}
	\item $G$ is the key generator for $\mathcal{F}$.
	\item $S^O(\sk,m)=f_1^{-1}(\sk,O(b,m))$ for a random bit $b$.
	\item $V^O(\pk,m,\sigma)$ is 1 if either $f_1(\pk,\sigma)=O(0,m)$ or $f_1(\pk,\sigma)=O(1,m)$
\end{MyItemize}

We construct a \os\ reduction from an adversary $A$ for $\mathcal{S}$ to an adversary $B$ for $(\mathcal{F}_1,\mathcal{F}_2)$.  The random oracle for this reduction, $O_c(r)$, generates a random pair $(a,b)$, where $a$ is a random element from the domain of $\mathcal{F}_1$ and $\mathcal{F}_2$, and $b$ is a random bit.  On input $\pk$, $B$ works as follows:
  \begin{MyItemize}
  \item Compute $\START(\pk,y)=(\pk,\pk)$, and simulate $A$ on $\pk$.  Notice that $z=\pk$ is the state saved by $B$.
	\item When $A$ queries $O(b,r)$, compute $\RAND^{O_c}(b,r,\pk)$.  For each string $(b,r)$, $\RAND$ works as follows: compute $(a,b')=O_c(r)$.  If $b=b'$, return $f_1(\pk,a)$.  Otherwise, return $f_2(\pk,a)$.
	\item When $A$ queries $S(\sk,m)$, compute $\SIGN^{O_c}(m,\pk)$.  $\SIGN$ works as follows: compute $(a,b)=O_c(m)$ and return $a$.
	\item When $A$ returns $(m,\sigma)$, compute $\FINISH^{O_c}(m,\sigma,\pk)$.  $\FINISH$ works as follows: compute $(a,b)=O_c(m)$. If $\sigma=a$, abort.  Otherwise, output $(\sigma,a)$.
  \end{MyItemize}
  In addition, we have $\INSTANCE(\pk)=\pk$ and $\START(\INSTANCE(\pk))=(\pk,\pk)$, so $\INSTANCE$ and $\START$ satisfy the required properties.

  \begin{theorem}The reduction above is in \os\ form.\end{theorem}
\begin{proof}
  $\RAND^{O_c}(b,r,\pk)$ is completely random and independently distributed, as $f_1(\pk,a)$ and $f_2(\pk,a)$ are both random ($f_b$ is a permutation and $a$ is truly random).  Observe that $f_1(\pk,\SIGN^{O_c}(m,\pk))\allowbreak =f_1(\pk,a)=O(b,m)$ where $(a,b)=O_c(m)$.  Thus, signing queries are always answered with a valid signature, and the distribution of signatures is identical to that of the correct signing algorithm since $b$ is chosen uniformly.

  Suppose $A$ returns a valid forgery $(m,\sigma)$.  Let $(a,b)=O_c(m)$.  There are two cases, corresponding to whether $\sigma$ corresponds to a signature using $b$ or $1-b$.  In the first case, we have $f_1(\pk,\sigma)=O(b,m)=f_1(\pk,a)$, meaning $\sigma=a$, so we abort.  Otherwise, $f_1(\pk,\sigma)=O(1-b,m)=f_2(\pk,a)$, so $(\sigma,a)$ form a claw.  Since the adversary never asked for a signing query on $m$, these two cases are indistinguishable by the same logic as the proof for FDH.  Thus, the probability of failure is at most a half, which is non-negligible. \qed\end{proof}
  \begin{corollary} If quantum-accessible pseudorandom functions exists, and $(\mathcal{F}_1,\mathcal{F}_2)$ is a pair claw-free trapdoor permutations, then the Katz-Wang signature scheme instantiated with $\mathcal{F}_1$ is secure against quantum adversaries.\end{corollary}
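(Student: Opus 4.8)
The plan is simply to chain the \os\ reduction exhibited just above with the meta-theorem, Theorem~\ref{metasignature}. First I would pin down the hard problem: let $P=(\myGame_P,0)$ be the claw-finding game for $(\mathcal{F}_1,\mathcal{F}_2)$, in which the challenger runs $G(1^n)$, hands the adversary $\pk$, and declares the adversary to win exactly when it returns a pair $(x_1,x_2)$ with $f_1(\pk,x_1)=f_2(\pk,x_2)$. Because the game has a clean $0/1$ success predicate, it fits the \emph{Problem} template with $\alpha_P=0$, and the assumption that $(\mathcal{F}_1,\mathcal{F}_2)$ is a pair of claw-free trapdoor permutations is precisely the statement that $P$ is hard for quantum computers.

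Next I would appeal to the preceding theorem (that the Katz--Wang reduction is in \os\ form), which verifies that the explicit algorithms $\START,\RAND^{O_c},\SIGN^{O_c},\FINISH^{O_c},\INSTANCE$ given for this scheme meet all five conditions of Definition~\ref{def:signatureReduction}: $\INSTANCE(\pk)=\pk$ and $\START(\INSTANCE(\pk))=(\pk,\pk)$, so the instance-distribution condition is immediate; the oracle induced by $\RAND^{O_c}(\cdot,\pk)$ is in fact \emph{exactly} uniform, since each answer is $f_{b}(\pk,a)$ for a uniform $a$ and an independent bit, so the associated $O_\textrm{quant}$ is trivially quantum-indistinguishable from a random oracle and, in particular, Lemma~\ref{oraclerand} is not even needed here; $\SIGN^{O_c}$ never aborts and always returns a signature that verifies against $\RAND^{O_c}(\cdot,\pk)$, with the correct distribution because the hidden bit is uniform; and $\FINISH^{O_c}$ turns any valid forgery into a genuine claw except with probability at most $\tfrac{1}{2}$, so it makes the challenger for $P$ accept with probability at least $\tfrac{1}{2}$, which is non-negligible. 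Hence we have a \os\ reduction from a classical PPT forger $A$ for $\mathcal{S}$ to a classical PPT claw finder $B$ for $P$.

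Finally I would invoke Theorem~\ref{metasignature} with this $P$ and this reduction: since $P$ is hard for polynomial-time quantum computers and quantum-accessible pseudorandom functions exist by hypothesis, the theorem yields that $\mathcal{S}$ --- the Katz--Wang signature scheme instantiated with $\mathcal{F}_1$ --- is secure in the quantum-accessible random oracle model, which is exactly the claimed quantum security. I do not expect a genuine obstacle: the proof is a one-line composition of the two results. The only points worth double-checking are bookkeeping --- that $B$ runs in polynomial time (it runs $A$ once plus a constant amount of work per query, using a classical PRF to derandomize $O_c$) and that every ``non-negligible'' clause of Definition~\ref{def:signatureReduction} is actually met by the loss factors of this particular reduction (here the worst loss is the constant $2$ from $\FINISH$) --- both of which follow immediately from the preceding theorem.
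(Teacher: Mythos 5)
Your proposal is correct and follows exactly the paper's route: the corollary is obtained by combining the preceding theorem (that the Katz--Wang reduction is in \os\ form, with a truly uniform simulated oracle so Lemma~\ref{oraclerand} is unnecessary and the loss is only the factor $2$ from $\FINISH$) with Theorem~\ref{metasignature}, taking the hard problem $P$ to be claw-finding for $(\mathcal{F}_1,\mathcal{F}_2)$. Nothing further is needed.
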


As in the case of FDH, our simulated quantum-accessible random oracle is truly random, so we do not need to rely on Lemma~\ref{oraclerand}.  Thus, the tightness of our reduction is the same as the classical case.  In particular, if the quantum adversary $A_Q$ has  advantage $\epsilon$ then $B$ will have advantage $\epsilon/2$.

\subsubsection*{PSF Signatures from Claw-Free Permutations}

Gentry et al.~\cite{Gentry2008} note that Claw-Free Permutations give rise to pre-image sampleable trapdoor functions (PSFs).  These PSFs can then be used to construct an FDH signature scheme as in Section~\ref{sec:sigsfrompsf}.

Given a pair of claw-free permutations $(\mathcal{F}_1,\mathcal{F}_2)$, define the following PSF: $G$ is just the generator for the pair of permutations.  $\text{Sample}(\pk)$ generates a random bit $b$ and random $x$ in the domain of $f_b$, and returns $(x,b)$.  $f(\pk,x,b)=f_b(\pk,x)$, and $f^{-1}(\sk,y)=(f_b^{-1}(\sk,y),b)$ for a random $b$.  Suppose we have a collision $((x_1,b_1),(x_2,b_2))$ for this PSF.  Then \eq{f_{b_1}(\pk,x_1)=f(\pk,x_1,b_1)=f(\pk,x_2,b_2)=f_{b_2}(\pk,x_2)}  If $b_1=b_2$, then $x_1=x_2$ since $f_{b_1}$ is a permutation.  But this is impossible since $(x_1,b_1)\neq(x_2,b_2)$.  Thus, $b_1\neq b_2$, so one of $(x_1,x_2)$ or $(x_2,x_1)$ is a claw for $(\mathcal{F}_1,\mathcal{F}_2)$.

Hence, we can instantiate FDH with this PSF to get the following signature scheme:
\begin{MyItemize}
	\item $G$ is the generator for the permutations.
	\item $S^O(\sk,m)=(f_b^{-1}(\sk,O(m)),b)$ for a random bit $b$.
	\item $V^O(\pk,m,(\sigma,b))=1$ if and only if $f_b(\pk,\sigma)=O(m)$.
\end{MyItemize}

The security of this scheme follows from Corollary~\ref{cor:psf}, with a similar tightness guarantee (this PSF has only a pre-image min-entropy of 1, which results in a loss of a factor of two in the tightness of the reduction).  In particular, if we have a quantum adversary $A_Q$ for $\mathcal{E}$ with advantage $\epsilon$, we get a quantum algorithm $B_Q$ for the PSF with advantage $\epsilon/2$, which gives us a quantum algorithm $C_Q$ that finds claws of $(\mathcal{F}_1,\mathcal{F}_2)$ with probability $\epsilon/2$.

\section{Encryption Schemes in the Quantum-Accessible Random Oracle Model}\label{enc}

In this section, we prove the security of two encryption schemes.  The first is the BR encryption scheme due to Bellare and Rogaway~\cite{BR1}, which we show is CPA secure.  The second is a hybrid generalization of the BR scheme, which we show is CCA secure.

Ideally, we could define a general type of classical reduction like we did for signatures, and show that such a reduction implies quantum security.  Unfortunately, defining a \os\ reduction for encryption is considerably more complicated than for signatures.  We therefore directly prove the security of two random oracle schemes in the quantum setting.

\subsection{CPA Security of BR Encryption}

In this section, we prove the security of the BR encryption scheme \cite{BR1} against quantum adversaries:

\begin{definition}[BR Encryption Scheme] Let $\mathcal{F}=(G_0,f,f^{-1})$ be an injective trapdoor function, and $O$ a hash function with the same domain as $f(\pk,\cdot)$.  We define the following encryption scheme, $\mathcal{E}=(G,E,D)$ where:
	\begin{MyItemize}
		\item $G=G_0$
		\item $E^O(\pk,m)=(f(\pk,r),O(r)\oplus m)$ for a randomly chosen $r$.
		\item $D^O(\sk,(y,c))=c\oplus f^{-1}(\sk,y)$
	\end{MyItemize}
\end{definition}
A candidate quantum-immune injective trapdoor function can be built from hard problems on lattices~\cite{Peikert2008}.

\begin{theorem}\label{brsecure} If quantum-accessible pseudorandom functions exists and $\mathcal{F}$ is a quantum-immune injective trapdoor function, then $\mathcal{E}$ is quantum CPA secure.
\end{theorem}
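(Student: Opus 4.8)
The plan is to follow the same game-hopping template used for signatures, but tailored to CPA security. Let $A_Q$ be a quantum adversary against the CPA security of $\mathcal{E}$ making $q$ quantum random-oracle queries, and suppose toward a contradiction that it has non-negligible advantage $\delta$. The challenger picks $(\pk,\sk) \gets G(1^n)$, a challenge randomness $r^*$, and the bit $b$; the challenge ciphertext is $(y^*,c^*)=(f(\pk,r^*),O(r^*)\oplus m_b)$. The crucial observation, exactly as in the classical BR proof, is that unless $A_Q$ ``queries'' the oracle at the point $r^*$, the value $O(r^*)$ is a uniformly random pad independent of everything $A_Q$ sees, so $c^*$ perfectly hides $m_b$ and the advantage is $0$. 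Thus $\delta$ is bounded by the total query probability that $A_Q$ places on the single point $r^*$. The main obstacle, and the one real quantum ingredient, is to turn ``non-negligible advantage'' into ``non-negligible total query weight on $r^*$'' in a way that is usable by a quantum inverter --- in the classical world this is immediate, but in the quantum world a single basis element $r^*$ can be buried under exponentially small amplitude in each query yet still matter.

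First I would replace the real random oracle by a quantum-accessible PRF $\PRF(k,\cdot)$, arguing as in Game~3 of Theorem~\ref{metasignature} that this changes $A_Q$'s advantage only negligibly (otherwise we get a PRF distinguisher running both $A_Q$ and the challenger internally); this makes the reduction efficient. Next I would invoke Lemma~\ref{oraclechange}: define the set $S$ of time--string pairs $(t,r^*)$ over all $T$ queries, and observe that if the total query probability $\sum_t q_{r^*}(|\phi_t\rangle) =: \eta$ were negligible, then by Lemma~\ref{oraclechange} re-sampling the oracle's value at $r^*$ to a fresh independent random $R$ changes $A_Q$'s final state by Euclidean distance at most $\sqrt{T\eta}$, hence (Lemma~1 of Bennett et al.) changes its output distribution by at most $4\sqrt{T\eta}$, which is negligible. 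But in the re-sampled experiment $O(r^*)$ is information-theoretically independent of the transcript, so $A_Q$'s advantage there is exactly $0$; therefore $\delta \le 4\sqrt{T\eta}$, forcing $\eta$ to be non-negligible. (One must be slightly careful: $r^*$ depends on $\sk$ which the challenger knows, so this argument is carried out by the challenger who holds $r^*$ explicitly --- fine, since $S$ is only used in the analysis, not by $A_Q$.)

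Given that $\eta$ is non-negligible, I would build a quantum inverter $B_Q$ for $\mathcal{F}$: on input $(\pk, y^*)$ with $y^*=f(\pk,r^*)$ for random $r^*$ (this is exactly the distribution in $\myGame_{\INV}(\mathcal{F})$, so $\INSTANCE$ is trivial here), $B_Q$ picks a PRF key $k$, simulates the oracle via $\PRF(k,\cdot)$, runs $A_Q$ feeding it the challenge $(y^*, u)$ for a uniformly random string $u$ in place of $O(r^*)\oplus m_b$ (this is distributed identically to the re-sampled experiment, hence $A_Q$ behaves the same up to the resampling error), picks one of the $T$ queries uniformly at random, measures that query register in the computational basis, and outputs the resulting string as its candidate preimage. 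Since the total weight on $r^*$ across all $T$ queries is $\eta$, the measured query equals $r^*$ with probability at least $\eta/T$ (by linearity of expectation over the uniformly chosen query index), which is non-negligible; and $B_Q$ can check correctness by testing $f(\pk,\cdot)=y^*$. This contradicts the assumed quantum one-wayness of $\mathcal{F}$, completing the proof. Finally, as in the signature section, I would remark that the use of quantum-accessible PRFs can be dropped if one is content to state security relative to a quantum-accessible random oracle, and note that the quantum reduction is looser than its classical analogue by the usual quadratic/square-root factors coming from Lemma~\ref{oraclechange}. \qed
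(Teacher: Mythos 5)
Your overall strategy --- bound the CPA advantage by the total query weight on $r^*$ via Lemma~\ref{oraclechange}, then extract a preimage by measuring a randomly chosen query --- is essentially the paper's own argument (the paper folds Theorem~\ref{brsecure} into the CCA proof of the hybrid scheme, Theorem~\ref{brCCAsecure}, with the one-time pad as the symmetric component), but there is a genuine gap in the analysis of your inverter. The quantity you prove non-negligible is $\eta$, the total query weight on $r^*$ in the \emph{real} experiment, whereas your $B_Q$ can only simulate the \emph{resampled} experiment: implementing the oracle directly as $\PRF(k,\cdot)$, it cannot evaluate the oracle at the unknown point $r^*$, so it must hand $A_Q$ the challenge $(y^*,u)$ with an independent pad. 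The probability that measuring a random query yields $r^*$ is therefore $\eta'/T$, where $\eta'$ is the query weight on $r^*$ in the \emph{resampled} run, not $\eta/T$. Your parenthetical ``behaves the same up to the resampling error'' does not close this: the resampling error is $\sqrt{T\eta}$, which is small only when $\eta$ is negligible --- precisely not the case you are in --- so nothing forces $\eta'$ to be close to $\eta$, and the step ``$\eta$ non-negligible $\Rightarrow B_Q$ succeeds with non-negligible probability'' is unjustified as written.

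The fix is small and stays inside your toolkit. Either define the query weight $\eta'$ in the resampled experiment and apply Lemma~\ref{oraclechange} to the switch from the resampled game to the real one (the lemma measures weights in the run being modified): if $\eta'$ were negligible the output distributions would be negligibly close, and since the resampled advantage is $0$ this would make $\delta$ negligible; hence $\eta'$ is non-negligible, and it is exactly the quantity governing $B_Q$'s success. This is what the paper does --- its $\epsilon$ is the query weight in Game~1, the very game its inverter simulates, and it first shows $\epsilon$ negligible from one-wayness before invoking Lemma~\ref{oraclechange}, rather than arguing by contradiction. Alternatively, keep your $\eta$ but program the simulated oracle as $x \mapsto O_q(f(\pk,x))$ (injectivity of $f$ keeps this a random oracle); then $O(r^*)=O_q(y^*)$ is computable from $y^*$ alone, so your inverter can simulate the \emph{real} game exactly. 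The paper uses this same programming trick, which in the CCA setting additionally lets the reduction answer decryption queries without $\sk$.
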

We omit the proof of Theorem \ref{brsecure} because the CPA security of the BR encryption scheme is a special case of the CCA security of the hybrid encryption scheme in the next section.

\subsection{CCA Security of Hybrid Encryption}

We now prove the CCA security of the following
standard hybrid encryption, a generalization of the BR encryption scheme scheme \cite{BR1}, built from an injective trapdoor function and symmetric key encryption scheme.

\begin{definition}[Hybrid Encryption Scheme] Let $\mathcal{F}=(G_0,f,f^{-1})$ be an injective trapdoor function, and $\mathcal{E_S}=(E_S,D_S)$ be a CCA secure symmetric key encryption scheme, and $O$ a hash function.  We define the following encryption scheme, $\mathcal{E}=(G,E,D)$ where:
	\begin{MyItemize}
		\item $G=G_0$
		\item $E^O(\pk,m)=(f(\pk,r),E_S(O(r),m))$ for a randomly chosen $r$.
		\item $D^O(\sk,(y,c))=D_S(O(r'),c)$ where $r'=f^{-1}(\sk,y)$
	\end{MyItemize}
\end{definition}

We note that the BR encryption scheme from the previous section is a special case of this hybrid encryption scheme where $\mathcal{E_S}$ is the one-time pad.  That is, $E_S(k,m)=k\oplus m$ and $D_S(k,c)=k\oplus c$.

\begin{theorem}\label{brCCAsecure} If quantum-accessible pseudorandom functions exists, $\mathcal{F}$ is a quantum-immune injective trapdoor function, and $\mathcal{E}_S$ is a quantum CCA secure symmetric key encryption scheme, then $\mathcal{E}$ is quantum CCA secure.\end{theorem}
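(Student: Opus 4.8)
The plan is to prove Theorem~\ref{brCCAsecure} via a sequence of games, mirroring the structure used for signatures in Theorem~\ref{metasignature} but adapted to the CCA decryption-oracle setting. Let $A_Q$ be a quantum CCA adversary against $\mathcal{E}$ with non-negligible advantage; it receives $\pk$, has quantum access to the random oracle $O$, classical access to a decryption oracle, submits challenge messages $m_0,m_1$, receives a challenge ciphertext $(y^*,c^*)=(f(\pk,r^*),E_S(O(r^*),m_b))$, and must guess $b$. First I would, in Game~1, replace the truly random oracle $O$ with one simulated by a quantum-accessible PRF $\PRF(k,\cdot)$; indistinguishability of $\PRF$ from a random function under quantum queries makes this change negligible, and it makes $O$ efficiently computable by the reduction so that decryption queries $(y,c)$ can be answered by computing $r'=f^{-1}(\sk,y)$ and returning $D_S(O(r'),c)$ — but note that in the actual reduction we will not have $\sk$, so this game is an intermediate step only.

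Next, in Game~2, I would change how the challenge ciphertext is formed: instead of using $O(r^*)$ as the symmetric key, use a fresh uniformly random key $k^*$, so the challenge becomes $(y^*,E_S(k^*,m_b))$. The key step is to bound the difference between Game~1 and Game~2 by the probability that $A_Q$'s random oracle queries place non-negligible total query weight on the single point $r^*$. Since $r^*$ is uniform in the domain of $f$, which is super-polynomially large (this is where the trapdoor function being injective and one-way is used — a polynomial quantum algorithm cannot concentrate query amplitude on a random hidden point without inverting $f$ on $y^*$), Lemma~\ref{oraclechange} (the Bennett--Bernstein--Brassard--Vazirani hybrid argument) shows that reprogramming $O$ at $r^*$ changes $A_Q$'s final state by at most $\sqrt{T\epsilon}$ for the relevant $\epsilon$. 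More carefully, I would argue that if $A_Q$ puts total query probability $\geq \delta$ on $r^*$, one can build a quantum inverter for $\mathcal{F}$: measure a random query to obtain a candidate preimage of $y^*$ with probability $\geq \delta/T$; hence $\delta$ is negligible, and by Lemma~\ref{oraclechange} reprogramming $O(r^*)$ to an independent value (equivalently, using an independent $k^*$) is undetectable. One subtlety here is the decryption oracle: a decryption query on a ciphertext $(y^*,c)$ with $c\neq c^*$ after the challenge would reveal $O(r^*)$; this is handled by having the decryption oracle, post-challenge, special-case $y=y^*$ and answer using $k^*$ directly, which is consistent because $f$ is injective so $y^*$ has the unique preimage $r^*$.

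Finally, in Game~3, having decoupled the challenge key $k^*$ from the oracle, I would reduce to the quantum CCA security of the symmetric scheme $\mathcal{E}_S$: the remaining game is exactly a symmetric CCA game, since all decryption queries either have $y\neq y^*$ (answered via $f^{-1}$ applied by... — again we lack $\sk$, so the honest reduction instead embeds the $\mathcal{F}$-challenge differently). The cleanest route is actually to build the final reduction $B_Q$ to $\mathcal{F}$'s one-wayness that, on input $(\pk,y^*)$, simulates $O$ via $\PRF$, answers decryption queries $(y,c)$ with $y\neq y^*$ by brute-force over... no — since $f$ is injective and $\PRF$-simulated $O$ is public, $B_Q$ can compute $r'$ only if it can invert $f$. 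The resolution, which I expect to be the main obstacle and which I would handle as in classical BR-type proofs, is that decryption queries are answered by a \emph{plaintext-extraction} technique: since $O$ is simulated by $B_Q$, for any decryption query $(y,c)$ the ``correct'' answer depends on $O(f^{-1}(\sk,y))$, and $B_Q$ maintains consistency by noting that the only ciphertexts $A_Q$ can meaningfully decrypt are those where it ``knows'' the underlying $r$ — but with quantum oracle access there is no query log to extract from. This is precisely why the theorem requires $\mathcal{E}_S$ to be CCA secure and why we pass $k^*$ explicitly: I would instead structure the reduction so that $B_Q$ knows $\sk$ (it generates its own challenge in a hybrid where $y^*$ is embedded only in the final hop to $\mathcal{F}$), answers all decryption queries honestly using $\sk$, and the hop that removes $\sk$ is exactly the Game~1$\to$Game~2 reprogramming argument above combined with the $\mathcal{E}_S$-CCA game. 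Collecting the bounds — negligible PRF loss, negligible reprogramming loss $\sqrt{T\cdot\mathrm{negl}}$ from the one-wayness of $\mathcal{F}$, and the quantum CCA advantage against $\mathcal{E}_S$ — yields that $A_Q$'s advantage is negligible, contradicting the assumption and completing the proof; Theorem~\ref{brsecure} follows as the special case $E_S(k,m)=k\oplus m$, for which the one-time pad is trivially quantum CCA secure in the relevant one-time sense. \qed
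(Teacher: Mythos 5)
Your high-level game plan (decouple the challenge key from $O(r^*)$ via a one-wayness/reprogramming argument, then finish with the symmetric CCA game) matches the paper's intent, but there is a genuine gap at exactly the point you flag as ``the main obstacle,'' and your proposed resolution does not close it. The Game~1$\to$Game~2 hop requires a reduction to the one-wayness of $\mathcal{F}$: an inverter that receives $(\pk,y^*)$ \emph{without} $\sk$ and must simulate the entire CCA game for $A_Q$, including classical decryption queries $(y',c')$ with $y'\neq y^*$, whose correct answer is $D_S(O(f^{-1}(\sk,y')),c')$. With your simulation ($O$ a plain PRF on the domain of $f$, decryption answered honestly using $\sk$), this inverter cannot answer such queries: plaintext extraction is unavailable (as you correctly note, quantum queries leave no query log), and your fallback ``let $B_Q$ know $\sk$ and embed $y^*$ only in the final hop'' is circular --- if $B_Q$ generates $(\sk,\pk)$ itself then $y^*$ is not the external one-wayness challenge and no contradiction is obtained, while if $y^*$ is external then $B_Q$ lacks $\sk$ and is back to the unanswerable decryption queries.

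The paper resolves this with one trick you are missing: it does not simulate $O$ directly, but keys it through $f$, setting $O(x)\deq O_q(f(\pk,x))$ for an auxiliary quantum-accessible random oracle (later a PRF) $O_q$ defined on the \emph{range} of $f$. Because $f(\pk,\cdot)$ is injective, this is a perfect random oracle; because decryption of $(y',c')$ with $y'\neq y^*$ only needs $O(f^{-1}(\sk,y'))=O_q(y')$, the simulator answers every decryption query with $D_S(O_q(y'),c')$ and never touches $\sk$; and the adversary's query weight on $r^*$ becomes query weight on the \emph{known} point $y^*$ in $O_q$, so the inverter can run the whole simulation on input $(\pk,y^*)$, measure a random oracle query, and recover $r^*$ with probability $\epsilon/q$. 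Only then does Lemma~\ref{oraclechange} (reprogramming $O_q$ at the single point $y^*$ to the challenge key $k$) justify the decoupling, after which the reduction to the quantum CCA security of $\mathcal{E}_S$ goes through exactly as you sketch (there the reduction may generate $(\sk,\pk)$ itself, special-case $y'=y^*$ via the symmetric decryption oracle, and forward the challenge). Your closing remark that Theorem~\ref{brsecure} is the one-time-pad special case agrees with the paper.
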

\begin{proof}
Suppose we have an adversary $A_Q$ that breaks $\mathcal{E}$.  We start with the standard security game for CCA secure encryption:

\medskip\noindent{\bf Game 0.} Define $\myGame_0$ as the game a quantum adversary $A_Q$ plays for problem $\text{Asym-CCA}(\mathcal{E})$.

\medskip\noindent{\bf Game 1.} Define $\myGame_1$ as the following game: the challenger generates $(\sk,\pk)\gets G(1^n)$, a random $r$ in the domain of $\mathcal{F}$, a random $k$ in the key space of $\mathcal{E_S}$, and computes $y=f(\pk,r)$.  The challenger has access to a quantum-accessible random oracle $O_q$ whose range is the key space of $\mathcal{E_S}$.  It then sends $\pk$ to $A_Q$.  The challenger answers queries as follows:
\begin{MyItemize}
	\item Random oracle queries are answered with the random oracle $O_\textrm{\rm quant}$, which takes a basis element $\ket{x,y}$ into $\ket{x,y\oplus O_q(f(\pk,x))}$.
	\item Decryption queries on $(y',c')$ are answered as follows:
	\begin{MyItemize}
		\item[Case 1:] If $y=y'$, respond with $D_S(k,c')$.
		\item[Case 2:] If $y\neq y'$, respond with $D_S(O_q(y'),c')$.
	\end{MyItemize}
	\item The challenge query on $(m_0,m_1)$ is answered as follows: choose a random $b$.  Then, respond with $(y,E_S(k,m_b))$.
\end{MyItemize}
When $A_Q$ responds with $b'$, we say that $A_Q$ won if $b=b'$.

Observe that, because $f$ is injective and $O_q$ is random, the oracle $O_\textrm{\rm quant}$ is a truly random oracle with the same range as $O_q$.  The challenge ciphertext $(y,c)$ seen by $A_Q$ is distributed identically to that of $\myGame_0$.  Further, it is a valid encryption of $m_b$ relative to the random oracle being $O_\textrm{\rm quant}$ if $O_q(y)=k$.  For $y'\neq y$, the decryption of $(y',c')$ is
\eq{D_S(O_q(y'),c')=D_S(O_\textrm{\rm quant}(f^{-1}(\sk,y')),c')=D^{O_\textrm{\rm quant}}(\sk,(y',c'))}
Which is correct.  Likewise, if $O_q(y)=k$, the decryption of $(y,c')$ is also correct.  Thus, the view of $A_Q$ in $\myGame_1$ is identical to that in $\myGame_0$ if $O_q(y)=k$.  We now make the following observations:
\begin{MyItemize}
	\item The challenge query and decryption query answering algorithms never query $O_q$ on $y$.
	\item Each quantum random oracle query from the adversary to $O_\textrm{\rm quant}$ leads to a quantum random oracle query from the challenger to $O_q$.  The query magnitude of $y$ in the challenger's query to $O_q$ is the same as the query magnitude of $r$ in the adversary's query $O_\textrm{\rm quant}$.
\end{MyItemize}
Let $\epsilon$ be the sum of the square magnitudes of $y$ over all queries made to $O_q$ (i.e. the total query probability of $y$).  This is identical to the total query probability of $r$ over all queries $A_Q$ makes to $O_\textrm{\rm quant}$.

We now construct a quantum algorithm $B_\mathcal{F}^{O_q}$ that uses a quantum-accessible random oracle $O_q$, and inverts $f$ with probability $\epsilon/q$, where $q$ is the number of random oracle queries made by $A_Q$.  $B_\mathcal{F}^{O_q}$ takes as input $(\pk,y)$, and its goal is to output $r=f^{-1}(\sk,y)$.  $B_\mathcal{F}^{O_q}$ works as follows:
\begin{MyItemize}
	\item Generate a random $k$ in the key space of $\mathcal{E_S}$.  Also, generate a random $i\in \{1,...,q\}$.  Now, send $\pk$ to $A_Q$ and play the role of challenger to $A_Q$.
	\item Answer random oracle queries with the random oracle $O_\textrm{\rm quant}$, which takes a basis element $\ket{x,y}$ into $\ket{x,y\oplus O_q(f(\pk,x))}$.
	\item Answer decryption queries on $(y',c')$ as follows:
	\begin{MyItemize}
		\item[Case 1:] If $y=y'$, respond with $D_S(k,c')$.
		\item[Case 2:] If $y\neq y'$, respond with $D_S(O_q(y'),c')$.
	\end{MyItemize}
	\item Answer the challenge query on $(m_0,m_1)$ as follows: choose a random $b$.  Then, respond with $(y,E_S(k,m_b))$.
	\item At the $i$th random oracle query, sample the query to get $r'$, and output $r'$ and terminate.
\end{MyItemize}
Comparing our definition of $B_\mathcal{F}^{O_q}$ to $\myGame_1$, we can conclude that the view seen by $A_Q$ in both cases is identical.  Thus, the total query probability that $A_Q$ makes to $O_\textrm{\rm quant}$ at the point $r$ is $\epsilon$.  Hence, the probability that $B_\mathcal{F}^{O_q}$ outputs $r$ is $\epsilon/q$.  If we assume that $\mathcal{F}$ is secure against quantum adversaries that use a quantum-accessible random oracle, then this quantity, and hence $\epsilon$, must be negligible.  As in the case of signatures (Section~\ref{sig}), we can replace this assumption with the assumption that $\mathcal{F}$ is secure against quantum adversaries (i.e. with no access to a quantum random oracle) and that pseudorandom functions exists to reach the same conclusion.

Since $\epsilon$ is negligible, we can change $O_q(y)=k$ in $\myGame_1$, thus getting a game identical to $\myGame_0$ from the adversary's point of view.  Notice that in $\myGame_0$ and $\myGame_1$, $A_Q$ is in a pure state because we are only applying unitary transformations, performing measurements, or performing classical communication.  We are only changing the oracle at a point with negligible total query probability, so Lemma~\ref{oraclechange} tells us that making this change only affects the distribution of the outcome of $\myGame_1$ negligibly.  This allows us to conclude that the success probability of $A_Q$ in $\myGame_1$ is negligibly close to that in $\myGame_0$.

\medskip
Now, assume that the success probability of $A_Q$ in $\myGame_1$ is non-negligible.  We now define a quantum algorithm $B_\mathcal{E_S}^{O_q}$ that uses a quantum-accessible random oracle $O_q$ to break the CCA security of $\mathcal{E_S}$.  $B_\mathcal{E_S}^{O_q}$ works as follows:
\begin{MyItemize}
	\item On input $1^n$, generate $(\sk,\pk)\gets G(1^n)$.  Also, generate a random $r$, and compute $y=f(\pk,r)$.  Now send $\pk$ to $A_Q$ and play the role of challenger to $A_Q$.
	\item Answer random oracle queries with the random oracle $O_\textrm{\rm quant}$, which takes a basis element $\ket{x,y}$ into $\ket{x,y\oplus O_q(f(\pk,x))}$.
	\item Answer decryption queries on $(y',c')$ as follows:
	\begin{MyItemize}
		\item[Case 1:] If $y=y'$, ask the $\mathcal{E_S}$ challenger for a decryption $D_S(k,c')$ to obtain $m'$.  Return $m'$ to $A_Q$.
		\item[Case 2:] If $y\neq y'$, respond with $D_S(O_q(y'),c')$.
	\end{MyItemize}
	\item Answer the challenge query on $(m_0,m_1)$ by forwarding the pair $\mathcal{E_S}$.  When the challenger responds with $c$ (which equals $E_S(k,m_b)$ for some $b$), return $(y,c)$ to $A_Q$.
	\item When $A_Q$ outputs $b'$, output $b'$ and halt.
\end{MyItemize}
Comparing our definition of $B_\mathcal{E_S}^{O_q}$ to that of $\myGame_1$, we can conclude that the view of $A_Q$ in both cases is identical.  Thus, $A_Q$ succeeds with non-negligible probability.  If $A_Q$ succeeds, it means it returned $b$, meaning $B_\mathcal{E_S}^{O_q}$ also succeeded.  Thus, we have an algorithm with a quantum random oracle that breaks $\mathcal{E_S}$.  This is a contradiction if $\mathcal{E_S}$ is CCA secure against quantum adversaries with access to a quantum random oracle, which holds since $\mathcal{E_S}$ is CCA secure against quantum adversaries and quantum-accessible pseudorandom functions exist, by assumption.

Thus, the success probability of $A_Q$ in $\myGame_1$ is negligible, so the success probability of $A_Q$ in $\myGame_0$ is also negligible.  Hence, we have shown that all polynomial time quantum adversaries have negligible advantage in breaking in breaking the CCA security of $\mathcal{E}$, so $\mathcal{E}$ is CCA secure.\qed
\end{proof}

We briefly explain why Theorem~\ref{brsecure} is a special case of Theorem~\ref{brCCAsecure}.  Notice that, in the above proof, $B_\mathcal{E_S}$ only queries its decryption oracle when answering decryption queries made by $A_Q$, and that it never makes encryption queries.  Hence, if $A_Q$ makes no decryption queries, $B_\mathcal{E_S}$ makes no queries at all except the challenge query.  If we are only concerned with the CPA security of $\mathcal{E}$, we then only need $E_S$ to be secure against adversaries that can only make the challenge query.  Further, if we only let $A_Q$ make a challenge query with messages of length $n$, then $E_S$ only has to be secure against adversaries making challenges of a specific length.  But this is exactly the model in which the one-time pad is unconditionally secure.  Hence, the BR encryption scheme is secure, and we have proved Theorem~\ref{brsecure}.

\section{Conclusion}

We have shown that great care must be taken if using the random oracle model when arguing security against quantum attackers. Proofs in the classical case should be reconsidered, especially in case the quantum adversary can access the random oracle with quantum states.  We also developed conditions for translating security proofs in the classical random oracle model to the quantum random oracle model.  We applied these tools to certain signature and encryption schemes.

The foremost question raised by our results is in how far techniques for ``classical random oracles'' can be applied in the quantum case. This stems from the fact that manipulating or even observing the interaction with the quantum-accessible random oracle would require measurements of the quantum states. That, however, prevents further processing of the query in a quantum manner.  We gave several examples of schemes that remain secure in the quantum setting, provided quantum-accessible pseudorandom functions exist.  The latter primitive seems to be fundamental to simulate random oracles in the quantum world. Showing or disproving the existence of such pseudorandom functions is thus an important step.

Many classical random oracle results remain open in the quantum random oracle settings.  It is not known how to prove security of generic FDH signatures as well as signatures derived from the Fiat-Shamir heuristic in the quantum random oracle model.  Similarly, a secure generic transformation from CPA to CCA security in the quantum  RO model is still open.


\section*{Acknowledgments}

We thank Chris Peikert for helpful comments.
Dan Boneh was supported by NSF, the Air Force Office of Scientific Research (AFO\ SR) under a MURI award, and by the Packard Foundation.
Marc Fischlin and Anja Lehmann were supported by grants Fi 940/2-1 and Fi 940/3-1 of the German Research Foundation (DFG).
\"Ozg\"ur Dagdelen and Marc Fischlin were also supported by CASED
(\texttt{www.cased.de}). Christian Schaffner is supported by a NWO
VENI grant.

\bibliographystyle{alpha}
\bibliography{quantum-oracle}

\iffullversion
\appendix

\section{Definitions}
\label{sec:defs}

\begin{definition}[Trapdoor Permutation] A trapdoor permutation is a triple of functions $\mathcal{F}=(G,f,f^{-1})$ where:
	\begin{MyItemize}
		\item $G(1^n)$ generates a private/public key pair $(\sk,\pk)$.
		\item $f(\pk,\cdot)$ is a permutation for all $\pk$.
		\item $f^{-1}(\sk,\cdot)$ is the inverse of $f(\pk,\cdot)$ for all $(\pk,\sk)$ generated by $G$.  That is, $f^{-1}(\sk,f(\pk,x))=x$ and $f(\pk,f^{-1}(\sk,y))=y$.
	\end{MyItemize}
\end{definition}

For a trapdoor permutation $\mathcal{F}$, we define the problem $Inv(\mathcal{F})=(Game(\mathcal{F}),0)$ where $Game(\mathcal{F})$ is the following game between a quantum adversary $A$ and the challenger $Ch$:  $Ch$, on input $n$, runs $G(1^n)$ to obtain $(\sk,\pk)$ and generates a random $y$ in the range of $f(\pk,\cdot)$.  It sends $(\pk,y)$ to $A$.  $A$ is allowed to make quantum random oracle queries $O(\cdot)$.  When $A$ outputs $x$, $Ch$ outputs 1 if and only if $f(\pk,x)=y$.
\begin{definition} A trapdoor permutation $\mathcal{F}$ is secure against quantum adversaries if $Inv(\mathcal{F})$ is hard for quantum computers.\end{definition}

The following definition is due to \cite{Gentry2008}:
\begin{definition}[Preimage Sampleable Trapdoor Function]A quadruple of functions $\mathcal{F}=(G,Sample,f,f^{-1})$ is a trapdoor collision-resistant hash function with preimage sampling (PSF) if:
	\begin{MyItemize}
		\item $G(1^n)$ generates secret and public keys $(\sk,\pk)$.
		\item $f(\pk,\cdot)$ has domain $D$ and range $R$.
		\item $\text{Sample}(1^n)$ samples from a distribution on $D$ such that for all $\pk$ the distribution $f\big(\pk,\ \text{Sample}(1^n)\big)$ is within $\epsilon_{sample}$ of uniform.
		\item $f^{-1}(\sk,y)$ generates an $x$ such that $f(\pk,x)=y$.  The distribution is within $\epsilon_{pre}$ of the conditional distribution of $Sample()$ given $f(\pk,x)=y$, where $\epsilon_{pre}$ is negligible.
		\item Pre-image Min-entropy: For all $y\in R$, the probability of any element in the conditional distribution of $\text{Sample}(1^n)$ given $f(\pk,x)=y$ is less than $\epsilon_{prob}$, where $\epsilon_{prob}$ is negligible.
	\end{MyItemize}
\end{definition}
For a PSF, we define two problems: $Inv(\mathcal{F})$ is identical to the problem with the same name for trapdoor permutations, and $Col(\mathcal{F})=(Game(\mathcal{F}),0)$ where $Game(\mathcal{F})$ is the following game between a quantum adversary $A$ and the challenger $Ch$:  $Ch$, on input $n$, runs $G(1^n)$ to obtain $(\sk,\pk)$, and sends $\pk$ to $A$.  $A$ is allowed to make quantum random oracle queries $O(\cdot)$.  When $A$ outputs a pair $(x_1,x_2)$, $Ch$ outputs 1 if and only if both $x_1\neq x_2$ and $f(\pk,x_1)=f(\pk,x_2)$.
\begin{definition} A PSF $\mathcal{F}$ is secure against quantum adversaries if $Inv(\mathcal{F})$ and $Col(\mathcal{F})$ are both hard for quantum computers.\end{definition}
\cite{Gentry2008} construct a PSF whose security is based on the hardness of lattice problems.

\paragraph{Signature schemes.}
We next review signature schemes using our unified notation.

\begin{definition}[Signature Scheme] A random oracle signature scheme is a triple of functions $\mathcal{S}=(G,S^O,V^O)$ where:
\begin{MyItemize}
	\item $O$ is a random oracle.
	\item $G(1^n)$ generates a pair $(\sk,\pk)$ where $\sk$ is the signer's private key, and $\pk$ is the public key.
	\item $S^O(\sk,m)$ generates a signature $\sigma$.
	\item $V^O(\pk,m,\sigma)$ returns 1 if and only if $\sigma$ is a valid signature on $m$.  \\ We require that $V^O(\pk,m,S^O(\sk,m))=1$ for all $m$ and $(\sk,\pk)$ generated by $G$.
\end{MyItemize}\end{definition}

For a signature scheme $\mathcal{S}$, we define the problem $Sig-Forge(\mathcal{S})=(Game(\mathcal{S}),0)$ where $Game(\mathcal{S})$ is the following game between a quantum adversary $A$ and the challenger $Ch^O$:  $Ch^O$, on input $n$, runs $G(1^n)$ to obtain $(\sk,\pk)$.  It then sends $\pk$ to $A$.  $A$ is allowed to make quantum random oracle queries $O(\cdot)$ and classical signature queries $S^O(\sk,\cdot)$ to $Ch^O$.  When $A$ outputs a forgery candidate $(m,\sigma)$, $Ch^O$ outputs 1 if and only if $A$ never asked for a signature on $m$ and $\sigma$ is a valid signature for $m$ ($V^O(\pk,m,\sigma)=1$).
\begin{definition} A signature scheme $\mathcal{S}$ is secure against quantum adversaries if $Sig-Forge(\mathcal{S})$ is hard for quantum computers.\end{definition}

\paragraph{Encryption.}
We next review encryption systems using our notation.

\begin{definition}[Symmetric Key Encryption Scheme] A symmetric key random oracle encryption scheme is a pair of functions $\mathcal{E}=(E^O,D^O)$ where:
\begin{MyItemize}
	\item $E^O(k,m)$ generates a ciphertext $c$.
	\item $D^O(k,c)$ computes the plaintext $m$ corresponding to ciphertext $c$.  We require that $D^O(k,E^O(k,m))=m$.
\end{MyItemize}\end{definition}
For a symmetric key encryption scheme $\mathcal{E}$, we define the problem $Sym-CCA(\mathcal{E})=(Game(\mathcal{E}),\frac{1}{2})$ where $Game(\mathcal{E})$ is the following game between a quantum adversary $A$ and the challenger $Ch^O$:  $Ch^O$, on input $n$, generates a key $k$ of length $n$ at random, and sends $k$ to $A$.  $A$ is allowed to make quantum random oracle queries $O(\cdot)$, classical encryption queries $E^O(k,\cdot)$, and classical decryption queries $D^O(k,\cdot)$.  $A$ is also allowed one classical challenge query, where it sends $Ch^O$ a pair $(m_0,m_1)$.  $Ch^O$ chooses a random bit $b$, and computes $c=E^O(k,m_b)$, which it sends to $A$.  When $A$ returns a bit $b'$, $Ch^O$ outputs 1 if and only if both $b=b'$ and there was no decryption query $D^O(k,c)$ after the challenge query.

\begin{definition}[Symmetric Key CCA Security] A symmetric key encryption scheme $\mathcal{E}$ is Chosen Ciphertext Attack (CCA) secure against quantum adversaries if $Sym-CCA(\mathcal{E})$ is hard for quantum computers.\end{definition}

\begin{definition}[Asymmetric Key Encryption Scheme] An Asymmetric key encryption scheme is a triple of functions $\mathcal{E}=(G,E^O,D^O)$ where:
\begin{MyItemize}
	\item $G(1^n)$ generates a private/public key pair $(\sk,\pk)$
	\item $E^O(\pk,m)$ generates a ciphertext $c$.
	\item $D^O(\sk,c)$ computes the plaintext $m$ corresponding to ciphertext $c$.  We require that $D^O(\sk,E^O(\pk,m))=m$.
\end{MyItemize}\end{definition}
For a symmetric key encryption scheme $\mathcal{E}$, we define the problem $Asym-CCA(\mathcal{E})=(Game(\mathcal{E}),\frac{1}{2})$ where $Game(\mathcal{E})$ is the following game between a quantum adversary $A$ and the challenger $Ch^O$:  $Ch^O$, on input $n$, uses $G(1^n)$ to generate $(\sk,\pk)$, and sends $\pk$ to $A$.  $A$ is allowed to make quantum random oracle queries $O(\cdot)$ and classical decryption queries $D^O(\sk,\cdot)$.  $A$ if also allowed to make one classical challenge query, where it sends $Ch^O$ a pair $(m_0,m_1)$.  $Ch^O$ chooses a random bit $b$, and computes $c=E^O(\pk,m_b)$, which it sends to $A$.  When $A$ returns a bit $b'$, $Ch^O$ outputs 1 if and only if both  $b=b'$ and there was no decryption query $D^O(\sk,c)$ after the challenge query.

\begin{definition}[Asymmetric Key CCA Security] An Asymmetric key encryption scheme $\mathcal{E}$ is Chosen Ciphertext Attack (CCA) secure against quantum adversaries if $Asym-CCA(\mathcal{E})$ is hard for quantum computers.\end{definition}

\section{Security of the $\ident^*$ Protocol}\label{app:negative}
To prove security of our protocol we need to show that an adversary $\adv$ after interacting with an honest prover $\prov^*$, can subsequently not impersonate $\prov^*$ such that $\veri^*$ accepts the identification.

\parag{Security against Classical Adversaries.}

We first show that the $\ident^*$ protocol is secure in the (standard) random oracle model against classical adversaries and then discuss that there exist hash functions, which can securely replace the random oracle.

\begin{lemma} \label{lem:ro.class}
Let $\ident=(\ikgen,\allowbreak\prov,\veri)$ be a secure identification scheme. Then for any efficient \emph{classical} adversary $\adv$ and $\ell > 6 \log(\alpha)$ the protocol $\ident^*$ is secure in the random oracle.
\end{lemma}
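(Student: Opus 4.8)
The plan is to proceed by reduction: assuming an efficient classical adversary $\adv$ that impersonates $\prov^*$ with non-negligible probability against $\veri^*$, I would construct an efficient classical adversary $\adv'$ breaking the soundness of the underlying identification scheme $\ident$. Recall that $\veri^*$ accepts if and only if either (i) $\adv$ succeeds in the identification stage, or (ii) $\adv$ produces $\ell$-near-collisions in at least $r/4$ of the $r$ rounds of the collision stage. So the first step is to split the acceptance event into these two cases and argue that the collision-stage case occurs only with negligible probability in the random oracle model; the bulk of the forgery probability must therefore come from the identification stage, which directly contradicts soundness of $\ident$.

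The heart of the argument is bounding the probability that $\adv$ wins a single round of the collision stage. In a given round the verifier sends a fresh salt $k_i$ and runs its own timekeeping loop of $\lceil\sqrt[3]{2^\ell}\rceil$ hash evaluations on $H(k_i,\cdot)$; by the Unit-Time and Zero-Time assumptions together with the Parallel Speed-Up assumption (constant $\alpha$), any classical adversary — even a parallel one — can perform at most $\alpha\cdot\lceil\sqrt[3]{2^\ell}\rceil = O(2^{\ell/3})$ hash evaluations on the fresh salt before the round terminates. In the random oracle model, for a fresh salt $k_i$ the function $H(k_i,\cdot)$ is a fresh random function, so finding an $\ell$-near-collision (two inputs agreeing on the leading $\ell$ bits) after $Q$ queries succeeds with probability at most $\binom{Q}{2}2^{-\ell} = O(Q^2 2^{-\ell})$ by the birthday/union bound. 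With $Q = O(2^{\ell/3})$ this is $O(2^{2\ell/3}\cdot 2^{-\ell}) = O(2^{-\ell/3})$; the condition $\ell > 6\log\alpha$ is exactly what keeps the constant $\alpha^2$ from swamping the exponential decay, so each round is won with probability at most some $p < 1/8$ (say), independently across rounds because each salt is freshly and independently chosen.

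Next I would apply a Chernoff/Hoeffding bound to the $r$ independent Bernoulli trials: the expected number of won collision rounds is at most $pr < r/8$, so the probability of reaching the threshold $r/4$ is $2^{-\Omega(r)}$, which is negligible since $r = \poly(n)$. Consequently, up to a negligible additive term, whenever $\veri^*$ accepts it is because $\adv$ succeeded in the identification stage. The reduction $\adv'$ then simply runs $\adv$ internally, simulating the honest $\prov^*$ (it can, since $\prov^*$'s behavior in the collision stage is just searching for collisions on a public random oracle, which $\adv'$ can emulate via lazy sampling, and the identification-stage messages of $\prov^*$ it can obtain from its own prover interaction in the soundness game) and relaying the second-stage $\langle\prov,\veri\rangle$ transcript between $\adv$ and $\adv'$'s own verifier; $\adv'$ wins whenever $\adv$ wins via the identification stage. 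This contradicts the assumed soundness of $\ident$, completing the proof.

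The main obstacle I anticipate is the careful bookkeeping of the timing model: one must argue rigorously that the adversary genuinely cannot make more than $O(2^{\ell/3})$ queries \emph{on the fresh salt} within a round — in particular that precomputation before the salt $k_i$ is revealed is useless (which holds precisely because $H(k_i,\cdot)$ is an independent random function for fresh $k_i$) and that the Zero-Time assumption prevents the adversary from buying extra hash evaluations through non-hash computation or message-scheduling tricks. The probabilistic pieces (birthday bound per round, independence of rounds, Chernoff over rounds) are routine once the query budget per round is pinned down.
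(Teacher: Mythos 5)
Your proposal follows essentially the same route as the paper's proof: the same split of the acceptance event into the identification stage (handled by the soundness of $\ident$, which you merely spell out as an explicit reduction with lazy-sampled random oracle) and the collision stage, the same birthday bound applied to the at most $\alpha\cdot\sqrt[3]{2^\ell}$ queries per round permitted by the timing assumptions, independence across rounds from the fresh salts, and a Chernoff bound over the $r$ rounds. The only cosmetic difference is that the paper bounds the per-round collision probability concretely by $\alpha^2/(2\sqrt[3]{2^\ell})$ and feeds that into the Chernoff bound, rather than asserting a fixed constant such as $1/8$, which strictly speaking requires the same slack in $\ell$ beyond $6\log(\alpha)$ that the paper also uses implicitly.
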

\begin{proof}
Assume towards contradiction that a verifier $\veri^*$ after interacting with an adversary $\adv$, both given $(\pk,\ell)$ as input, accepts with output $b^*=1$. Thus, $\adv$ must have convinced $\veri^*$ in the evaluation of the $\ident$-protocol or provided at least $r/4$ collisions. Due to the independence of the two stages of our protocol (in particular, $\sk$ is not used during the collision search) we have $$\Pr[\adv \text{ ``breaks'' } \ident^*] \leq \Pr[\ccount > r/4] + \Pr[\adv \text{ ``breaks'' } \ident].$$
Since we assume that the underlying identification scheme is secure, the latter probability is negligible. Thus, it remains to show that an adversary $\adv$ with access to a random oracle $H$ finds $r/4$ near-collisions on $H(k_i,\cdot)$ for given $k_i$ in time $O(\sqrt[3]{2^\ell})$ with negligible probability only. In the random oracle model, the ability of finding collisions is bounded by the birthday attack, which states that after sending $\sqrt{2^\ell}$ random input values\footnote{Note that we give all statements for a random oracle outputting directly $\ell\leq\log(n)$ bits, as we are interested in near-collisions. Such an oracle can be obtained from a random oracle with range $\bin^n$ by simply truncating the output to the first $\ell$ bits.}, at least one pair will collide 
with probability $\geq 1/2$. Taking possible parallel power of the adversary into account, the protocol allows $\adv$ to make at most $\alpha\cdot\sqrt[3]{2^\ell}$ queries for some constant $\alpha \geq 1$
(Assumption~\ref{assu:parallel}).
Since $\ell > 6 \log(\alpha)$ we have $\alpha\cdot\sqrt[3]{2^\ell} < \sqrt{2^\ell}$ and thus $\adv$'s success probability for finding a collision in each round is $< 1/2$ which vanishes when repeating the collision search $r$ times.

More concretely, the upper bound on the birthday probability for $q$ queries and a function with range size $N$ is given by $\frac{q(q-1)}{2N}$ (see e.g.~\cite{BelKilRog94}).
Thus, when considering an adversary making at most $q=\alpha\sqrt[3]{2^\ell}$ queries to a random oracle with range
$\bin^\ell$ we obtain:
\[ \Pr[\coll] ~\leq~ \frac{\alpha^2}{2\sqrt[3]{ 2^\ell}}~\leq~ \frac{\alpha^2}{2\sqrt[3]{n}}
\]
due to the choice of $\ell \leq \log{n}$. The repetition of such a constrained collision search does not increase the success probability of the adversary, since the verifier sends a fresh ``key'' $k_i$ in each round. Thus, the adversary cannot reuse already learned values from the random oracle, but has to start the collision search from scratch for each new key. That is, the probability of $\adv$ finding a collision is at most $\Pr[\coll]$ in each round.

Applying the Chernoff-bound yields the probability for finding at least $r/4$ collision in $r$ independent rounds:
\[ \Pr[\ccount > r/4] ~\leq~ \exp\left(-\frac{r\alpha^2}{2\sqrt[3]{n}} \cdot \left(\frac{\sqrt[3]{n}-2\alpha^2}{2\alpha^2}\right)^2 \cdot \frac{1}{4} \right) ~\leq~ \exp\left(-\frac{r\sqrt[3]{n}}{32\alpha^2}\right)
\]

Thus, for a constant $\alpha$, and setting $r = \text{poly}(n)$ the above term is negligible in $n$. However, then, the overall success probability of $\adv$ is negligible as well.
\end{proof}

When considering classical adversaries only, we can securely instantiate the random oracle in the $\ident^*$ scheme by a hash function $H$ that provides near-collision-resistance close to the birthday bound. Under this assumption, the security proof of our identification scheme carries over to the standard model, as well. (We omit a formal proof, as it follows the argumentation of Lemma~\ref{lem:ro.class} closely.) Note that it is a particular requirement of the SHA-3 competition~\cite{sha3}, that the hash function candidates achieve collision-resistance approximately up to the birthday bound and provide this property also for any fixed subset of the hash functions' output bits. Thus, all remaining SHA-3 candidates (or at least the winner of the competition) is supposed to be quasi-optimal near-collision-resistant.


\parag{Security against Quantum Adversaries.}
We now show that such a result is impossible in the quantum world, i.e., for any hash function $H$ there exists a quantum-adversary $\adv_Q$ that breaks the $\ident^*$ protocol (regardless of the security of the underlying identification scheme). This contrasts with the security that can still be achieved in the (classical) random oracle model:

\begin{lemma}
Let $\ident_Q=(\ikgen,\allowbreak\prov,\veri)$ be a secure quantum-immune identification scheme. Then for any efficient \emph{quantum} adversary $\adv_Q$ and $\ell > 6 \log(\alpha)$ the protocol $\ident^*$ is secure in the random oracle model.
\end{lemma}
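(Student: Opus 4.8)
The plan is to follow the proof of Lemma~\ref{lem:ro.class} almost verbatim; the only genuinely new point is that in the \emph{classical} random oracle model a quantum attacker still accesses $H$ only through classical queries, so its quantum computing power is of no help during the collision stage. As in the classical argument, since the secret key $\sk$ is never used in the collision stage, the transcript of that stage can be produced without knowing $\sk$, so an efficient quantum attacker $\adv_Q$ against $\ident^*$ yields an efficient quantum attacker against $\ident_Q$ whenever $\veri^*$ accepts because of the identification stage. We therefore obtain the decomposition
\[ \Pr[\,\adv_Q\text{ breaks }\ident^*\,]\;\le\;\Pr[\,\ccount>r/4\,]\;+\;\Pr[\,\adv_Q\text{ breaks }\ident_Q\,], \]
and the second summand is negligible by the assumed quantum-immunity of $\ident_Q$. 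It remains to bound $\Pr[\ccount>r/4]$.

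Fix a single round, played with a fresh key $k_i$, in which $\adv_Q$ must output $M\neq M'$ with $\heval(k_i,M)|_{\ell}=\heval(k_i,M')|_{\ell}$, i.e.\ a collision of a uniformly random function into $\bin^{\ell}$. In the random oracle model $\adv_Q$ may query $H$ only on classical inputs, and the answers it receives are uniform and independent; no post-processing of these classical answers, quantum or classical, can manufacture a colliding pair that is not already present among the queried values (or guessed on an unqueried point), so if $\adv_Q$ makes $q$ queries in the round its success probability is bounded by $\tfrac{q(q-1)}{2\cdot 2^{\ell}}$ up to a negligible additive correction for unqueried outputs. This is the ordinary birthday bound~\cite{BelKilRog94} and it is \emph{independent of $\adv_Q$'s computational power}. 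By the Unit-Time and Zero-Time assumptions the verifier's $\lceil\sqrt[3]{2^{\ell}}\,\rceil$ hash evaluations act as a clock, and by the bound $\alpha$ on parallel speed-up (Assumption~\ref{assu:parallel}) the adversary can make at most $q\le\alpha\sqrt[3]{2^{\ell}}$ queries in that time. Since $\ell>6\log\alpha$ gives $2^{\ell/3}>\alpha^{2}$, we get $q<\sqrt{2^{\ell}}$ and the per-round collision probability is at most $\tfrac{\alpha^{2}}{2\sqrt[3]{2^{\ell}}}<\tfrac14$. Because each round uses an independent fresh $k_i$ (so no queried value can be reused across rounds), the per-round events are independent with probability $<\tfrac14$, and a Chernoff bound over the $r=\poly(n)$ rounds makes $\Pr[\ccount>r/4]$ negligible, exactly as in the estimate that concludes the proof of Lemma~\ref{lem:ro.class}. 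The constraint $\ell\le\log n$ is only needed to keep $\ident^*$ polynomial-time and plays no role in this bound. Combining the two terms, $\adv_Q$ breaks $\ident^*$ with at most negligible probability.

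The main obstacle is making the collision-stage bound fully rigorous: one has to argue that a quantum algorithm making \emph{classical} queries to a random function does no better at collision search than a classical one, and that the per-round freshness of $k_i$ genuinely prevents any amortisation of work across rounds. Once this is settled, everything else --- the two-stage decomposition, the timing bookkeeping that converts the verifier's hash-evaluation counter into a query bound, and the Chernoff computation --- is a direct transcription of Lemma~\ref{lem:ro.class}. In particular, the clause of the Unit-Time assumption about the cost of a quantum hash evaluation is irrelevant here, since in the random oracle model $\adv_Q$ has no local description of $H$ to evaluate on quantum states; that clause only matters for the negative result, where $H$ is instantiated by a concrete hash function and Grover/BHT collision search becomes available.
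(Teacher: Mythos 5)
Your overall route is the paper's: the two-stage decomposition with the identification stage absorbed by the quantum-immunity of $\ident_Q$, plus the observation that an adversary with only \emph{classical} access to the random oracle cannot use its quantum power in the collision stage, so the query-counting birthday bound and the timing bookkeeping of Lemma~\ref{lem:ro.class} (at most $\alpha\sqrt[3]{2^\ell}$ classical queries per round, fresh key $k_i$ each round) apply unchanged. The paper's own proof of this lemma is exactly that observation together with a pointer back to Lemma~\ref{lem:ro.class}, asserted at the same level of informality at which you flag it as ``the main obstacle'', so conceptually there is no divergence there.

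There is, however, one quantitative step that fails as written. From $\ell>6\log\alpha$ you only get $2^{\ell/3}>\alpha^{2}$, hence a per-round collision probability at most $\alpha^{2}/(2\cdot2^{\ell/3})<1/2$ --- not $<1/4$ as you claim (take $\alpha=2$, $\ell=7$: the bound is about $0.4$). A per-round probability lying in $(1/4,1/2)$ is fatal for your Chernoff step: the expected number of collisions then exceeds $r/4$, so $\Pr[\ccount>r/4]$ tends to $1$ rather than being negligible. The paper's concrete computation avoids this precisely through the relation between $\ell$ and $\log n$ that you dismiss as irrelevant: it bounds the per-round probability by $\alpha^{2}/(2\sqrt[3]{n})$, i.e.\ a quantity that is $o(1)$ in the security parameter, and only then applies Chernoff against the threshold $r/4$. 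To repair your version you must either take $\ell$ large enough that $\alpha^{2}/(2\cdot2^{\ell/3})$ is bounded away from $1/4$ by a constant (a condition strictly stronger than $\ell>6\log\alpha$), or follow the paper and let $\ell$ grow essentially like $\log n$ so the per-round probability vanishes. Apart from this slip, the argument --- reduction of the collision stage to classical query counting, independence across rounds from the fresh keys, and the final concentration bound --- is a faithful transcription of the paper's proof.
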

\begin{proof}
By assuming that $\ident_Q$ is a quantum-immune identification scheme, an adversary $\adv_Q$ trying to convince a verifier $\veri^*$ in the $\ident^*$ protocol must provide at least $r/4$ many collisions in the first stage of the protocol. Thus, we have to show that a quantum adversary $\adv_Q$ can succeed in the collision-search with negligible probability only.

Note that in order to gain advantage of the quantum speed-up (e.g., by applying Grover's search algorithm) the random oracle $H$, resp. the indicator function based on $H$, has to be evaluated on quantum states, i.e., on superpositions of many input strings.  However, by granting $\adv_Q$ only classical access to the random oracle, it is not able to exploit its additional quantum power to find collisions on $H$. Thus, $\adv_Q$ has to stick to the classical collision-search on a random oracle, which we have proven to succeed in $r/4$ of $r$ rounds with negligible probability, due to the constraint of making at most $\alpha\cdot\sqrt[3]{2^\ell}$ oracle queries per round (see proof of Lemma~\ref{lem:ro.class} for details).
\end{proof}


We now show that our  $\ident^*$ scheme becomes totally insecure for any instantiation of the random oracle by a hash function $\hash$.
\newcommand{\hqeval}{\hash_Q.\textsf{Eval}}
\begin{lemma}\label{lem:hash.quant}
There exist an efficient \emph{quantum} adversary $\adv_Q$ such that for any hash function $\hash=(\hkgen,\heval)$  the protocol $\ident^*$ is \emph{insecure}.
\end{lemma}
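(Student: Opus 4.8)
The plan is to exhibit one polynomial-time quantum algorithm $\adv_Q$ which, playing the role of the prover against $\veri^*$ on common input $(\pk,\ell)$, makes $\veri^*$ output $b^*=1$ with overwhelming probability \emph{without ever using $\sk$} and independently of the underlying identification scheme. Since $\veri^*$ accepts whenever $\ccount>r/4$, it is enough for $\adv_Q$ to win strictly more than a quarter of the collision-stage rounds and then send arbitrary messages in the identification stage. (We do not even need the ``learning'' interaction with the honest prover $\prov^*$ that the security notion grants the adversary.)

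The adversary is as follows. In round $i$, on receiving the fresh key $k_i$, $\adv_Q$ runs the quantum collision finder of Brassard et al.~\cite{BraHoyTap98} on the concrete function $M\mapsto\heval(k_i,M)|_{\ell}$, which it implements as a reversible quantum circuit and evaluates in superposition over messages $M$ drawn from a domain that vastly exceeds $\bin^{\ell}$. As recalled earlier, with table size $\Theta(\sqrt[3]{2^{\ell}})$ this returns a near-collision $M_i\neq M_i'$ with $\heval(k_i,M_i)|_{\ell}=\heval(k_i,M_i')|_{\ell}$ after $O(\sqrt[3]{2^{\ell}})$ evaluations of $\heval$, with probability at least $1/2$, and this holds for \emph{every} hash function. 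The observation that makes the attack possible is exactly the one behind the separation: the idealized random oracle cannot be queried in superposition by the honest parties, but any concrete $\heval$ can be run in superposition by $\adv_Q$, so Grover-type speed-ups are available to it.

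Two things then have to be checked. First, that $\adv_Q$ finishes its search within the round's window: the round ends only once $\veri^*$ has completed its $\lceil\sqrt[3]{2^{\ell}}\rceil$ sequential hash evaluations, i.e.\ after wall-clock time $\lceil\sqrt[3]{2^{\ell}}\rceil\cdot T(\heval)$; but by the Unit-Time assumption a quantum evaluation of $\heval$ costs the same constant $T(\heval)$ as a classical one, and by the Zero-Time assumption the Grover diffusion operators and all table bookkeeping are free, so $\adv_Q$ spends at most $N_i\cdot T(\heval)$ in round $i$, where $N_i$ is the number of $\heval$-evaluations performed by the collision finder. One therefore tunes the table size so that $N_i\leq\lceil\sqrt[3]{2^{\ell}}\rceil$; since the success probability of a Grover-based search degrades only smoothly when the iteration count is scaled down by a constant, this still leaves a per-round success probability $\delta$ bounded away from $1/4$ (equivalently, one could let the verifier's timekeeping loop absorb the constant factor that the collision finder needs, which merely raises the requirement $\ell>6\log\alpha$ by a constant and leaves the classical analysis of Lemma~\ref{lem:ro.class} untouched). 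Second, that winning each round with probability $\delta>1/4$ is enough: the rounds are mutually independent because $\veri^*$ samples a fresh $k_i$ each time, so $\ccount$ is a sum of $r$ independent Bernoulli variables each of success probability at least $\delta$, and a Chernoff bound gives $\Pr[\ccount>r/4]\geq 1-e^{-\Omega(r)}$, which is overwhelming since $r=\poly(n)$.

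Putting the pieces together, $\adv_Q$ forces $b^*=1$ with probability $1-e^{-\Omega(r)}$ for \emph{every} instantiation $\hash$ of the random oracle, so $\ident^*$ is insecure; and $\adv_Q$ is efficient because $\ell\leq\log(n)$ gives $\sqrt[3]{2^{\ell}}\leq n^{1/3}$, so each run of the collision finder is a $\poly(n)$-size quantum circuit and the whole $r$-round execution costs $\poly(n)$. I expect the delicate point to be the first one above: reconciling the constant hidden in the running time of the collision finder with the exact budget $\lceil\sqrt[3]{2^{\ell}}\rceil$ of $\veri^*$, and checking that the per-round success probability stays strictly above $1/4$ for the protocol's value of $\ell$ --- this is precisely the bookkeeping that the Unit-Time and Zero-Time assumptions were introduced to license. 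A secondary point worth spelling out is that the collision-finder guarantee genuinely holds for a worst-case (adversarial) $\heval$: because the search domain vastly exceeds the range $\bin^{\ell}$, the truncated function has many near-collisions, so either the table already contains one or Grover locates an out-of-table preimage of a tabulated value.
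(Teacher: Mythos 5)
Your proposal is correct and follows essentially the same route as the paper's proof: implement $\heval(k_i,\cdot)|_{\ell}$ as a quantum circuit, run the Brassard--H{\o}yer--Tapp collision finder to win each round with probability at least $1/2$ within the verifier's $\lceil\sqrt[3]{2^{\ell}}\rceil$-evaluation time budget (licensed by the Unit-Time and Zero-Time assumptions), and conclude via independence of the rounds and a Chernoff bound that $\ccount>r/4$ with overwhelming probability. Your extra bookkeeping about the hidden constants in the collision finder's iteration count and about worst-case $\heval$ is a more careful rendering of steps the paper treats informally, not a different argument.
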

\begin{proof}
For the proof, we show that a quantum-adversary $\adv_Q$ can find collisions on $\hash$ in at least $r/4$ rounds with non-negligible probability. To this end, we first transform the classical hash function $\hash$ into a quantum-accessible function $\hash_{Q}$.  For the transformation, we use the fact that any classical computation can be done on a quantum computer as well~\cite{ChaNie00}. 
The ability to evaluate $\hash_Q$ on superpositions then allows to apply Grover's algorithm in a straightforward manner: for any key $k_i$ that is sent by the verifier $\veri^*$, the adversary invokes Grover's search on an indicator function testing whether $\hqeval(k_i,x)|_\ell =  \hqeval(k_i,x')|_\ell$ for distinct $x \neq x'$ holds. After $\sqrt[3]{2^\ell}$ evaluations of $\hash_{Q}$ the algorithm outputs a collision $M_i,M'_i$ with probability $>1/2$. As we assume that a quantum evaluation of $\hash_Q$ requires roughly the same time than an evaluation of the corresponding classical function $\hash$, and we do not charge $\adv_Q$ for any other computation, the collision search of $\adv_Q$ terminates before $\veri^*$ stops a round of the collision-finding stage. 

Hence, $\adv_Q$ provides a collision with probability $>1/2$ in each of the $r$ rounds. Using the Chernoff bound, we can now upper bound the probability that $\adv_Q$ finds \emph{less} than $r/4$ collision as:
\[ \Pr[\ccount < r/4] ~\leq~ \exp\left(-\frac{r}{2} \cdot \left(\frac{1}{2}\right)^2 \cdot \frac{1}{2} \right) ~\leq~ \exp\left(-\frac{r}{16}\right)
\]
which is roughly $ \Pr[\ccount < r/4] ~\leq~ 0.94^r $ and thus negligible as a function of $r$. That is, the adversary $\adv_Q$ can make $\veri^*$ accept the interaction with noticeable probability at least $1-\Pr[\ccount < r/4]$.
\end{proof}

As Grover's algorithm only requires (quantum-accessible) black-box access to the hash function, the approach described in the proof of Lemma~\ref{lem:hash.quant} directly applies to the quantum-accessible random oracle model, as well:

\begin{lemma}
The protocol $\ident^*$ is \emph{not} secure in the quantum-accessible random oracle model.
\end{lemma}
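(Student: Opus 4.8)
The plan is to observe that this lemma is an immediate consequence of the argument already used for Lemma~\ref{lem:hash.quant}, with the classical-to-quantum compilation step removed. In the quantum-accessible random oracle model the adversary $\adv_Q$ is handed, in each round $i$ of the collision stage, quantum (superposition) access to the function $H(k_i,\cdot)$ truncated to its leading $\ell$ bits. Hence there is nothing to compile: the oracle that Grover's search needs is available for free. First I would recall the Brassard--Hoyer--Tapp collision finder~\cite{BraHoyTap98}: fix a subset $K$ of the domain with $|K| = \left\lceil\sqrt[3]{2^\ell}\right\rceil$, query $H(k_i,\cdot)$ classically on the points of $K$ (this costs $|K|$ evaluations), and then run Grover's algorithm on the indicator function $f(M) = 1 \iff \exists\, M' \in K:\ H(k_i,M)|_\ell = H(k_i,M')|_\ell$. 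Since the image space has size $2^\ell$ and $|K| = \Theta(2^{\ell/3})$, this produces an $\ell$-near-collision $(M_i,M_i')$ with probability at least $1/2$ after $O(\sqrt[3]{2^\ell})$ evaluations of $H(k_i,\cdot)$ (in superposition).

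Next I would invoke the timing assumptions: by the Unit Time assumption a quantum evaluation of $H(k_i,\cdot)$ costs the same as a classical one, and by the Zero Time assumption none of the remaining bookkeeping (running Grover's iterations, maintaining $K$, sending messages) is charged any time. Therefore $\adv_Q$'s $O(\sqrt[3]{2^\ell})$ evaluations finish before the verifier $\veri^*$ completes its own $\left\lceil\sqrt[3]{2^\ell}\right\rceil$ hash evaluations and closes the round; so $\adv_Q$ can return $(M_i,M_i')$ in time and $\veri^*$ increments $\ccount$. Thus in each of the $r$ rounds $\adv_Q$ succeeds independently with probability $>1/2$.

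Then I would apply the Chernoff bound exactly as in Lemma~\ref{lem:hash.quant}:
\[
\Pr[\ccount < r/4] \;\leq\; \exp\!\left(-\frac{r}{2}\cdot\left(\frac{1}{2}\right)^2\cdot\frac{1}{2}\right) \;\leq\; \exp(-r/16),
\]
which is negligible in $r = \poly(n)$. Hence with overwhelming probability $\ccount > r/4$, and $\veri^*$ outputs $b^* = 1$ regardless of what $\adv_Q$ does in the identification stage (in particular, $\adv_Q$ need not have interacted with $\prov^*$ at all, so it breaks soundness of $\ident^*$ even when the underlying $\ident$ is quantum-immune). I do not expect a genuine obstacle here; the only point that needs a sentence of care is that the BHT collision finder requires merely black-box access to $H(k_i,\cdot)$ and this access is precisely what the quantum-accessible random oracle provides — so, unlike in Lemma~\ref{lem:hash.quant}, no reduction to a concrete instantiation is needed, and unlike in the classical random oracle model (Lemma~\ref{lem:ro.class}) the birthday bound no longer constrains the adversary because the $\ell$-bit collision search now costs only $\Theta(2^{\ell/3}) < 2^{\ell/2}$ queries.
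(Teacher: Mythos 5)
Your proposal is correct and follows essentially the same route as the paper: the paper simply notes that since the Brassard--Hoyer--Tapp/Grover collision finder needs only (quantum) black-box access to $H(k_i,\cdot)$, the attack from Lemma~\ref{lem:hash.quant} carries over verbatim to the quantum-accessible random oracle model, which is exactly the observation you make, including the same timing argument and the same Chernoff bound $\Pr[\ccount < r/4] \leq \exp(-r/16)$.
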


\section{Proof of Lemma~\ref{oraclerand}}

Before we prove Lemma~\ref{oraclerand}, we need to prove the following two technical lemmas:
\begin{lemma} Let $|\phi\rangle$ and $|\phi'\rangle$ be superpositions with $|\phi-\phi'|\leq\gamma$.  Let $P$ be some property on strings.  Suppose measuring $|\phi\rangle$ gives a string that satisfies $P$ with probability $\epsilon$.  Then measuring $|\phi'\rangle$ gives a string that satisfies $P$ with  probability $\epsilon'$ where \eq{\sqrt{\epsilon}-\gamma\leq \sqrt{\epsilon'}\leq  \sqrt{\epsilon}+\gamma}\end{lemma}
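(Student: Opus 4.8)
The plan is to read both measurement probabilities as squared norms of one and the same orthogonal projection applied to the two states, and then to exploit that orthogonal projections do not increase the Euclidean norm, together with the reverse triangle inequality.

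Concretely, first I would fix the orthogonal projector $\Pi$ onto the subspace $\myspan\menge{\ket{x}}{P(x)}$ spanned by the computational-basis vectors indexed by strings satisfying $P$. Writing $\ket{\phi}=\sum_x \alpha_x\ket{x}$ and $\ket{\phi'}=\sum_x\beta_x\ket{x}$, a measurement of $\ket{\phi}$ in the computational basis returns a string satisfying $P$ with probability $\sum_{x:P(x)}\magn{\alpha_x}^2$, and since $\{\ket{x}\}$ is orthonormal and $\Pi$ fixes exactly the basis vectors with $P(x)$ while annihilating the rest, this sum equals $\magn{\Pi\ket{\phi}}^2$. Hence $\magn{\Pi\ket{\phi}}=\sqrt{\epsilon}$, and identically $\magn{\Pi\ket{\phi'}}=\sqrt{\epsilon'}$.

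Next I would apply the reverse triangle inequality in the Hilbert space, followed by the fact that $\Pi$ has operator norm at most $1$:
\eq{\magn{\sqrt{\epsilon}-\sqrt{\epsilon'}}=\bigl|\,\magn{\Pi\ket{\phi}}-\magn{\Pi\ket{\phi'}}\,\bigr|\le\magn{\Pi\ket{\phi}-\Pi\ket{\phi'}}=\magn{\Pi(\ket{\phi}-\ket{\phi'})}\le\magn{\ket{\phi}-\ket{\phi'}}\le\gamma .}
Chaining these inequalities yields $\magn{\sqrt{\epsilon}-\sqrt{\epsilon'}}\le\gamma$, which is exactly the two-sided bound $\sqrt{\epsilon}-\gamma\le\sqrt{\epsilon'}\le\sqrt{\epsilon}+\gamma$ in the statement.

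I do not expect a genuine obstacle here: the argument is a short computation with no rewinding or oracle manipulation involved. The only step that merits an explicit sentence is the identity $\magn{\Pi\ket{\phi}}^2=\sum_{x:P(x)}\magn{\alpha_x}^2$, i.e.\ the basic fact (recalled among the preliminaries on partial measurements) that the probability of a coarse-grained measurement outcome is the sum of the squared amplitudes over the consistent basis states; everything else is the standard observation that an orthogonal projection is a contraction.
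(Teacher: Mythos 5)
Your proof is correct and follows essentially the same route as the paper's: both identify $\epsilon$ and $\epsilon'$ as the squared norms of the projections of $\ket{\phi}$ and $\ket{\phi'}$ onto the subspace spanned by the basis states satisfying $P$, and then use that orthogonal projection does not increase distance together with the triangle inequality. Your use of the reverse triangle inequality merely packages the paper's two symmetric applications of the triangle inequality into one step.
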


\begin{proof}
We will prove this lemma geometrically.  We can think of a state $\ket{\phi}$ as a vector $\vec{\phi}$ in $\mathbb{C}^n$.  Then the basis elements $\ket{x}$ as elements of the standard basis for $\mathbb{C}^n$.  We are given that $|\phi-\phi'|\leq\gamma$, meaning that $\vec{\phi}$ and $\vec{\phi'}$ have Euclidean distance of at most $\gamma$.

For a bit string $x$, the probability that sampling $\ket{\phi}$ results in $x$ is $\magn{\Angle{\vec{x},\vec{\phi}}}^2$.  Let $S_P$ be the set of basis elements $\vec{x}$ such that $x$ satisfied $P$.  The probability that sampling $\ket{\phi}$ results in a string satisfying $P$ is then given by \eq{\displaystyle\sum_{\vec{x}\in S_P} \magn{\Angle{\vec{x},\vec{\phi}}}^2}

This also is the square of the length of the projection of $\vec{\phi}$ onto the subspace spanned by $S_P$.  So, let $\vec{\phi_P}$ and $\vec{\phi_P'}$ be the projections of $\vec{\phi}$ and $\vec{\phi'}$ onto the space spanned by $S_P$.  The probability that sampling $\ket{\phi}$ (resp. $\ket{\phi'}$) results in a string satisfying $P$ is simply $\epsilon=\magn{\vec{\phi_P}}^2$ (resp. $\epsilon'=\magn{\vec{\phi_P'}}^2$).  Projections only decrease distance, so by the triangle inequality,

\eq{\sqrt{\epsilon'}=\magn{\vec{\phi_P'}}\leq \magn{\vec{\phi_P}}+\magn{\vec{\phi_P}-\vec{\phi_P'}} \leq \magn{\vec{\phi_P}}+\magn{\vec{\phi}-\vec{\phi'}} \leq \sqrt{\epsilon}+\gamma}

Reversing the roles of $\ket{\phi}$ and $\ket{\phi'}$ gives us the other inequality.\end{proof}

\begin{lemma}Let $A$ be an quantum algorithm that makes at most $q$ queries to quantum random oracle $O$.  Fix a $y$ in the co-domain of $O$.  The expected value of the total query probability of all $x$ such that $O(x)=y$ is at most $\frac{2 q^3}{2^m}$.  \end{lemma}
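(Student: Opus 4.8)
The plan is to prove the stronger per--query bound $\E_O\!\big[\sum_{x:\,O(x)=y} q_x(|\phi_t\rangle)\big]\le 2t^2/2^m$ for every query index $t$, where $|\phi_t\rangle$ denotes the state of $A$ just before its $t$-th query; summing over $t=1,\dots,q$ and using $\sum_{t=1}^q t^2\le q^3$ (equivalently $(4q+1)(q-1)\ge 0$) then yields the claimed $2q^3/2^m$. Write $Z=O^{-1}(y)$ and $P_t(O)=\sum_{x\in Z}q_x(|\phi_t^{O}\rangle)$. The core of the argument is a BBBV-style hybrid comparing the run of $A$ on $O$ with its run on a ``cleaned'' oracle $\bar O$ that never outputs $y$ but whose description is independent of which set $Z$ actually is.

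First I would set up the coupling. Generate $O$ as follows: for each $x\in\bin^n$ draw an independent bit $b_x$ with $\Pr[b_x=1]=2^{-m}$ and an independent value $\bar O(x)$ uniform in $\bin^m\setminus\{y\}$, and put $O(x)=y$ if $b_x=1$ and $O(x)=\bar O(x)$ otherwise. Then $O$ is a uniformly random oracle, $\bar O$ is a uniformly random oracle into $\bin^m\setminus\{y\}$, the family $(b_x)_x$ is independent of $\bar O$, and $Z=\{x:b_x=1\}$. Since the state $|\phi_j^{\bar O}\rangle$ obtained by running $A$ on $\bar O$ depends only on $\bar O$, it is independent of $Z$; hence, writing $\tilde P_j=\sum_{x\in Z}q_x(|\phi_j^{\bar O}\rangle)$, one gets $\E[\tilde P_j]=\sum_x\Pr[x\in Z]\,\E_{\bar O}[q_x(|\phi_j^{\bar O}\rangle)] = 2^{-m}\,\E_{\bar O}\big[\sum_x q_x(|\phi_j^{\bar O}\rangle)\big]=2^{-m}$.

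Next comes the hybrid step, carried out for a fixed instance (fixed $Z$ and $\bar O$). I would consider the $t$ hybrid runs in which queries $1,\dots,j$ are answered by $\bar O$ and queries $j{+}1,\dots,t{-}1$ by $O$, for $j=0,\dots,t-1$; these interpolate between $|\phi_t^{O}\rangle$ and $|\phi_t^{\bar O}\rangle$. Because $O$ and $\bar O$ differ exactly on $Z$, and for $x\in Z$ the two answer strings $y$ and $\bar O(x)$ are distinct, the state right before query $j$ is the same $|\phi_j^{\bar O}\rangle$ in hybrids $j{-}1$ and $j$, and swapping that query changes the state by a vector of norm $\sqrt{2\,\tilde P_j}$; subsequent operations are unitary and identical, so the triangle inequality gives $\big\||\phi_t^{O}\rangle-|\phi_t^{\bar O}\rangle\big\|\le\sum_{j=1}^{t-1}\sqrt{2\,\tilde P_j}$. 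Applying the first technical lemma above (the one stating that $\||\phi\rangle-|\phi'\rangle\|\le\gamma$ forces $|\sqrt{\epsilon'}-\sqrt{\epsilon}|\le\gamma$ for the corresponding measurement probabilities) with the property ``the query register lies in $Z$'', whose probability under a state $|\phi\rangle$ is $\sum_{x\in Z}q_x(|\phi\rangle)$, yields $\sqrt{P_t(O)}\le\sqrt{\tilde P_t}+\sum_{j=1}^{t-1}\sqrt{2\,\tilde P_j}\le\sqrt2\sum_{j=1}^{t}\sqrt{\tilde P_j}$. Squaring and applying Cauchy--Schwarz gives $P_t(O)\le 2t\sum_{j=1}^{t}\tilde P_j$; taking expectations and inserting $\E[\tilde P_j]=2^{-m}$ gives $\E_O[P_t(O)]\le 2t\cdot t\cdot 2^{-m}=2t^2/2^m$, completing the plan.

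The main obstacle will be designing the coupling correctly, and two points are delicate. First, $\bar O$ must be genuinely independent of $Z$: resampling $O$ on $O^{-1}(y)$ to be uniform over all of $\bin^m$ reintroduces a correlation (observing $\bar O(x)=y$ would reveal $x\in Z$), whereas resampling over $\bin^m\setminus\{y\}$ removes it and makes $\E[\tilde P_j]$ exactly $2^{-m}$. Second, it is essential that the per-step hybrid distance be controlled by the query weight on $Z$ \emph{along the $\bar O$-run} (which is small in expectation, independently of the quantity being bounded), not along the $O$-run; controlling it along the $O$-run --- which is essentially what a direct application of Lemma~\ref{oraclechange} to the set $[1,t{-}1]\times Z$ would give --- produces a self-referential recursion that diverges. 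Once these two choices are in place, the remainder is the routine BBBV telescoping plus Cauchy--Schwarz.
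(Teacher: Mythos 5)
Your proposal is essentially the paper's own proof: the same resampling coupling (a clean oracle $\bar O$ uniform over $\bin^m\setminus\{y\}$ with each point independently redirected to $y$ with probability $2^{-m}$, so that the marked set is independent of the clean run and has expected query weight exactly $2^{-m}$ per query), the same transfer from the clean run to the real run via the measurement-versus-distance lemma, and the same per-query bound of order $t^2/2^m$ summed over $t$; the only difference is that the paper simply invokes Lemma~\ref{oraclechange} with the clean oracle playing the role of the \emph{original} oracle --- so the query weights are automatically taken along the clean run and the self-referential recursion you worry about never arises --- and uses $2\sqrt{\delta\sigma}\le\delta+\sigma$ where you use an explicit hybrid plus Cauchy--Schwarz. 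One small slip in your hybrid step: the one-swap distance is $2\sqrt{\tilde P_j}$, not $\sqrt{2\tilde P_j}$, since with the XOR oracle the two answer branches can interfere destructively (e.g.\ amplitudes of opposite sign on $w$ and $w\oplus y\oplus\bar O(x)$ give distance exactly $2\sqrt{\tilde P_j}$). With the corrected constant your chain yields $\tfrac{2}{3}q(q+1)(2q+1)2^{-m}$, which still gives the stated $2q^3/2^m$ for $q\ge 4$ (and tiny $q$ is checked directly), so the argument is sound up to this constant-level adjustment.
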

\begin{proof}
	Suppose we have an oracle $O'$ for which the output on every input is distributed identically and independently, with a uniform distribution over $\{0,1\}^m\setminus\{y\}$.  We now modify the oracle as follows: for each input $x$, with probability $2^{-m}$, replace the output with $y$.  This oracle is now a random oracle, so its distribution is identical to $O$.
	
	Let $\sigma_i$ be the total query magnitude over the first $i-1$ queries of $x$ such that we change $O'(x)$.  Let $\delta_i$ be the query magnitude of those $x$ in the $i$th query.  Let $\gamma_i$ be the Euclidean distance between the state of $A$ at the $i$th query when using oracle $O'$ and the modified oracle $O$.  By (\ref{oraclechange}), $\gamma_i\leq\sqrt{(i-1)\sigma_i}$.  Let $\rho_i$ be the query magnitude of $x$ such that $O(x)=y$ (which is the same as the query probability of $x$ such that we changed $O'(x)$).  By the above lemma,
	\begin{eqnarray*}
	\rho_i&\leq&(\sqrt{\delta_i}+\gamma_i)^2\\
	&=&\delta_i+\gamma_i^2+2\sqrt{\delta}\gamma_i\\
	&\leq&\delta_i+(i-1)\sigma_i+2\sqrt{(i-1)\delta_i\sigma_i}\\
	&\leq&\delta_i+(i-1)\sigma_i+2\sqrt{i-1}(\delta_i+\sigma_i)
	\end{eqnarray*}
	Now, observe that since we are deciding whether to change the output of a query point at random and independently, the expected query probability of the points that we changed in each query is exactly $2^{-m}$.  Thus, $\E\brackets{\delta_i}=2^{-m}$ and $\E\brackets{\sigma_i}=(i-1)2^{-m}$.  Thus,
	\eq{\E\brackets{\rho_i}\leq 2^{-m} (1+(i-1)^2+2\sqrt{i-1}(1+(i-1)))\leq 2^{-m}2 i^2}.
	This result is not surprising, as it implies that any quantum algorithm which is to output a preimage of $y$ with overwhelming probability must make $O(\sqrt{2^{-m}})$ quantum oracle queries, which is well known lower bound for the unstructured search problem (see Bennett et al.~\cite{Bennett1997} for more).
	Summing over all $q$ queries gives the expected query probability of $x$ such that $O(x)=y$ to be at most $2\times2^{-m} q^3$.
	\end{proof}
	
	\begin{proofof}{Lemma~\ref{oraclerand}}
		We are given a random oracle $O$ and a distribution $D$ that is $\epsilon$-close to uniform.  Observe that:
		\begin{eqnarray*}
			\epsilon&=&\sum_y \magn{\Pr[y|D]-2^{-m}}\\
			&=&\sum_{y:\Pr[y|D]\geq 2^{-m}}\parens{\Pr[y|D]-2^{-m}}+\sum_{y:\Pr[y|D]<2^{-m}}\parens{2^{-m}-\Pr[y|D]}\\
			0&=&\sum_{y:\Pr[y|D]\geq 2^{-m}}\parens{\Pr[y|D]-2^{-m}}-\sum_{y:\Pr[y|D]<2^{-m}}\parens{2^{-m}-\Pr[y|D]}
		\end{eqnarray*}
		Thus, \eq{\frac{\epsilon}{2}=\sum_{y:\Pr[y|D]\geq 2^{-m}}\parens{\Pr[y|D]-2^{-m}}=\sum_{y:\Pr[y|D]<2^{-m}}\parens{2^{-m}-\Pr[y|D]}}
		
		Define a distribution $D'$ as follows:
		\begin{MyItemize}
			\item If $\Pr[y|D]<2^{-m}$, $\Pr[y|D']=0$.
			\item If $\Pr[y|D]\geq 2^{-m}$, $\Pr[y|D']=(\Pr[y|D]-2^{-m})2/\epsilon$
		\end{MyItemize}
		All the probabilities are clearly non-negative.  For this to be a probability distribution, the probabilities need to um to 1:
		\eq{\sum_y \Pr[y|D']=\sum_{y:\Pr[y|D]\geq 2^{-m}}(\Pr[y|D]-2^{-m})\frac{2}{\epsilon}=\frac{\epsilon}{2}\frac{2}{\epsilon}=1}
		
Now, we can create another distribution $D''$ as follows: first, generate $y$ uniformly at random.  Then,
		\begin{MyItemize}
			\item If $\Pr[y|D]\geq 2^{-m}$, output $y$.
			\item If $\Pr[y|D]<2^{-m}$, then with probability $2^m \Pr[y|D]$, output $y$.  Otherwise, pick a $y'$ from $D'$ and output $y'$.
		\end{MyItemize}
		If $\Pr[y|D]<2^{-m}$, $\Pr[y|D'']=2^{-m}\times(2^m \Pr[y|D])=\Pr[y|D]$.  Otherwise,
		\begin{eqnarray}
			\Pr[y|D'']&=&2^{-m}+\sum_{y':\Pr[y'|D]<2^{-m}}2^{-m}(1-2^m\Pr[y'|D])\Pr[y|D']\\
			 &=&2^{-m}+\sum_{y':\Pr[y'|D]<2^{-m}}(2^{-m}-\Pr[y'|D])(\Pr[y|D]-2^{-m})\frac{2}{\epsilon}\\
			&=&2^{-m}+\frac{\epsilon}{2}(\Pr[y|D]-2^{-m})\frac{2}{\epsilon}=\Pr[y|D]
		\end{eqnarray}
		Thus $D''=D$.  This demonstrates that we can construct the oracle $O'$ whose elements are distributed according to $D$ as follows: Start with the random oracle $O$, and for each input $x$, if $\Pr[O(x)|D]<2^{-m}$, then with probability $1-2^m \Pr[O(x)|D]$, replace the output with a $y'$ drawn from $D_X'$.  Otherwise leave the oracle unchanged at that point.
		
		Now we bound the expected query magnitude of $x$ such that the oracle changed.  By the above lemma, the expected total query probability of any $x$ such that $O(x)=y$ is $2 q^3 2^{-m}$.  Let $\sigma$ be the query magnitude of points $x$ at which we changed the oracle:		
		
		\begin{eqnarray*}
		\E[\sigma]&=&\E\brackets{\sum_{x:\Pr[O(x)|D]<2^{-m}}(1-2^{-m}\Pr[O(x)|D])\times(\text{total query magnitude of }x)}\\
		&=&\sum_{y:\Pr[y|D]<2^{-m}}(1-2^m\Pr[y|D])\E[\text{total query magnitude of $x$ such that $O(x)=y$}]\\
		&\leq&\sum_{y:\Pr[y|D]<2^{-m}}(1-2^m\Pr[y|D]) 2 q^3 2^{-m}\\
		&=&2 q^3\sum_{y:\Pr[y|D]<2^{-m}}(2^{-m}-\Pr[y|D])=\frac{2 q^3 \epsilon}{2}=q^3\epsilon
		\end{eqnarray*}
		Thus the expected Euclidean distance is \eq{\E[\sqrt{q\sigma}]\leq\sqrt{q\E[\sigma]}\leq\sqrt{q\times q^3 \epsilon}=q^2\sqrt{\epsilon}}
		This means the expected variational distance of the output distributions is at most $4q^2\sqrt{\epsilon}$.  Thus, the distribution of outputs when the oracle values are distributed according to $D$ is at most $4q^2\sqrt{\epsilon}$ away from the distribution of outputs when the oracle is truly random.
\end{proofof}
\fi
\end{document}